\numberwithin{equation}{section}
\DeclareFontFamily{OT1}{pzc}{}
\DeclareFontShape{OT1}{pzc}{m}{it}{<-> s * [1.10] pzcmi7t}{}
\DeclareMathAlphabet{\mathpzc}{OT1}{pzc}{m}{it}
\newcommand{\Or}{\mathcal{O}}
\newcommand{\Ai}{\mathrm{Ai}}
\newcommand{\He}{\mathrm{H}}
\newcommand{\const}{\mathrm{const}}
\newcommand{\Pb}{\mathbbm{P}}
\newcommand{\E}{\mathbbm{E}}
\newcommand{\Id}{\mathbbm{1}}
\newcommand{\Tr}{\mathrm{Tr}}
\newcommand{\e}{\varepsilon}
\newcommand{\I}{{\rm i}}
\newcommand{\D}{\mathrm{d}}
\newcommand{\C}{\mathbb{C}}
\newcommand{\R}{\mathbb{R}}
\newcommand{\N}{\mathbb{N}}
\newcommand{\Z}{\mathbb{Z}}
\renewcommand{\Re}{\mathrm{Re}}
\DeclareMathOperator\supp{supp}
\newtheorem{prop}{Proposition}[section]
\newtheorem{thm}[prop]{Theorem}
\newtheorem{lem}[prop]{Lemma}
\newtheorem{defin}[prop]{Definition}
\newtheorem{cor}[prop]{Corollary}
\newtheorem{cla}[prop]{Claim}
\newtheorem{rem}[prop]{Remark}
\newenvironment{remark}{\begin{rem}\normalfont}{\end{rem}}
\title{Brownian Motions with One-Sided Collisions: The Stationary Case}
\date{February 5th, 2015}
\author{Patrik L.\ Ferrari\thanks{Institute for Applied Mathematics, Bonn University, Endenicher Allee 60, 53115 Bonn, Germany. E-mail: {\tt ferrari@uni-bonn.de}} \and
Herbert Spohn\thanks{Zentrum Mathematik, TU M\"unchen, Boltzmannstrasse 3, D-85747 Garching, Germany. E-mail: {\tt spohn@ma.tum.de}} \and
Thomas Weiss\thanks{Zentrum Mathematik, TU M\"unchen, Boltzmannstrasse 3, D-85747 Garching, Germany. E-mail: {\tt tweiss@ma.tum.de}}}
\begin{document}

\maketitle
\sloppy
\begin{abstract}
We consider an infinite system of Brownian motions which interact through a given Brownian motion being reflected from its left neighbor. Earlier we studied this system for deterministic periodic initial configurations. In this contribution we consider initial configurations distributed according to a Poisson point process with constant intensity, which makes the process space-time stationary. We prove convergence to the Airy process for stationary the case. As a byproduct we obtain a novel representation of the finite-dimensional distributions of this process. Our method differs from the one used for the TASEP and the KPZ equation by removing the initial step only after the limit $t\to\infty$. This leads to a new universal cross-over process. 
\end{abstract}
\newpage

\section{Introduction}\label{sec1}
\setcounter{equation}{0}
We will study an infinite system of interacting Brownian motions with \mbox{$x_n(t) \in \mathbb{R}$}, $n\in\Z$, denoting the position of the $n$-th Brownian particle on the real line at time $t$. Initially the positions are ordered as $x_{n}(0) \leq x_{n+1}(0)$, with the convention that $x_0(0)\leq  0 < x_1(0)$.
As indicated in the title, particle $n+1$ interacts through a steep, narrowly supported potential with its left neighbor,
$n$, only. In the limit of zero support, the singular limit studied in our contribution,  this interaction amounts to Brownian motion $n+1$ being reflected from Brownian motion $n$.
 A mathematical definition will be given below. Under this dynamics the order is preserved,
\begin{equation}\label{1.1}
x_{n}(t) \leq x_{n+1}(t)
\end{equation}
for all times $t \geq 0$. In previous work we investigate the case  of initial conditions with equal spacing, $x_n(0) = n$  \cite{FSW13}. Another natural initial condition is such  to have the process space-time stationary, which is accomplished by assuming that $\{x_{n}(0), n \in \mathbb{Z}\}$ is a Poisson point process with uniform intensity which, without loss of generality, can be taken as $1$. Then $\{x_{n}(t), n \in \mathbb{Z}\}$ is again an intensity 1 Poisson point process.

Our interest are the fluctuations of $x_n(t)$  for large $t$ and $n$. To understand their properties one first has to find out
how an  initially small perturbation close to the origin propagates in time. This path is known as characteristic.
In our model, because of the one-sided collisions, the characteristic turns out to be a straight line with velocity 1.
If $n = \lfloor \vartheta t\rfloor$,  $\lfloor \cdot\rfloor$
denoting integer part, then for $\vartheta \neq 1$ only the randomness of the initial conditions plays a role and the fluctuations of $x_n(t)$ will be Gaussian asymptotically on the $t^{1/2}$ scale.
However close to the characteristic, i.e., $n = \lfloor  t + rt^{2/3}\rfloor$ with $r = \mathcal{O}(1)$, one observes non-Gaussian fluctuations in the $t^{1/3}$ scale, the properties of which will be analysed in great detail in this contribution. With our methods
we can handle the stochastic process in $r$ at fixed $t$. Two-time properties along the characteristic are known to be difficult.
For example, a long-standing problem is to obtain the joint distribution of $x_{\lfloor t\rfloor}(t),x_{\lfloor 2t\rfloor}(2t)$
for large $t$. At the time of writing Johansson reports on asymptotic results for the model studied in this paper \cite{Jo15}.

Our results are closely linked to the one-dimensional Kardar-Parisi-Zhang (KPZ) equation \cite{KPZ86} with stationary initial data. KPZ is a stochastic PDE for a height function $h(x,t) \in \mathbb{R}$ and reads
\begin{equation}\label{1.3}
\partial_t h = \tfrac{1}{2}(\partial_x h)^2 + \tfrac{1}{2}\partial_x^2h  + W
\end{equation}
with $W$ space-time white noise. As written the equation is only formal, but a precise mathematical meaning has been given \cite{ACQ10,Hai11}. As random initial data $h(x,0)$ we choose the statistics of two-sided Brownian  motion with constant drift $b$.
The dynamics is stationary in the sense that $x \mapsto h(x,t) - h(0,t)$ is again two-sided Brownian motion with drift $b$ \cite{FQ14}.
Very recently, Borodin \textit{et al.} \cite{BCFV14} succeeded in writing down reasonably concise formulas for the distribution of
$h(x,t)$, confirming the prior replica computation \cite{IS13}. Through an intricate asymptotic analysis they establish (Theorem 2.17 of~\cite{BCFV14}) that
in distribution,  for fixed $r$,
\begin{equation}\label{1.4}
\lim_{t \to \infty}t^{-1/3}\big(h(-2bt + 2r t^{2/3},2t) + (\tfrac{1}{12} + b^2)t - 2brt^{2/3}\big) = \mathcal{A}_{\rm stat}(r)\,,
\end{equation}
where $\mathcal{A}_{\rm stat}$ denotes
the Airy process corresponding to stationary initial data.
In spirit one should think of $x_n(t)$ as $h(x,t)$ with $x$ being a continuum version of the discrete particle label
$n$.
More precisely, as one of our results we will establish in Theorem~\ref{thmAsymp0} that
\begin{equation}\label{1.5}
\lim_{t \to \infty}t^{-1/3}\big(x_{\lfloor  t + 2rt^{2/3}\rfloor}(t) -2t -2rt^{2/3}\big) = \mathcal{A}_{\rm stat}(r),
\end{equation}
which is the immediate analogue of \eqref{1.4}. In fact, convergence is proved in the sense of finite-dimensional distribution, not only for the one-point distribution.

Similar results have been obtained earlier for the stationary PNG model \cite{PS02b} and for the stationary TASEP \cite{FS05a}. For the latter, the full stochastic process in $r$  has been worked out \cite{BFP09}. The expression we obtain for the joint distributions of $\mathcal{A}_{\rm stat}$, see Definition~\ref{DefAstat}, is new and differs from the one in \cite{BFP09}.

For several reasons we believe that it is of interest to add a third model to the list of KPZ type models
with stationary initial conditions. Obviously, the universality hypothesis is further strengthened. More importantly
our model provides a bridge to diffusion processes with one-sided interaction as discussed in \cite{SS14}.
Besides we also have to develop a method different from the previous ones. As in the case of PNG and TASEP, one cannot
study the stationary initial conditions directly. One has to start from a step, in our case meaning that to the right of $0$
the Poisson point process has density $1$, while to the left it has density $\rho$, $\rho<1$. Surprisingly,
as for PNG and TASEP, by a Burke type theorem the left-half system can be replaced by a boundary condition for $x_0(t)$ and, in fact, only the right-half system with labels $\{n \geq 0\}$ has to be considered. For PNG and TASEP the limit $\rho \to 1$
has been accomplished for fixed $t$, while here we first take the limit $t\to\infty$ at step size $1 -\rho= t^{-1/3}\delta$. This leads us to a novel transition process, see Theorem~\ref{thmAsymp}. The stationary case, $\delta = 0$, is then reached through a careful analytic continuation.\bigskip\\

\noindent \textbf{Acknowledgments}.
The work of P.L.~Ferrari is supported by the German Research Foundation via the SFB 1060--B04 project.
The work of H.~Spohn is supported by the Fondation Science Math\'{e}matiques de Paris.
The work of T.~Weiss is supported by the German Research Foundation  project \mbox{SP181/29-1}.

\section{Main results}\label{SectResults}
To state our result we introduce the rescaled process
\begin{equation}\label{eqScaledProcessOriginal}
	r\mapsto X_t(r) = t^{-1/3}\big(x_{\lfloor t+2rt^{2/3}\rfloor}(t)-2t-2rt^{2/3} \big)\,
\end{equation}
and define the limit process $r\mapsto \mathcal{A}_{\rm stat}(r)$.
\begin{defin}[Airy$_{\rm stat}$ process]\label{DefAstat}
Let $P_s$ be the projection operator on $[s,\infty)$ and $\bar{P}_s=\Id-P_s$ the one on $(-\infty,s)$. Set
\begin{equation}\label{DefV}
	V_{r_1,r_2}(s_1,s_2)=\frac{e^{-\frac{(s_2-s_1)^2}{4(r_2-r_1)}}}{\sqrt{4\pi (r_2-r_1)}},
\end{equation}
and define
\begin{equation}
	\mathcal{P}=\Id-\bar{P}_{s_1}V_{r_1,r_2}\bar{P}_{s_2}\cdots V_{r_{m-1},r_m}\bar{P}_{s_m}V_{r_m,r_1},
\end{equation}
as well as an operator $K$ with integral kernel
\begin{equation}
	K(s_1,s_2)=e^{r_1(s_2-s_1)}\int_0^\infty\D x\, \Ai(r_1^2+s_1+x)\Ai(r_1^2+s_2+x).
\end{equation}
Further, define the functions
\begin{equation}\begin{aligned}
	\mathcal{R}&=s_1+e^{\frac{2}{3}r_1^3}\int_{s_1}^\infty \D x\int_x^\infty \D y\, \Ai(r_1^2+y)e^{r_1y},\\
	f^*(s)&=-e^{-\frac{2}{3}r_1^3}\int_s^\infty \D x\, \Ai(r_1^2+x)e^{-r_1x},\\
	g(s)&=1-e^{\frac{2}{3}r_1^3}\int_s^\infty \D x\, \Ai(r_1^2+x)e^{r_1x}.
\end{aligned}\end{equation}
With these definitions, set
\begin{equation} G_m(\vec{r},\vec{s})=\mathcal{R}-\left\langle(\Id-\mathcal{P}K)^{-1}\left(\mathcal{P}f^*+\mathcal{P}KP_{s_1}\mathbf{1}+(\mathcal{P}-P_{s_1})\mathbf{1}\right),g\right\rangle,
\end{equation}
where $\langle\cdot,\cdot\rangle$ denotes the inner product on $L^2(\R)$.
Then, the \emph{Airy$_{\rm{stat}}$ process}, $\mathcal{A}_{\rm stat}$, is the process with $m$-point joint distributions at $r_1<r_2<\dots<r_m$ given by
\begin{equation}\label{eqAiryDef}
	\Pb\bigg(\bigcap_{k=1}^m\{\mathcal{A}_{\rm stat}(r_k)\leq s_k\}\bigg) = \sum_{i=1}^m\frac{\D}{\D s_i}\left(G_m(\vec{r},\vec{s})\det\left(\Id-\mathcal{P}K\right)_{L^2(\R)}\right).
\end{equation}
\end{defin}
We can now state our main result.
\begin{thm}\label{thmAsymp0}
In the sense of finite-dimensional distributions,
\begin{equation}\label{eqX0limit}
	\lim_{t\to\infty}X_t(r)\stackrel{d}{=}\mathcal{A}_{\rm stat}(r).
\end{equation}
\end{thm}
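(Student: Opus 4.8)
The plan is to deduce Theorem~\ref{thmAsymp0} from the transition--process limit, Theorem~\ref{thmAsymp}, never analyzing the genuinely stationary system ($\rho=1$) asymptotically. The starting point is an exact identity, valid at finite $t$, for the joint distribution $\Pb\big(\bigcap_{k}\{x_{n_k}(t)\le a_k\}\big)$ of the step--stationary system: Poisson intensity $1$ on $(0,\infty)$ and intensity $\rho\le 1$ on $(-\infty,0)$. The Burke--type theorem trades the intensity--$\rho$ left part for a boundary condition on $x_0(t)$, so that only the half-line labels $n\ge 0$ enter; the resulting half-space model still carries the determinantal structure exploited in the periodic case \cite{FSW13}, now decorated by a rank-one boundary contribution that records $\rho$ through a pole produced by the geometric resummation of the left half-line. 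The outcome is a representation of the joint distribution function at finite $t$ of exactly the shape in Definition~\ref{DefAstat}, namely $\sum_i\frac{\D}{\D s_i}\big(G_m^{(t,\rho)}\det(\Id-\mathcal{P}K^{(t,\rho)})\big)$, where $G_m^{(t,\rho)}$ packages the finite-$t$ analogues of $\mathcal{R},f^*,g$. At finite $t$ all the operators are trace class, and the formula is continuous in $\rho$ up to and including $\rho=1$, where it reduces to the joint distribution of the fully stationary system.

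Next I would carry out the steepest--descent asymptotics under the KPZ scaling $n_k=\lfloor t+2r_kt^{2/3}\rfloor$, $a_k=2t+2r_kt^{2/3}+s_kt^{1/3}$ together with $1-\rho=t^{-1/3}\delta$, $\delta>0$ fixed. Deforming the double contour integral defining $K^{(t,\rho)}$ through the critical point, the kernel converges pointwise to the Airy--type kernel $K$; the crucial new feature is that the $\rho$-pole lies at distance of order $t^{-1/3}\delta$ from the critical point, hence survives the scaling and reappears through the factors $e^{r_1(\cdot)}$ and the parameter entering $f^*,g,\mathcal{R}$. Uniform Gaussian and Airy decay bounds along the contours then promote pointwise convergence to convergence of the Fredholm determinant and of the boundary terms. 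This establishes Theorem~\ref{thmAsymp}: for each fixed $\delta>0$ the rescaled step--stationary process converges in the sense of finite-dimensional distributions to a transition process $\mathcal{A}_\delta$ whose $m$-point functions are given by the formula of Definition~\ref{DefAstat} with $\delta$-dependent data.

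The final step is the degeneration $\delta\to 0$, which splits into two parts. At the level of the limiting formula one checks that $\det(\Id-\mathcal{P}K^\delta)$ and $G_m^\delta$ extend to a neighborhood of $\delta=0$ and that at $\delta=0$ they collapse to the objects in Definition~\ref{DefAstat}; here the finite pieces $\mathcal{R},f^*,g$ emerge as $\delta\to 0$ limits of ratios of the form (boundary integral)$/\delta$, which is precisely what renders the otherwise ill-defined stationary determinant finite --- the ``careful analytic continuation'' of the introduction. To transfer this to the true stationary system one exchanges the two limits: at fixed $t$ one has $F_t=\lim_{\delta\to 0^+}F_t^{(\delta)}$, where $F_t^{(\delta)}$ is the step formula at $1-\rho=t^{-1/3}\delta$ and $F_t$ the stationary distribution function (by the finite-$t$ continuity in $\rho$ noted above), while Theorem~\ref{thmAsymp} gives $\lim_{t\to\infty}F_t^{(\delta)}=\Pb(\bigcap_k\{\mathcal{A}_\delta(r_k)\le s_k\})$ for each $\delta>0$. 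Making the asymptotic estimates of the previous step uniform for $\delta\in[0,\delta_0]$, a Moore--Osgood double-limit argument yields $\lim_{t\to\infty}F_t=\lim_{\delta\to 0^+}\Pb(\bigcap_k\{\mathcal{A}_\delta(r_k)\le s_k\})$, and the right-hand side equals the Airy$_{\rm stat}$ expression \eqref{eqAiryDef} by the continuity just established. Since $\mathcal{A}_{\rm stat}$ has absolutely continuous marginals, pointwise convergence of the joint distribution functions is equivalent to convergence in distribution, which is the assertion of Theorem~\ref{thmAsymp0}.

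The main obstacle is the third step. As the introduction stresses, the stationary formula is not the Fredholm determinant of a trace-class operator, so $\delta=0$ cannot be inserted naively, and the identity in Definition~\ref{DefAstat} is the residue of delicate cancellations between the diverging determinant and the boundary term. Establishing the uniform-in-$\delta$ bounds on kernels, determinants and correction terms needed for the double limit to exist and be continuous at $\delta=0$ --- and verifying that the analytically continued $\delta=0$ formula is indeed the limit rather than some spurious branch --- is where the real effort lies. By comparison the steepest--descent analysis of the kernel itself, though lengthy, follows the template of the periodic case \cite{FSW13}, the one genuinely new ingredient being the boundary contribution and its nearby pole; and the algebraic derivation of the finite-$t$ formula, which reduces to the Burke--type theorem plus the determinantal machinery, I take as input from the earlier sections.
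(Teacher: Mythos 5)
Your proposal correctly identifies the scaffolding that the paper actually uses: the finite-$t$ determinantal formula for the half-space model with $\lambda>\rho$ (Proposition~\ref{propKernel}), the steepest-descent limit at the critical scaling $1-\rho=t^{-1/3}\delta$ giving the transition process (Theorem~\ref{thmAsymp}), the analytic continuation of the limiting expression to $\delta=0$ (Proposition~\ref{propAnalyt}), and the disposal of the $x_0(0)=0$ conditioning (Lemma~\ref{lemma7.1}). Where your proposal departs from the paper, and where it has a genuine gap, is the interchange of the limits $t\to\infty$ and $\delta\to0$.

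You propose a Moore--Osgood argument predicated on making the $t\to\infty$ kernel and Fredholm-determinant estimates uniform for $\delta\in[0,\delta_0]$. You describe this as bookkeeping that ``follows the template of the periodic case.'' It does not. The bound actually proved (Proposition~\ref{propK0Bound}) is $|\mathcal{K}_0^{\rm resc}(r_1,s_1;r_2,s_2)|\leq\const\cdot e^{-\min\{\delta,1\}s_2}$, which loses all decay as $\delta\to0$; correspondingly, the limiting function $g_{r_2}(s_2)$ contains the term $e^{\delta^3/3+r_2\delta^2-s_2\delta}$, which at $\delta=0$ becomes a nonzero constant and is not in $L^2(\R_+)$. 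So the Fredholm determinant on $L^2(\{r_1,\dots,r_m\}\times\R)$ does not even make sense at $\delta=0$, and the prefactor $(1+\delta^{-1}\sum_i\partial_{s_i})$ diverges; the probability stays finite only through a delicate cancellation. Your proposal asserts this cancellation can be controlled uniformly in $\delta$ all the way to zero at finite $t$, but offers no mechanism; and the paper explicitly warns that the two limits do not commute in general. Establishing a uniform estimate would amount to a separate, substantial analysis.

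The paper avoids this entirely. After Theorem~\ref{thmAsymp}, the missing ingredient is not an analytic uniformity but a probabilistic coupling coming from the last-passage-percolation picture of Section~\ref{SectLPP}. Writing $Z_n(t)$ for the exit point of the maximizing path from the boundary, one has, for any $\rho\leq 1$, the deterministic sandwich $x_n^{(\rho)}(t)\leq x_n^{(1)}(t)\leq x_n^{(\rho)}(t)+(1-\rho)Z_n(t)$ realized on a common probability space, which in rescaled form gives $X_t^{(\delta)}(r)\leq X_t^{(0)}(r)\leq X_t^{(\delta)}(r)+\delta t^{-2/3}Z_{t+2t^{2/3}r}(t)$. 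Squeezing the joint distribution functions, passing to $\limsup$/$\liminf$ in $t$ using Theorem~\ref{thmAsymp}, and then letting $\delta\to0$ using Proposition~\ref{propExPoint} (tightness of $t^{-2/3}Z_{t+2t^{2/3}r}(t)$, proved via a spiked-GUE comparison) together with Proposition~\ref{propAnalyt} gives the theorem. This is the idea missing from your proposal: monotonicity in $\rho$ of the LPP weight replaces the uniform-in-$\delta$ analytic control, which you did not supply and which the paper's estimates do not provide.

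A smaller inaccuracy: you state that the finite-$t$ formula already has the shape of Definition~\ref{DefAstat}, i.e.\ $\sum_i\partial_{s_i}\big(G_m^{(t,\rho)}\det(\Id-\mathcal{P}K^{(t,\rho)})_{L^2(\R)}\big)$. In fact, the finite-$t$ formula (Proposition~\ref{propKernel}) is a Fredholm determinant on $L^2(S\times\R)$ with the $(1+\frac{1}{\lambda-\rho}\sum\partial)$ prefactor, and the paper notes that the path-integral reduction of \cite{BCR13} to a single $L^2(\R)$ determinant cannot be carried out at finite $t$ because one of its hypotheses fails; this reduction is performed only after the $t\to\infty$ limit (Proposition~\ref{propPathInt}). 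This does not break your overall plan but it is not correct as written.
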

 \begin{remark} The joint distributions of the \emph{Airy$_{\rm{stat}}$ process} were first obtained in~\cite{BFP09}, see Definition 1.1 and Theorem 1.2 therein. In Definition \ref{DefAstat} we state an alternative formula for the joint distributions of  the Airy$_{\rm stat}$ process. The main difference between the two formulas is that in~\cite{BFP09} the joint distributions are given in terms of a Fredholm determinant on $L^2(\{1,\ldots,m\}\times\R)$, while here we have a Fredholm determinant on $L^2(\R)$. A similar twist was already visible in~\cite{PS02} and has been generalized in~\cite{BCR13}.
\end{remark}

Since $\{x_{n}(t), n \in \mathbb{Z}\}$ is a Poisson point process,
the process $X_t(r) - X_t(0)$ is a scaled Poisson jump process up to a linear part and
\begin{equation}
 \lim_{t \to \infty}\big(X_t(r)-X_t(0)\big)\stackrel{d}{=}B(2r)\,.
\end{equation}
Hence the limit process $\mathcal{A}_{\rm stat}(r) -\mathcal{A}_{\rm stat}(0)$ must also have the statistics of two-sided Brownian motion, a property which is not so easily inferred from our formulas  in Definition \ref{DefAstat}. But we will provide a direct proof of this fact  in Section~\ref{SectGaussianIncr}. Note that $X_t(0)$ and $B(2r)$ are not independent.

As already familiar from other models in the KPZ universality class~\cite{BFP09,BCFV14,FS05a,SI04}, the proof of Theorem~\ref{thmAsymp0} proceeds via a sequence of approximating initial conditions. Firstly we consider the case where $x_0(0)=0$ and assume that the particles on $\R_+$ is a Poisson process with intensity $\lambda>0$ and on $\R_-$ is a Poisson process with intensity $\rho>0$. In other words, $x_n(0)=\zeta_n$, $n\in\Z$, with
\begin{equation}\label{statModel}
 \begin{aligned}
  \zeta_0&=0,\\
  \zeta_n-\zeta_{n-1}&\sim\exp(\lambda),\quad &\text{for } n>0,\\
  \zeta_n-\zeta_{n-1}&\sim\exp(\rho), &\text{for } n\leq0.
 \end{aligned}
\end{equation}
As explained in Lemma~\ref{lemma7.1}, setting $\zeta_0=0$ will induce a difference of order one as compared to the case considered in Theorem~\ref{thmAsymp0}. In the scaling limits such differences are irrelevant. Thus it is enough to prove Theorem~\ref{thmAsymp0} for the initial conditions \eqref{statModel} with $\lambda = 1 =\rho$.
\textit{In the sequel $x_n(t)$  always refers to the initial conditions \eqref{statModel}, in such a way that the choice of the parameters $\lambda,\rho$ can be inferred from the context}. We obtain the fixed time multi-point distributions of the system $\{x_n(t),n\in\N_0\}$ in terms of a Fredholm determinant in the case $\lambda >\rho$.
The restriction to non-negative integers comes from Burke's theorem by which the  particles with $n <0 $
can be replaced by choosing $x_0(t)$ as a Brownian motion with drift $\rho$.

\begin{prop}\label{propKernel}
Let $\lambda>\rho>0$. For any finite subset $S$ of $\N_0$, it holds
\begin{equation}\label{eq33}
\Pb\bigg(\bigcap_{n\in S} \{x_n(t)\leq a_n\}\bigg)=\bigg(1+\frac{1}{\lambda-\rho}\sum_{k\in S}\frac{\D}{\D a_k}\bigg)\det(\Id-\chi_a \mathcal{K} \chi_a)_{L^2(S\times \R)},
\end{equation}
where $\chi_a(n,\xi)=\Id(\xi>a_n)$. The kernel $\mathcal{K}$ is given by
\begin{equation}\label{eqKtflat}
\begin{aligned}	\mathcal{K}(n_1,\xi_1;n_2,\xi_2)&=-\phi_{n_1,n_2}(\xi_1,\xi_2)\Id_{(n_2>n_1)}+\widetilde{\mathcal{K}}(n_1,\xi_1;n_2,\xi_2)\\&\quad+(\lambda-\rho)\mathpzc{f}(n_1,\xi_1)\mathpzc{g}(n_2,\xi_2).
\end{aligned}
\end{equation}
where
\begin{equation}\begin{aligned}
	\phi_{0,n_2}(\xi_1,\xi_2)&=\rho^{-n_2}e^{\rho \xi_2},&\quad& \text{for } n_2\geq0,\\
	\phi_{n_1,n_2}(\xi_1,\xi_2)&=\frac{(\xi_2-\xi_1)^{n_2-n_1-1}}{(n_2-n_1-1)!}\Id_{\xi_1\leq \xi_2},&& \text{for } 1\leq n_1<n_2,
	\end{aligned}\end{equation}
and	
	\begin{equation}\begin{aligned}	\widetilde{\mathcal{K}}(n_1,\xi_1;n_2,\xi_2)&=\frac{1}{(2\pi\I)^2}\int_{\I\R-\e}\D w\,\oint_{\Gamma_0}\D z\,\frac{e^{{t} w^2/2+\xi_1w}}{e^{{t} z^2/2+\xi_2 z}}\frac{(-w)^{n_1}}{(-z)^{n_2}}\frac{1}{w-z},\\
	\mathpzc{f}(n_1,\xi_1)&=\frac{1}{2\pi\I}\int_{\I\R-\e}\D w\,\frac{e^{{t} w^2/2+\xi_1w}(-w)^{n_1}}{w+\lambda},\\
	\mathpzc{g}(n_2,\xi_2)&=\frac{1}{2\pi\I}\oint_{\Gamma_{0,-\rho}}\D z\,\frac{e^{-{t} z^2/2-\xi_2 z}(-z)^{-n_2}}{z+\rho},
\end{aligned}\end{equation}
for any fixed $0<\e<\lambda$.
\end{prop}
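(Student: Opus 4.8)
\emph{Proof proposal.} I would first pass to deterministic initial data, there extract a Fredholm determinant from the determinantal structure of one-sided reflected Brownian motions, and then average over the random data \eqref{statModel}; this last step is where the rank-one term and the derivative prefactor in \eqref{eq33} are created. By the Burke-type reduction recalled above it suffices to treat the half-system $\{x_n(t),\,n\ge0\}$ with $x_0(t)$ a Brownian motion of drift $\rho$ and the initial positions $x_n(0)=\zeta_n$ ($n\ge1$) frozen at arbitrary values $y_1\le y_2\le\cdots$. Iterating the Skorokhod reflection map I would express $x_n(t)$ as a Brownian last-passage (O'Connell--Yor) functional: a maximum over a choice of initial label $0\le k\le n$ and of monotone break-times $0\le s_k\le\cdots\le s_n\le t$ of $\widehat y_k$ plus the corresponding sum of increments of the driving Brownian motions, with $\widehat y_0$ carrying $x_0$ together with its drift and $\widehat y_k=y_k$ for $k\ge1$. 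The full array $(n,t)\mapsto x_n(t)$ is then the top layer of Warren's process of non-intersecting Brownian motions in the Gelfand--Tsetlin cone, and, as in our earlier work \cite{FSW13} via the biorthogonalization scheme of \cite{BFP09,BCR13}, this yields for every finite $S\subset\N_0$
\[
\Pb\Big(\bigcap_{n\in S}\{x_n(t)\le a_n\}\Big)=\det\big(\Id-\chi_a\,\mathcal K^{(y)}\,\chi_a\big)_{L^2(S\times\R)},
\]
where the kinematic part of $\mathcal K^{(y)}$ is $-\phi_{n_1,n_2}\Id_{n_2>n_1}$ (including the boundary piece $\phi_{0,n_2}$ coming from the drift-$\rho$ motion of $x_0$), its bulk part is the double contour integral $\widetilde{\mathcal K}$ of \eqref{eqKtflat}, and the dependence on $y$ enters only through scalar exponential weights $e^{-\alpha y_k}$.

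Next I would reinstate $y_k=\zeta_k=E_1+\cdots+E_k$ with i.i.d.\ $E_j\sim\exp(\lambda)$, so that $\Pb(\bigcap_{n\in S}\{x_n(t)\le a_n\})=\E\big[\det(\Id-\chi_a\mathcal K^{(\zeta)}\chi_a)\big]$, expand the Fredholm determinant, and carry out the expectation term by term using $\E[e^{-\alpha E}]=\lambda/(\lambda+\alpha)$. Because the $\zeta_k$ are partial sums, the free summation over the label $k$ of the previous step turns into a geometric series, convergent precisely when $\lambda>\rho$. This resummation moves the $z$-contour so as to also encircle $-\rho$ (producing $\Gamma_{0,-\rho}$), creates the simple poles at $w=-\lambda$ and $z=-\rho$ carried by $\mathpzc{f}$ and $\mathpzc{g}$, and leaves, beyond the modified $\widetilde{\mathcal K}$, the rank-one remainder $(\lambda-\rho)\,\mathpzc{f}(n_1,\xi_1)\mathpzc{g}(n_2,\xi_2)$, the factor $\lambda-\rho$ being the residue collected in the summation.

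The $\zeta$-average does not collapse to a single Fredholm determinant: handling the partial-sum structure of the $\zeta_k$ by a discrete summation by parts leaves a boundary term which, by the same mechanism as in \cite{BFP09} for stationary TASEP (and as in \cite{PS02b,FS05a}), equals $\tfrac{1}{\lambda-\rho}\sum_{k\in S}\tfrac{\D}{\D a_k}$ applied to the Fredholm determinant; this produces the prefactor $1+\tfrac{1}{\lambda-\rho}\sum_{k\in S}\tfrac{\D}{\D a_k}$ of \eqref{eq33}, the normalization $(\lambda-\rho)^{-1}$ being forced by the same residue. It then remains to identify the surviving kernel with \eqref{eqKtflat} by residue calculus and admissible contour deformations, checking that the poles at $w=z$, $z=0$ and $z=-\rho$ are crossed in the correct order so that $\widetilde{\mathcal K}$, $\phi_{n_1,n_2}$, $\mathpzc{f}$ and $\mathpzc{g}$ come out as written; the identity $\det(\Id-A-u\otimes v)=\det(\Id-A)\big(1-\langle(\Id-A)^{-1}u,v\rangle\big)$ would serve only as a consistency check on the rank-one piece.

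The main obstacle is the averaging step: justifying the termwise expectation and the geometric resummation \emph{inside} the Fredholm expansion, bounding the resulting multiple series and contour integrals uniformly so that Fubini and the contour deformations are legitimate, and tracking every pole crossing so as to land on precisely the kernel \eqref{eqKtflat} with the right rank-one correction and the right derivative prefactor. A convenient way to organize this would be to prove first the one-point case $|S|=1$, where the bookkeeping is lightest, and then to promote it to the multi-point statement using the joint determinantal structure from the first step; the hypothesis $\lambda>\rho$ is exactly what makes every geometric sum convergent, the stationary case $\lambda=\rho=1$ being recovered only afterwards, by analytic continuation.
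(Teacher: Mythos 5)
Your proposal inverts the order of operations used in the paper, and the inversion creates a gap at the very step you identify as the main obstacle. You propose to first write, for \emph{arbitrary fixed} initial positions $y_1\le y_2\le\cdots$, a Fredholm determinant $\det(\Id-\chi_a\mathcal K^{(y)}\chi_a)$ whose kinematic and bulk parts are already $-\phi_{n_1,n_2}\Id_{n_2>n_1}$ and $\widetilde{\mathcal K}$, with the $y$-dependence entering only through scalar weights $e^{-\alpha y_k}$, and then average term by term over the Fredholm expansion. But no such clean Fredholm representation with $y$-dependence confined to scalar multipliers is established anywhere, and it is almost certainly false in that form: for deterministic initial data the kernel is produced by a biorthogonalization (as in \cite{FSW13} for the special case $\zeta_n=n$), and the orthogonal functions $\Phi^n$, hence the pole structure and contours of the kernel, depend on the $y_k$ nontrivially, not just through overall exponential factors. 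In fact the whole point of the paper's route is to avoid this: Proposition~\ref{PropWarren} gives the transition density as a \emph{finite} determinant $\det[F_{k,l}(\xi_{N+1-l}-\zeta_{N+1-k},t)]$, Proposition~\ref{PropRandomIC} integrates this determinant against the product-exponential density of $\vec\zeta$ row by row (here Lemma~\ref{lemDetIdent} is the essential combinatorial input, which your outline does not touch), and only afterwards does Lemma~\ref{lemDetMeasure} convert the resulting determinantal measure into a single Fredholm determinant. Averaging at the level of a finite determinant is tractable; averaging inside a Fredholm series is not, which is exactly why your approach stalls at ``justifying the termwise expectation and the geometric resummation inside the Fredholm expansion.''

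Two further mismatches are worth flagging. First, in the paper the rank-one piece $(\lambda-\rho)\mathpzc f\otimes\mathpzc g$ appears already at the $\Pb_+$ stage, because the function $\Phi^n_{n-1}$ has an extra pole at $z=-\rho$ (reflecting that the last layer of the $L$-ensemble carries the boundary condition $e^{\rho x}$); it is not a residue created by a geometric resummation over a free label $k$. Second, the derivative prefactor $1+\frac{1}{\lambda-\rho}\sum_{k\in S}\frac{\D}{\D a_k}$ comes from a \emph{continuous} integration by parts in the single variable $\zeta_0$, i.e.\ the shift argument relating $\Pb$ to the modified measure $\Pb_+$ with $\zeta_0\sim\exp(\lambda-\rho)$; it is not a boundary term of a discrete Abel summation over the partial-sum structure of the $\zeta_k$ for $k\ge1$ (those variables are already integrated away inside the determinant in Proposition~\ref{PropRandomIC}). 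Your high-level intuition about where $\lambda>\rho$ is needed and why a derivative prefactor and a rank-one correction must appear is sound, but the proof as proposed does not close without first proving the unsubstantiated fixed-$y$ Fredholm representation, and that claim is the crux.
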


Notice that this result holds for $\lambda>\rho$ only and not for the most interesting case $\lambda=\rho$. The latter can be accessed through a careful analytic continuation of the formulas. One of the novelty of this paper is to perform the analytic continuation \emph{after} the scaling limit. This allows us to discover a new process, called \emph{finite-step Airy$_{\rm{stat}}$ process},  describing the large time limit close to stationarity (actually, one still needs to take care of the random shift of $x_0(0)$, which is however irrelevant as it goes to zero after scaling in the large time limit). As before, this process is defined
through its finite-dimensional distributions.
\begin{defin}[Finite-step Airy$_{\rm stat}$ process]\label{DefAiryStat}The finite-step Airy$_{\rm{stat}}$ process with parameter $\delta>0$, $\mathcal{A}_{\rm stat}^{(\delta)}$, is the process with m-point joint distributions at $r_1<r_2<\dots<r_m$ given by
\begin{equation}\label{eq3.6}
	\Pb\bigg(\bigcap_{k=1}^m\{\mathcal{A}_{\rm stat}^{(\delta)}(r_k)\leq s_k\}\bigg) = \bigg(1+\frac{1}{\delta}\sum_{i=1}^m\frac{\D}{\D s_i}\bigg)\det\left(\Id-\chi_s K^\delta\chi_s\right)_{L^2(\{r_1,\dots,r_m\}\times\R)},
\end{equation}
where $\chi_s(r_k,x)=\Id(x>s_k)$ and the kernel $K^\delta$ is defined by
\begin{equation}
K^\delta(r_1,s_1;r_2,s_2)=-V_{r_1,r_2}(s_1,s_2)\Id_{(r_1<r_2)}+K_{r_1,r_2}(s_1,s_2)+\delta f_{r_1}(s_1)g_{r_2}(s_2).
\end{equation}
Here, $V_{r_1,r_2}$ is defined as in (\ref{DefV}), and
\begin{equation}\label{eqKernelDef}
\begin{aligned}
	K_{r_1,r_2}(s_1,s_2)&=\frac{e^{\frac{2}{3}r_2^3+r_2s_2}}{e^{\frac{2}{3}r_1^3+r_1s_1}}\int_0^\infty\D x\, e^{-x(r_1-r_2)}\Ai(r_1^2+s_1+x)\Ai(r_2^2+s_2+x)\\
		f_{r_1}(s_1)&=1-e^{-\frac{2}{3}r_1^3-r_1s_1}\int_0^\infty\D x\, \Ai(r_1^2+s_1+x)e^{-r_1x}\\
		g_{r_2}(s_2)&=e^{ \delta^3/3+r_2\delta^2-s_2\delta}-e^{\frac{2}{3}r_2^3+r_2s_2}\int_0^\infty\D x\, \Ai(r_2^2+s_2+x)e^{(\delta+r_2)x}.
\end{aligned}
\end{equation}
\end{defin}

As mentioned already above, we are going to take the limit to stationarity after the long time limit. However, in general, the limits $t\to\infty$ and $\lambda-\rho\downarrow 0$ do not commute. Therefore we have to consider $\lambda-\rho>0$ (to be able to apply Proposition~\ref{propKernel}), but vanishing with a tuned scaling exponent as $t\to\infty$, a critical scaling. We set $\lambda-\rho=\delta t^{-1/3}$ for $\delta>0$. As will be proven with this choice the limit $t\to\infty$ commutes with $\delta\downarrow 0$.

Such considerations lead  naturally to define the rescaled process as
\begin{equation}\label{eqScaledProcess}
r\mapsto X^{(\delta)}_t(r) = t^{-1/3}\Big(x^{(1-t^{-1/3}\delta)}_{\lfloor t+2rt^{2/3}\rfloor}(t)-2t-2rt^{2/3}\Big)\,,
\end{equation}
where the superscript of $x$ indicates $\lambda = 1$ and  $\rho = 1-t^{-1/3}\delta$.

The second main result of our paper is the description of the joint distributions of the rescaled process in the long time limit.
\begin{thm}\label{thmAsymp}
For every $\delta>0$, the rescaled process \eqref{eqScaledProcess} converges to the \emph{finite-step Airy$_{\rm{stat}}$ process}
\begin{equation}
	\lim_{t\to\infty}X^{(\delta)}_t(r)\stackrel{d}{=}\mathcal{A}^{(\delta)}_{\rm stat}(r),
\end{equation}
in the sense of finite-dimensional distributions.
\end{thm}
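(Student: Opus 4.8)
The plan is to deduce Theorem~\ref{thmAsymp} from the exact formula of Proposition~\ref{propKernel} by a steepest-descent analysis under the critical scaling $\lambda=1$, $\rho=1-\delta t^{-1/3}$. Fix $r_1<\dots<r_m$ and $s_1,\dots,s_m\in\R$, and apply \eqref{eq33} with $S=\{n_1,\dots,n_m\}$, $n_k=\lfloor t+2r_kt^{2/3}\rfloor$, and $a_k=2t+2r_kt^{2/3}+s_kt^{1/3}$; this is legitimate since $\lambda>\rho$ for all large $t$, and for large $t$ the $n_k$ are distinct. Because $a_k$ is affine in $s_k$ with slope $t^{1/3}$, one has $\frac{\D}{\D a_k}=t^{-1/3}\frac{\D}{\D s_k}$ and $\frac{1}{\lambda-\rho}=t^{1/3}/\delta$, so the prefactor in \eqref{eq33} is exactly $1+\frac1\delta\sum_k\frac{\D}{\D s_k}$. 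Rescaling each $L^2(\R)$-variable by $\xi\mapsto t^{1/3}\xi+a_k$ (one Jacobian $t^{1/3}$ per kernel entry) and conjugating, $\mathcal K(n_1,\xi_1;n_2,\xi_2)\mapsto\frac{h(n_1,\xi_1)}{h(n_2,\xi_2)}\mathcal K(\cdots)$, with a bookkeeping factor $h$ that leaves the determinant invariant, turns the right-hand side of \eqref{eq33} into $\big(1+\frac1\delta\sum_k\frac{\D}{\D s_k}\big)\det(\Id-\chi_s\mathcal K_t\chi_s)_{L^2(\{r_1,\dots,r_m\}\times\R)}$. It therefore suffices to show $\mathcal K_t\to K^\delta$ of Definition~\ref{DefAiryStat}, strongly enough to pass the limit through the Fredholm determinant and through the $s_k$-derivatives.

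For the kernel asymptotics, the exponent of the $w$-integral in $\widetilde{\mathcal K}$ is $tF(w)+O(t^{2/3})$ with $F(w)=\tfrac12w^2+2w+\log(-w)$ and $F'(w)=(w+1)^2/w$, so $w=-1$ is a double critical point and fluctuations live on scale $t^{1/3}$. Writing $w=-1+\zeta t^{-1/3}$ and, on a steepest-descent deformation of $\Gamma_0$ still encircling $0$, $z=-1+\zeta't^{-1/3}$, the leading term contributes $-\zeta^3/3$, the $O(t^{2/3})$ correction $-r_1\zeta^2$, and the $O(t^{1/3})$ correction $s_1\zeta$; completing the cube separately in each integral ($\zeta\mapsto\zeta-r_1$, $\zeta'\mapsto\zeta'-r_2$) converts these into $-\zeta^3/3+(r_1^2+s_1)\zeta$ plus a scalar $-\tfrac23r_1^3-r_1s_1$, which reproduces the Airy functions with shifted arguments $r_k^2+s_k+x$ and the factors $e^{\mp(\frac23r_k^3+r_ks_k)}$ in $K_{r_1,r_2}$, $f_{r_1}$, $g_{r_2}$ — the residual $x$-integral coming from the integral representation of $\frac1{w-z}$. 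Two poles require care. The pole $w=-\lambda=-1$ in $\mathpzc f$ sits exactly at the critical point, and accounting for its residue (after conjugation and removal of the $t$-growth) yields the constant $1$ in $f_{r_1}$. The pole $z=-\rho=-1+\delta t^{-1/3}$ in $\mathpzc g$ is displaced from the critical point by precisely the critical amount, i.e.\ it sits at $\zeta'=\delta$ in the rescaled variable, and the residue collected on deforming the $z$-contour becomes, under the same normalization, exactly the term $e^{\delta^3/3+r_2\delta^2-s_2\delta}$ of $g_{r_2}$ — this is where $\delta$ persists as a finite parameter. Finally, the term $-\phi_{n_1,n_2}(\xi_1,\xi_2)\Id_{(n_2>n_1)}$ with $\phi_{n_1,n_2}(\xi_1,\xi_2)=\frac{(\xi_2-\xi_1)^{n_2-n_1-1}}{(n_2-n_1-1)!}\Id_{\xi_1\le\xi_2}$ converges, by Stirling's formula and the local central limit theorem for the Poisson law, to $V_{r_1,r_2}(s_1,s_2)\Id_{(r_1<r_2)}$ (the $e^{\xi_i}$ factors absorbed into $h$, the Jacobian $t^{1/3}$ cancelling $\sqrt{2\pi(n_2-n_1)}=\sqrt{4\pi(r_2-r_1)}\,t^{1/3}$); the row $n_1=0$ with kernel $\phi_{0,n_2}$ never arises since all $n_k\to\infty$.

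To upgrade the pointwise convergence $\mathcal K_t\to K^\delta$ to convergence of the Fredholm determinant, I would fix genuine steepest-descent contours for $w$ and $z$ (the standard construction at a cubic saddle), bound $\Re F$ and its $z$-analogue away from $\{w,z=-1\}$ to get exponential decay along the contours, and combine this with the super-exponential decay in $s_1,s_2$ produced by the $\chi_s$-cutoff together with the Gaussian and Airy tails. This yields $|\mathcal K_t(r_i,s_i;r_j,s_j)|\le Ce^{-c(s_i+s_j)}$ uniformly in $t$ on any half-line $\{s\ge-L\}$, which dominates the Fredholm expansion and justifies interchanging $\lim_{t\to\infty}$ with the series and with the traces, hence $\det(\Id-\chi_s\mathcal K_t\chi_s)\to\det(\Id-\chi_sK^\delta\chi_s)$. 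The same estimates applied to $\partial_{s_k}\mathcal K_t$ — equivalently, using that the kernels extend holomorphically in $s$ and invoking Cauchy's estimates — give locally uniform convergence together with convergence of the $s_k$-derivatives, so the prefactor $1+\frac1\delta\sum_k\frac{\D}{\D s_k}$ passes to the limit as well; comparing with \eqref{eq3.6} yields $\Pb(\bigcap_k\{X_t^{(\delta)}(r_k)\le s_k\})\to\Pb(\bigcap_k\{\mathcal A_{\rm stat}^{(\delta)}(r_k)\le s_k\})$ for all $r_1<\dots<r_m$ and $s_1,\dots,s_m$, which is the claimed convergence of finite-dimensional distributions.

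I expect the main obstacle to be precisely the multiple role of the critical point $w=z=-1$: a double saddle that coincides with the pole $w=-\lambda$ and collides with the pole $z=-\rho$ at the critical rate $t^{-1/3}$, together with the need to produce — uniformly in $t$ — steepest-descent tail bounds sharp enough to control not only the Fredholm determinant but also its $s_k$-derivatives, on which the $\delta$-dependent prefactor acts.
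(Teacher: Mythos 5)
Your overall route is the paper's: start from the exact Fredholm-determinant formula of Proposition~\ref{propKernel}, impose the critical scaling $\lambda=1$, $\rho=1-\delta t^{-1/3}$, $n_k\sim t+2r_kt^{2/3}$, $a_k\sim 2t+2r_kt^{2/3}+s_kt^{1/3}$, observe that the prefactor becomes $1+\frac1\delta\sum_k\partial_{s_k}$, then establish pointwise convergence of the rescaled kernel together with uniform bounds strong enough to dominate the Fredholm expansion (Hadamard plus a polynomial conjugation), and finally let the prefactor pass to the limit. Two points of divergence are worth recording. First, for the pointwise kernel limit the paper does not run a full steepest-descent argument: it factors the kernel into scalar building blocks $\alpha_t,\beta_t$, recognizes these as Hermite-polynomial integrals, and quotes Plancherel--Rotach asymptotics from Szeg\H{o} (Lemma~\ref{lemAlphaLimit}), reserving the contour-deformation machinery for the uniform tail bounds (Lemma~\ref{lemAlphaBound}). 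Your plan to do both directly at the cubic saddle $w=z=-1$ is legitimate and self-contained, just heavier; the identification of the two pole effects ($w=-\lambda$ sitting exactly at the saddle giving the constant $1$ in $f_{r_1}$, $z=-\rho$ at $\zeta'=\delta$ giving the residue term in $g_{r_2}$) is correct and is precisely the computation the paper performs inside Proposition~\ref{propPointw}. Second, a genuine imprecision: to pass $1+\frac1\delta\sum_k\partial_{s_k}$ through the limit you invoke ``Cauchy estimates on $\partial_{s_k}\mathcal K_t$'', but the rescaled kernel does not depend on $s_k$ at all — only the projections $\chi_s$ do. The correct mechanism, which the paper states explicitly after \eqref{eqShift}, is that $\partial_{s_k}$ acting on the Fredholm series in \eqref{eqFredExp2} merely replaces one integral with indicator $\Id(\sigma>s_k)$ by evaluation at $\sigma=s_k$; your uniform kernel bounds then apply verbatim to the resulting boundary series. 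Your estimates are strong enough to make this work, so the argument can be repaired, but as written the justification misidentifies where the $s$-dependence lives and does not actually produce the needed statement.
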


\newpage
\section{Semi-infinite initial conditions}\label{SectIC}
\subsection{Well-definiteness}
Consider the initial conditions stated in (\ref{statModel}). First we show that the system with infinitely many particles is well-defined. For that purpose we use the Skorokhod representation~\cite{Sko61,AO76} to define the reflected Brownian motions. This representation is the following: the process $x(t)$, driven by the Brownian motion $B(t)$, starting from $x(0)\in\R$ and being reflected at some continuous function $f(t)$ with $f(0)<x(0)$ is defined as:
\begin{equation}\begin{aligned}
x(t)&=x(0)+B(t)-\min\big\{0,\inf_{0\leq s \leq t}(x(0)+B(s)-f(s))\big\} \\
&= \max\big\{x(0)+B(t),\sup_{0\leq s \leq t}(f(s)+B(t)-B(s))\big\}.
\end{aligned}\end{equation}

Let $B_n$, $n\in\Z$, be independent standard Brownian motions starting at $0$ and define the random variables
\begin{equation}\label{eq2.2}
Y_{k,n}(t)=\sup_{0\leq s_{k+1}\leq \ldots\leq s_m\leq t}\sum_{i=k}^n (B_i(s_{i+1})-B_i(s_i))
\end{equation}
with the convention $s_{k}=0$ and $s_{n+1}=t$.
We will define the system $\{x_n(t),n\in\Z\}$ as the limit of half-infinite systems $\{x_n^{(M)}(t),n\geq -M\}$ as $M\to\infty$, where
\begin{equation}\label{eq3a}
x_n^{(M)}(t)=\max_{k\in [-M,n]}\{Y_{k,n}(t)+\zeta_k\},\quad n\geq-M.
\end{equation}
Notice that these processes indeed satisfy the Skorokhod equation,
\begin{equation}
x_n^{(M)}(t)= \max\big\{\zeta_n+B_n(t),\sup_{0\leq s \leq t}(x_{n-1}^{(M)}(s)+B_n(t)-B_n(s))\big\},
\end{equation}
for $n>-M$, while the leftmost process is simply
\begin{equation}
 x_{-M}^{(M)}(t)=\zeta_{-M}+B_{-M}(t).
\end{equation}
Thus as desired $x_n^{(M)}(t)$ is a Brownian motion starting from $\zeta_n$ and reflected off by $x_{n-1}^{(M)}$ for $n>-M$. The representation (\ref{eq3a}) can be sees as a zero-temperature case of the O'Connell-Yor semi-discrete directed polymer~\cite{OCY01} with appropriate boundary conditions (see discussion at the end of this section).

Next we show strong converge of the system $\{x_n^{(M)}(t),n\geq -M\}$ to the processes we are studying.
\begin{prop}\label{propMconv}
Let us define
\begin{equation}
x_n(t)=\max_{k\leq n}\{Y_{k,n}(t)+\zeta_k\}.
\end{equation}
Then, for any $T>0$,
\begin{equation}
	\lim_{M\to\infty}\sup_{t\in[0,T]}|x_n^{(M)}(t)-x_n(t)|=0,\quad \mathrm{a.s.}
\end{equation}
as well as
\begin{equation}\label{eqUniBound}
	\sup_{t\in[0,T]}|x_n(t)|<\infty,\quad \mathrm{a.s.}.
\end{equation}
\end{prop}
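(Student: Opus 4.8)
The plan is to use that, by \eqref{eq3a}, $x_n^{(M)}(t)=\max_{k\in[-M,n]}\{Y_{k,n}(t)+\zeta_k\}$ is non-decreasing in $M$, so $x_n^{(M)}(t)\uparrow\sup_{k\le n}\{Y_{k,n}(t)+\zeta_k\}$ pointwise, a priori with values in $(-\infty,+\infty]$. Everything then reduces to showing that, almost surely, this supremum is attained on a \emph{fixed finite} range of indices, uniformly in $t\in[0,T]$; concretely, that
\begin{equation}\label{eqKeyLimitPlan}
\lim_{k\to-\infty}\ \sup_{t\in[0,T]}\big(Y_{k,n}(t)+\zeta_k\big)=-\infty\qquad\text{a.s.}
\end{equation}
Granting \eqref{eqKeyLimitPlan}, and using $\sup_{u\in[0,T]}|B_n(u)|<\infty$ a.s., one obtains a finite random index $M_0=M_0(n,T)$ (which we may take $\ge|n|$) such that $Y_{k,n}(t)+\zeta_k<\zeta_n-\sup_{u\in[0,T]}|B_n(u)|\le B_n(t)+\zeta_n=Y_{n,n}(t)+\zeta_n$ for all $k<-M_0$ and all $t\in[0,T]$. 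Hence for every $M\ge M_0$ the maximum in \eqref{eq3a} is attained in $[-M_0,n]$, so that $x_n^{(M)}(t)=\max_{k\in[-M_0,n]}\{Y_{k,n}(t)+\zeta_k\}$, which is precisely $x_n(t)$ (in particular the supremum defining $x_n(t)$ is a.s.\ attained). Consequently $\sup_{t\in[0,T]}|x_n^{(M)}(t)-x_n(t)|=0$ for all $M\ge M_0$ — stronger than the asserted convergence — while $x_n|_{[0,T]}$, being a finite maximum of the continuous functions $t\mapsto Y_{k,n}(t)+\zeta_k$ (equivalently the uniform limit of the continuous processes $x_n^{(M)}$), is continuous, hence bounded, which gives \eqref{eqUniBound}.

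To prove \eqref{eqKeyLimitPlan} I would combine three ingredients. First, a strong law for the data: $-\zeta_k=\sum_{j=k+1}^{0}(\zeta_j-\zeta_{j-1})$ is a sum of $|k|$ i.i.d.\ $\exp(\rho)$ variables, independent of the $B_i$, so a.s.\ $\zeta_k\le-|k|/(2\rho)$ for all large $|k|$. Second, a sub-linear bound on the last-passage term: extending the optimal last interval up to time $T$ yields the elementary pathwise inequality $\sup_{t\in[0,T]}Y_{k,n}(t)\le Y_{k,n}(T)+2\sup_{u\in[0,T]}|B_n(u)|$, which reduces matters to the fixed-time quantity $Y_{k,n}(T)$; and $Y_{k,n}(T)$ is exactly Brownian last-passage percolation across $n-k+1$ lines over the time interval $[0,T]$, which is distributed as $\sqrt{T}$ times the largest eigenvalue of an $(n-k+1)\times(n-k+1)$ GUE matrix (the classical identity of Baryshnikov, Gravner--Tracy--Widom and O'Connell--Yor), whence $\mathbb{E}\big[\sup_{t\in[0,T]}Y_{k,n}(t)\big]\le C\sqrt{(n-k)T}=O(\sqrt{|k|})$. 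Third, Gaussian concentration: writing $\sup_{t\in[0,T]}Y_{k,n}(t)=\sup_{\vec s}\sum_{i=k}^n(B_i(s_{i+1})-B_i(s_i))$ over $0=s_k\le s_{k+1}\le\cdots\le s_{n+1}\le T$, this is the supremum of a centred Gaussian family each member of which has variance $\sum_{i=k}^n(s_{i+1}-s_i)=s_{n+1}\le T$, so the Borell--TIS inequality gives $\mathbb{P}\big(\sup_{t\in[0,T]}Y_{k,n}(t)\ge\mathbb{E}[\,\cdot\,]+u\big)\le e^{-u^2/(2T)}$. Taking $u=|k|/(6\rho)$, this is summable in $k$ (because $\mathbb{E}[\,\cdot\,]=O(\sqrt{|k|})\ll|k|$), so by Borel--Cantelli $\sup_{t\in[0,T]}Y_{k,n}(t)<|k|/(3\rho)$ eventually; combined with the first ingredient, $\sup_{t\in[0,T]}(Y_{k,n}(t)+\zeta_k)<-|k|/(6\rho)\to-\infty$.

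The only non-elementary point, and the one to be careful about, is the bound $\mathbb{E}[\sup_{t\in[0,T]}Y_{k,n}(t)]=O(\sqrt{|k|})$: the crude termwise estimate $\sup_{t\in[0,T]}Y_{k,n}(t)\le\sum_{i=k}^n\sup_{0\le u\le v\le T}(B_i(v)-B_i(u))$ gives only $O(|k|)$ with a $T$-dependent constant, which is useless for large $T$ since it need not beat $|\zeta_k|\sim|k|/\rho$; the improvement to $O(\sqrt{|k|})$ genuinely uses that the $|k|$ increments share the single time budget $T$, and this is where one invokes the identification of Brownian last-passage percolation with the edge of the GUE spectrum. Any softer substitute would do: it suffices to have an a.s.\ law of large numbers $Y_{0,N}(T)/\sqrt{N}\to 2\sqrt{T}$ together with a summable upper-tail estimate. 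Once this input is in hand, the remaining steps — monotonicity in $M$, the strong law for $\zeta_k$, the pathwise reduction to a fixed time, Borell--TIS and Borel--Cantelli — are routine.
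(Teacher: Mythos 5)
Your proof is correct, and it takes a genuinely different route to the uniform-in-$t$ control. The paper proves the claim at a single time (via Borel--Cantelli on the event $A_M$ of \eqref{eq3.15}, using Ledoux's concentration inequality, Proposition~\ref{concIn}) and then upgrades to all $t\in[0,T]$ via the deterministic monotonicity of maximizers encoded in Lemma~\ref{lemMaxPath}. You instead obtain uniformity \emph{directly} by the pathwise observation
\[
\sup_{t\in[0,T]}Y_{k,n}(t)\le Y_{k,n}(T)+2\sup_{u\in[0,T]}|B_n(u)|,
\]
which reduces the supremum over $t$ to a fixed-time quantity; combined with the SLLN for $\zeta_k$, the $O(\sqrt{|k|})$ mean of $Y_{k,n}(T)$, Borell--TIS, and Borel--Cantelli, this gives a single random $M_0$ working for every $t\in[0,T]$, sidestepping Lemma~\ref{lemMaxPath} entirely. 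The concentration inputs are essentially interchangeable (Ledoux's Prop.~2.1 versus GUE identity plus Borell--TIS). One thing the paper's version buys that yours does not: it simultaneously establishes convergence of the auxiliary drifted system $\widetilde{x}_n^{(M)}$, which the paper then uses in the proof of Burke's property (Proposition~\ref{propBM}); your argument as written only treats $x_n^{(M)}$, though it would extend to $\widetilde{x}_n^{(M)}$ with no extra work via the inequality \eqref{eq2.16}, $\widetilde{Y}_{k,n}(t)\le Y_{k,n}(t)+\rho t$.
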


For the proof of this proposition we first need the following concentration inequality, which is Proposition 2.1 of \cite{ledDIL}:
\begin{prop}\label{concIn}
For each $T>0$ there exists a constant $C>0$ such that for all \mbox{$k<m$}, $\delta>0$,
\begin{equation}
	\Pb\bigg(\frac{Y_{k,m}(T)}{2\sqrt{(m-k+1)T}}\geq 1+\delta\bigg)\leq Ce^{-(m-k+1)\delta^{3/2}/C}.
\end{equation}
\end{prop}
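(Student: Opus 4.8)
Since the bound is insensitive to the value of $C$, and since it is trivial whenever $(m-k+1)\delta^{3/2}$ is bounded, I only need it in the moderate/large deviation regime. First, Brownian scaling gives $Y_{k,m}(T)\stackrel{d}{=}\sqrt{T}\,Y_{k,m}(1)$, so I may take $T=1$; write $n=m-k+1$ for the number of Brownian motions entering $Y_{k,m}$. The key structural input is the identity of Baryshnikov and O'Connell--Yor (equivalently, the RSK/Gelfand--Tsetlin picture for non-colliding Brownian motions): $Y_{k,m}(1)$ has the same law as the largest eigenvalue $\lambda_{\max}(A)$ of an $n\times n$ GUE random matrix $A$, normalized so that its spectrum asymptotically fills $[-2\sqrt n,2\sqrt n]$. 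The claim then becomes the classical edge tail estimate $\Pb(\lambda_{\max}(A)\ge 2\sqrt n(1+\delta))\le Ce^{-n\delta^{3/2}/C}$.

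For $\delta\ge 1$ a soft argument suffices and the GUE identity is not even needed. The map from the driving white noises $(\dot B_i)_{i=k}^m$ to $Y_{k,m}(1)$ is $1$-Lipschitz in $L^2$: it is the supremum of the linear functionals $(\dot B_i)\mapsto\sum_i\int_{s_i}^{s_{i+1}}\dot B_i$, each of $L^2$-norm $(\sum_i(s_{i+1}-s_i))^{1/2}=1$ by telescoping. The Borell--Tsirelson--Ibragimov--Sudakov inequality then gives $\Pb(Y_{k,m}(1)\ge \E[Y_{k,m}(1)]+u)\le e^{-u^2/2}$, and since $\E[Y_{k,m}(1)]\le 2\sqrt n$ (the mean of the largest GUE eigenvalue, or a Sudakov-type comparison) we obtain $\Pb(Y_{k,m}(1)\ge 2\sqrt n(1+\delta))\le e^{-2n\delta^2}\le e^{-2n\delta^{3/2}}$.

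For $0<\delta<1$ I would run the moment method on $A$. Expanding $\E[\Tr(A^{2p})]$ over closed walks of length $2p$ and controlling how far such walks depart from Dyck paths yields an estimate of the shape $\E[\Tr(A^{2p})]\le Cn(2\sqrt n)^{2p}\exp(Cp^3/n^2)$, uniformly for $p\le cn$. Markov's inequality then gives, using $\log(1+\delta)\ge c\delta$,
\begin{equation}
\Pb\big(\lambda_{\max}(A)\ge 2\sqrt n(1+\delta)\big)\le\frac{\E[\Tr(A^{2p})]}{\big(2\sqrt n(1+\delta)\big)^{2p}}\le Cn\exp\!\big(-cp\delta+Cp^3/n^2\big),
\end{equation}
and optimizing over the integer $p\sim\sqrt\delta\,n$ makes the exponent of order $-n\delta^{3/2}$; the polynomial prefactor is absorbed by a mild adjustment of constants. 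An alternative to the GUE route is to pass to the tridiagonal Hermite $\beta=2$ model and analyze its top eigenvalue through the associated Riccati / stochastic-Airy-operator recursion, which also produces the $3/2$ exponent.

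The crux is the refined moment bound with the $\exp(Cp^3/n^2)$ correction uniform up to $p$ of order $n$: this is exactly the ingredient that upgrades the Gaussian exponent $\delta^2$ to the sharp $\delta^{3/2}$ — it encodes the Tracy--Widom right tail $e^{-\frac43 x^{3/2}}$ under the edge rescaling $x=2n^{2/3}\delta$. Proving it cleanly, including the control of the prefactor in the narrow window just above the $n^{-2/3}$ fluctuation scale, is the substance of \cite{ledDIL}, which I would simply invoke; everything else above is soft.
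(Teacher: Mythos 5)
The paper offers no proof here --- it simply cites Proposition 2.1 of \cite{ledDIL}, which is exactly where your argument ultimately defers for its one nontrivial ingredient (the refined trace-moment bound uniform up to $p$ of order $n$). Your outline of what lies behind Ledoux's result (the Baryshnikov / O'Connell--Yor GUE identification, Borell--Tsirelson--Ibragimov--Sudakov plus $\E\,\lambda_{\max}\le 2\sqrt n$ to dispose of $\delta\ge 1$, and Markov on $\E[\Tr A^{2p}]$ optimized at $p\sim\sqrt\delta\,n$ for $0<\delta<1$) is correct, so the approach is essentially the same as the paper's, with added expository detail.
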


\begin{proof}[Proof of Proposition~\ref{propMconv}]
Let us define an auxiliary system of processes, which we will use later in proving the Burke property, by
\begin{equation}
 \widetilde{x}_{-M}^{(M)}(t)=\zeta_{-M}+B_{-M}(t)+\rho t,
\end{equation}
and
\begin{equation}
\widetilde{x}_n^{(M)}(t)= \max\big\{\zeta_n+B_n(t),\sup_{0\leq s \leq t}(\widetilde{x}_{n-1}^{(M)}(s)+B_n(t)-B_n(s))\big\}
\end{equation}
for $n>-M$. This system differs from $x_n^{(M)}(t)$ just in the drift of the leftmost particle, which of course influences all other particles as well (the choice of the extra drift is because the system with infinite many particles in $\R_-$ generates a drift $\rho$). This system of particles satisfies
\begin{equation}\label{eq3}
\widetilde{x}_n^{(M)}(t)=\max\big\{\widetilde{Y}_{-M,n}(t)+\zeta_{-M},\max_{k\in [-M+1,n]}\{Y_{k,n}(t)+\zeta_k\}\big\},
\end{equation}
with
\begin{equation}
\widetilde{Y}_{k,n}(t)=\sup_{0\leq s_{k+1}\leq \ldots\leq s_m\leq t}\Big(\rho s_{k+1}+\sum_{i=k}^n (B_i(s_{i+1})-B_i(s_i))\Big).
\end{equation}
Also, we have the inequalities
\begin{equation}\label{eq2.16}
 Y_{k,n}(t)\leq\widetilde{Y}_{k,n}(t)\leq Y_{k,n}(t)+\rho t.
\end{equation}

Consider the event
\begin{equation}\label{eq3.15}
	A_M:=\{Y_{-M,n}(T)+\zeta_{-M}+\rho T\geq-M/2\}\cup\{Y_{n,n}(T)+\zeta_n\leq-M/2\}.
\end{equation}
We can deduce exponential decay of $\Pb(A_M)$ in $M$ from combining the exponential tails of $Y_{n,n}$ and $\zeta_n$ with Proposition~\ref{concIn}, using $\delta=1$ and elementary inequalities. In particular $\sum_{M=1}^\infty\Pb(A_M)<\infty$, so by Borel-Cantelli, $A_M$ occurs only finitely many times almost surely. This means, that a.s.\  there exists a $M_T$, such that for all $M\geq M_T$:
\begin{equation}\begin{aligned}
	Y_{-M,n}(T)+\zeta_{-M}+\rho T&<-M/2 \quad\text{and}\quad
	Y_{n,n}(T)+\zeta_n&>-M/2.
\end{aligned}\end{equation}
Consequently, $Y_{-M,n}(T)+\zeta_{-M}<Y_{n,n}(T)+\zeta_n$ for all $M\geq M_T$ and therefore
\begin{equation}
	x_n(T)=x_n^{(M_T)}(T).
\end{equation}
Moreover, applying \eqref{eq2.16}, gives
\begin{equation}
 Y_{-M,n}(t)+\zeta_{-M}\leq\widetilde{Y}_{-M,n}(t)+\zeta_{-M}<Y_{n,n}(T)+\zeta_n,
\end{equation}
resulting in
\begin{equation}
	\widetilde{x}_n^{(M_T)}(T)=x_m^{(M_T)}(T).
\end{equation}

Repeating the same argument, we see that for every $t\in[0,T]$ there exists $M_t$ such that $x_n(t)=x_n^{(M_t)}(t)=\widetilde{x}_n^{(M_t)}(t)$. Applying Lemma~\ref{lemMaxPath} then gives $x_n(t)=x_n^{(M_T)}(t)=\widetilde{x}_n^{(M_T)}(t)$ for every $t\in[0,T]$. This settles the convergence.

To see \eqref{eqUniBound}, which is equivalent to $\sup_{t\in[0,T]}|x_m^{(M_T)}(t)|<\infty$, we apply the bound
\begin{equation}
	|Y_{k,n}(t)|\leq\sum_{i=k}^n\Big(\sup_{0\leq s\leq t}B_i(s)-\inf_{0\leq s\leq t}B_i(s)\Big)<\infty.
\end{equation}
\end{proof}

\begin{lem}\label{lemMaxPath}
Consider $0\leq t_1\leq t_2$ and $m$, $M_{t_1}$, $M_{t_2}$ such that
\begin{equation}\label{eq10a}
	x_m(t_i)=x_m^{(M_{t_i})}(t_i)=\widetilde{x}_m^{(M_{t_i})}(t_i),\quad\text{for }i=1,2.
\end{equation}
Then
\begin{equation}\label{eq11}
	x_m(t_1)=x_m^{(M_{t_2})}(t_1)=\widetilde{x}_m^{(M_{t_2})}(t_1).
\end{equation}
\end{lem}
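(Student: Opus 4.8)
\textbf{Proof proposal for Lemma~\ref{lemMaxPath}.}

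The plan is to exploit the explicit max-of-paths representations \eqref{eq3a} and \eqref{eq3} together with a monotonicity-in-$M$ observation: enlarging $M$ can only add more candidate indices $k$ to the maxima, so all three families of processes are monotone nondecreasing in $M$. Concretely, $x_m^{(M)}(t)=\max_{k\in[-M,m]}\{Y_{k,m}(t)+\zeta_k\}$ is manifestly nondecreasing in $M$ for every fixed $(m,t)$, and the same is true for $\widetilde x_m^{(M)}(t)$ via \eqref{eq3}, while the true process satisfies $x_m(t)=\sup_{M}x_m^{(M)}(t)=\lim_{M\to\infty}x_m^{(M)}(t)$. Hence for any $M'\ge M$ one has the sandwich $x_m^{(M)}(t)\le x_m^{(M')}(t)\le x_m(t)$, and likewise $x_m^{(M)}(t)\le\widetilde x_m^{(M')}(t)$ because $\widetilde Y_{k,n}\ge Y_{k,n}$ by \eqref{eq2.16}.

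First I would reduce to the case $M_{t_2}\ge M_{t_1}$. If instead $M_{t_2}<M_{t_1}$, note that the hypothesis $x_m(t_i)=x_m^{(M_{t_i})}(t_i)$ says that the value computed with $M_{t_i}$ already equals the supremum over all $M$; by monotonicity in $M$, the value computed with any larger truncation parameter equals it too. In particular we may freely replace $M_{t_2}$ by $\max\{M_{t_1},M_{t_2}\}$ without changing $x_m^{(M_{t_2})}(t_2)$ or $\widetilde x_m^{(M_{t_2})}(t_2)$, so WLOG $M_{t_2}\ge M_{t_1}$. Then for the time $t_1$ the chain of inequalities
\begin{equation}
x_m(t_1)=x_m^{(M_{t_1})}(t_1)\le x_m^{(M_{t_2})}(t_1)\le x_m(t_1)
\end{equation}
forces equality throughout, giving $x_m(t_1)=x_m^{(M_{t_2})}(t_1)$. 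The tilde identity is obtained the same way: $x_m^{(M_{t_1})}(t_1)\le x_m^{(M_{t_2})}(t_1)\le\widetilde x_m^{(M_{t_2})}(t_1)\le\widetilde x_m^{(M_{t_1})}(t_1)$ using monotonicity of $\widetilde x$ in $M$ and $\widetilde Y\ge Y$, but by \eqref{eq10a} the two outer terms coincide with $x_m(t_1)$, so again all terms are equal, which is \eqref{eq11}.

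The one point that needs a little care — and is the only real content beyond bookkeeping — is the monotonicity of $\widetilde x_m^{(M)}$ in $M$ and the inequality $x_m^{(M)}\le\widetilde x_m^{(M)}$, which both follow directly from \eqref{eq3} and \eqref{eq2.16}: in \eqref{eq3} the term $\widetilde Y_{-M,m}(t)+\zeta_{-M}$ is the only one that changes when $M$ grows, and the index set $[-M+1,m]$ over which the ordinary maximum is taken only expands, while $\widetilde Y_{k,m}\ge Y_{k,m}$ handles the comparison with $x_m^{(M)}$. I expect no substantive obstacle here; the lemma is essentially the statement that the stopping parameter $M_t$ at which the infinite system agrees with the truncated one can be taken monotone (or simply replaced by a common larger value) along a finite set of times, which is exactly what is needed to upgrade the pointwise-in-$t$ convergence established in the proof of Proposition~\ref{propMconv} to uniform convergence on $[0,T]$.
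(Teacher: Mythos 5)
Your argument hinges on $\widetilde{x}_m^{(M)}(t)$ being nondecreasing in $M$, and this is where it breaks. From \eqref{eq3}, passing from $M$ to $M+1$ replaces the entry $\widetilde{Y}_{-M,m}(t)+\zeta_{-M}$ in the outer max by $\max\{\widetilde{Y}_{-M-1,m}(t)+\zeta_{-M-1},\,Y_{-M,m}(t)+\zeta_{-M}\}$. Because $Y_{-M,m}\le\widetilde{Y}_{-M,m}$ by \eqref{eq2.16}, nondecrease would force $\widetilde{Y}_{-M-1,m}(t)+\zeta_{-M-1}\ge\widetilde{Y}_{-M,m}(t)+\zeta_{-M}$; but $\zeta_{-M-1}<\zeta_{-M}$ and a fresh independent driving Brownian motion appears on the added row, so there is no such pathwise domination. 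In the dynamical picture this is natural: the leftmost particle of $\widetilde{x}^{(M)}$ carries a deterministic drift $\rho$, whereas in $\widetilde{x}^{(M+1)}$ it is reflected off an independent lower neighbour, and neither dominates the other. There are two further internal slips: with $M_{t_2}\ge M_{t_1}$, the step $\widetilde{x}_m^{(M_{t_2})}(t_1)\le\widetilde{x}_m^{(M_{t_1})}(t_1)$ would need $\widetilde{x}$ to be \emph{non}increasing in $M$, opposite to what you assume one sentence earlier; and the preliminary replacement of $M_{t_2}$ by $\max\{M_{t_1},M_{t_2}\}$ silently weakens the conclusion, since \eqref{eq11} asserts equality with the given $M_{t_2}$, not with some enlarged truncation parameter.

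The missing idea is monotonicity in \emph{time} of the optimal exit index from the lower boundary, obtained via the standard crossing/quadrangle argument in last passage percolation; this is what the cited Lemma 3.2 of \cite{FSW13} supplies, here extended to the $\widetilde{Y}$-variant. For $k_1<k_2\le m$ and $t_1\le t_2$, the a.s.\ unique maximizing up-right paths realizing $Y_{k_1,m}(t_1)$ and $Y_{k_2,m}(t_2)$ must intersect, and swapping the tails past the intersection yields $Y_{k_1,m}(t_1)+Y_{k_2,m}(t_2)\le Y_{k_2,m}(t_1)+Y_{k_1,m}(t_2)$, with the same inequality holding for $\widetilde{Y}_{k_1,m}$ in place of $Y_{k_1,m}$ since the drift contribution $\rho s_{k_1+1}$ rides unchanged through the swap (only the $k_1$-path visits row $k_1$). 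The first inequality shows that if $x_m(t_2)$ is attained at some $k_2\ge -M_{t_2}$ then $x_m(t_1)$ is too, giving $x_m(t_1)=x_m^{(M_{t_2})}(t_1)$; the second, combined with the hypothesis $\widetilde{Y}_{-M_{t_2},m}(t_2)+\zeta_{-M_{t_2}}\le x_m(t_2)$ encoded in $\widetilde{x}_m^{(M_{t_2})}(t_2)=x_m(t_2)$, yields $\widetilde{Y}_{-M_{t_2},m}(t_1)+\zeta_{-M_{t_2}}\le x_m(t_1)$ and hence $\widetilde{x}_m^{(M_{t_2})}(t_1)=x_m(t_1)$. Monotonicity in $M$ is not what drives the lemma.
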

\begin{proof}
This is a straightforward generalization of Lemma 3.2~\cite{FSW13}.
\end{proof}

\subsection{Burke's property}
We establish a useful property which will allow us to study our system of interacting Brownian motions through a system with a left-most Brownian particle.
\begin{prop}\label{propBM}
For each $n\leq0$, the process
\begin{equation}\label{eq2.9}
 x_n(t)-\zeta_n-\rho t
\end{equation}
is a standard Brownian motion.
\end{prop}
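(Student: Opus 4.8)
The plan is to identify the claimed process $x_n(t)-\zeta_n-\rho t$, for fixed $n\le 0$, with the second coordinate in a Burke-type stationarity statement for the auxiliary system $\widetilde x_k^{(M)}$ introduced in the proof of Proposition~\ref{propMconv}, and then pass to the limit $M\to\infty$. The key structural fact is that for a single reflected Brownian motion the output increments and the reflection local time have an explicit joint law: if $\widetilde x_k(t)=\max\{\widetilde x_{k-1}$-driven path$\}$ is $B_k$ reflected off $\widetilde x_{k-1}$, then by the Skorokhod decomposition one has $\widetilde x_k(t)-\widetilde x_{k-1}(t)\ge 0$ and the pushing term is supported on $\{\widetilde x_k=\widetilde x_{k-1}\}$. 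The first step is therefore to recall the classical ``output theorem'' for the M/M/1-type queue in Brownian scaling (a reflected Brownian motion with negative drift, observed through the lens of Reuter--Ljung / Harrison--Williams reversibility): if the driving path $\widetilde x_{k-1}(t)-\widetilde x_{k-1}(0)$ is a Brownian motion with drift $\rho$, and the gap $\widetilde x_k(0)-\widetilde x_{k-1}(0)$ is independent $\exp(\rho)$-distributed, then $\widetilde x_k(t)-\widetilde x_k(0)$ is \emph{again} a Brownian motion with drift $\rho$, the new gap $\widetilde x_k(t)-\widetilde x_{k-1}(t)$ is again $\exp(\rho)$ and independent of the past of $\widetilde x_k$, and moreover $\widetilde x_k(\cdot)-\widetilde x_k(0)$ is independent of the ``departure'' process.

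Second, I would run this one-step statement up the chain. For the truncated system, $\widetilde x_{-M}^{(M)}(t)=\zeta_{-M}+B_{-M}(t)+\rho t$ is by construction a Brownian motion with drift $\rho$, and the gaps $\zeta_{k}-\zeta_{k-1}\sim\exp(\rho)$ for $-M< k\le 0$ are independent of everything to the left. Applying the output theorem successively at sites $-M+1,-M+2,\dots,n$ shows that $\widetilde x_n^{(M)}(t)-\zeta_n-\rho t=\widetilde x_n^{(M)}(t)-\widetilde x_n^{(M)}(0)-\rho t$ is a standard Brownian motion for each fixed $M$ (here one uses $\widetilde x_n^{(M)}(0)=\zeta_n$, which holds because the $\widetilde x$-dynamics only modifies drifts, not initial positions). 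Crucially, the law of $\widetilde x_n^{(M)}(\cdot)-\zeta_n-\rho t$ as a process does not depend on $M$ once $M\ge -n$; it is always standard Brownian motion.

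Third comes the limit transfer. By Proposition~\ref{propMconv}, for $n\le 0$ and any $T>0$ we have $\sup_{t\in[0,T]}|\widetilde x_n^{(M)}(t)-x_n(t)|\to 0$ a.s.\ as $M\to\infty$ (the proof there establishes $\widetilde x_n^{(M_t)}(t)=x_n^{(M_t)}(t)=x_n(t)$ and, via Lemma~\ref{lemMaxPath}, that the stabilized index $M_T$ works uniformly on $[0,T]$; the argument for $\widetilde x$ is identical since it differs only by the bounded drift $\rho T$). An a.s.\ uniform-on-compacts limit of a sequence of processes each of which \emph{equals in law} standard Brownian motion is itself standard Brownian motion: the finite-dimensional distributions converge, and they are all exactly Gaussian with the Brownian covariance, so $x_n(t)-\zeta_n-\rho t$ has the law of standard Brownian motion. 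Since $T$ is arbitrary this holds on all of $[0,\infty)$.

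The main obstacle is the first step: a clean, self-contained proof of the Brownian output theorem with the correct joint independence structure (output path independent of the departure process, stationary $\exp(\rho)$ gap). One route is a direct reversibility argument — the pair (gap process, driving increments) run backwards in time has the same law with the roles of input and output exchanged, which is the Brownian analogue of Burke's theorem for the M/M/1 queue and can be proved via the explicit stationary reversible measure of reflected Brownian motion with drift $-\rho$ (exponential of rate $\rho$). A second, more hands-on route uses the Skorokhod formula $\widetilde x_k(t)=\sup_{0\le s\le t}(\widetilde x_{k-1}(s)+B_k(t)-B_k(s))\vee(\widetilde x_k(0)+B_k(t))$ together with the memorylessness of the initial exponential gap to check the three claimed laws by an explicit computation with the reflected path's Lévy-type identities. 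Either way, once this single-queue lemma is in hand, the induction over sites and the $M\to\infty$ passage are routine, and I expect no difficulty beyond bookkeeping there.
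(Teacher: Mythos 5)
Your approach is essentially the same as the paper's: induct along the sites of the drift-modified truncated system $\widetilde x_k^{(M)}$ using the one-step Brownian Burke output theorem to conclude that $\widetilde x_n^{(M)}(t)-\zeta_n-\rho t$ is standard Brownian motion for each $M$, and then transfer the law to $x_n$ via the a.s.\ stabilization $\widetilde x_n^{(M_T)}(t)=x_n(t)$ from Proposition~\ref{propMconv}. The one-step lemma you flag as the main obstacle is exactly Proposition~\ref{propNeil} in the paper (obtained from Theorem~2 of~\cite{OCY01} via a time-reversal), and the paper uses only the output-is-Brownian part of it, not the full departure/gap-independence package you state.
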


\begin{remark}\label{RemarkBurke}
Proposition~\ref{propBM} allows us to restrict our attention to the half-infinite system. In fact, conditioned on the path of $x_0$, the systems of particles $\{x_n(t),n<0\}$ and $\{x_n(t),n>0\}$ are independent, as it is clear by the definition of the system. Then (\ref{eq2.9}) implies that the law of \mbox{$\{x_n(t), n>0\}$} is the same as the one obtained replacing the infinitely many particles $\{x_m(t), m\leq 0\}$ with a single Brownian motion $x_0(t)$ which has a drift $\rho$. This property will be used to derive our starting result, Proposition~\ref{propKernel}.
\end{remark}

\begin{proof}[Proof of Proposition~\ref{propBM}]
First notice that
\begin{equation}
 \widetilde{x}_{-M}^{(M)}(t)-\zeta_{-M}-\rho t,
\end{equation}
is a Brownian motion. Now assume $\widetilde{x}_{n-1}^{(M)}(t)-\zeta_{n-1}-\rho t$ is a Brownian motion. By definition,
\begin{equation}
\widetilde{x}_n^{(M)}(t)-\zeta_{n-1}= \max\big\{\zeta_n-\zeta_{n-1}+B_n(t),\sup_{0\leq s \leq t}(\widetilde{x}_{n-1}^{(M)}(s)-\zeta_{n-1}+B_n(t)-B_n(s))\big\},
\end{equation}
which allows us to apply Proposition~\ref{propNeil}, i.e., we have that
\begin{equation}
 \widetilde{x}_n^{(M)}(t)-\zeta_{n-1}-(\zeta_n-\zeta_{n-1})-\rho t=\widetilde{x}_n^{(M)}(t)-\zeta_n-\rho t
\end{equation}
is a Brownian motion. Since $\widetilde{x}_n^{(M_T)}(t)=x_n(t)$ the proof is completed.
\end{proof}

It is clear, that in the case $\lambda=\rho$ the process \eqref{eq2.9} is a Brownian motion for $n>0$, too, i.e.,  the system is stationary in $n$. We also have stationarity in $t$, in the sense that for each $t\geq0$ the random variables \mbox{$\{x_n(t)-x_{n-1}(t),n\in\Z\}$} are independent and distributed exponentially with parameter $\rho$. The following result is a small modification of Theorem 2 in~\cite{OCY01}.

\begin{prop}[Burke's theorem for Brownian motions]\label{propNeil}
Fix $\rho>0$ and let $B(t)$, $C(t)$ be standard Brownian motions, as well as $\zeta\sim\exp(\rho)$, independent. Define the process
\begin{equation}
D(t)=\max\big\{\zeta+C(t),\sup_{0\leq s \leq t}(B(s)+\rho s+C(t)-C(s))\big\}.
\end{equation}
Then
\begin{equation}\label{eq2.29}
 D(t)-\zeta-\rho t
\end{equation}
is distributed as a standard Brownian motion.
\end{prop}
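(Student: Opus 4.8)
The plan is to reduce the statement to a known output theorem for the single-server Brownian queue and verify its hypotheses. Observe first that $D(t)$ is precisely the Skorokhod reflection map: if we set $A(t) = B(t) + \rho t$ (the "arrival" process, a Brownian motion with drift $\rho$) and let $S(t) = C(t)$ (the "service" process, a driftless Brownian motion), then writing
\begin{equation}
D(t) = \max\Big\{\zeta + S(t), \sup_{0\leq s\leq t}\big(A(s) + S(t) - S(s)\big)\Big\}
\end{equation}
exhibits $D(t)$ as the departure process of a Brownian single-server queue initialized with workload $\zeta$. This is exactly the setting of Theorem~2 in~\cite{OCY01} (the "Brownian Burke theorem" of O'Connell--Yor), whose conclusion is that in the stationary regime the departure process $D(t) - \zeta - \rho t$ is itself a standard Brownian motion. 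So the task is (i) to check that our initial condition $\zeta \sim \exp(\rho)$, independent of $B$ and $C$, is the stationary workload for this queue, and (ii) to translate the output statement of~\cite{OCY01} into the normalization used here.

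The main step is (i). The reason $\exp(\rho)$ is the stationary law is a reversibility/invariance computation: the queue-length (workload) process $Q(t) = D(t) - A(t) + (\text{drift corrections})$ can be written via the reflection map as $Q(t) = \max\{\zeta + S(t) - A(t), \sup_{0\le s\le t}(S(t)-S(s) - (A(t)-A(s)))\}$, driven by the Brownian motion $S - A$ which has drift $-\rho$. The one-dimensional reflected Brownian motion with negative drift $-\rho$ and unit variance has stationary distribution $\exp(2\rho)$ in the standard parametrization; here, because $S-A$ has variance $2$ (difference of two independent standard BMs), the correct stationary law works out to $\exp(\rho)$, matching the hypothesis. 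I would either cite this directly from~\cite{OCY01} (where it is part of the statement of their Burke theorem) or give the short verification that, for $\zeta \sim \exp(\rho)$, the pair (departure increments, final workload) has the same joint law as (a standard Brownian motion, an independent $\exp(\rho)$) — the clean way is to invoke the reversibility of the stationary Brownian queue, which is the content of~\cite{OCY01}.

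I expect the only genuine obstacle to be bookkeeping of conventions: \cite{OCY01} may state its result with a different sign convention for the drift, a different variance normalization for the driving Brownian motions, or with the roles of arrival and service processes interchanged. One must check carefully that our $C(t)$ (driftless) plays the role of their service Brownian motion and $B(t)+\rho t$ the role of their arrival process, and that "standard Brownian motion" on the two sides refers to the same variance. Once the dictionary is fixed, the conclusion \eqref{eq2.29} is immediate. No new estimates are needed; this is a citation-plus-translation argument, which is why the excerpt flags it as "a small modification of Theorem~2 in~\cite{OCY01}."
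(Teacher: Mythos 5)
Your proposal is correct and follows the paper's route: both hinge on O'Connell--Yor's Theorem~2 and on identifying $\exp(\rho)$ as the stationary workload (you deduce it from the drift-$(-\rho)$, variance-$2$ normalization of $S-A$; the paper computes $q(0)=\sup_{s\le 0}\{-B(s)-C(s)+\rho s\}\stackrel{d}{=}\sup_{s\ge 0}\{B(s)-\tfrac{\rho}{2}s\}$ and invokes Lemma~\ref{lemBMExp}). The ``bookkeeping'' you flag is exactly what the paper's proof carries out: extend $B,C$ to two-sided Brownian motions, split the supremum in $q(t)$ at $s=0$, set $\zeta=q(0)$ using its independence from $\{B(t),C(t),t\ge 0\}$, and observe that $-d(t)$ matches $D(t)-\zeta-\rho t$ after the sign flip $B\mapsto -B$.
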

\begin{proof}
Extend the processes $B(t)$, $C(t)$ to two-sided Brownian motions indexed by $\R$. Defining
\begin{equation}
 q(t)=\sup_{-\infty<s\leq t}\{B(t)-B(s)+C(t)-C(s)-\rho(t-s)\}
\end{equation}
and
\begin{equation}
 d(t)=B(t)+q(0)-q(t),
\end{equation}
we can apply Theorem 2~\cite{OCY01}, i.e., $d(t)$ is a Brownian motion. Now,
\begin{equation}
 q(0)=\sup_{s\leq0}\{-B(s)-C(s)+\rho s\}\stackrel{d}{=}\sup_{s\geq0}\{\sqrt{2}B(s)-\rho s\}\stackrel{d}{=}\sup_{s\geq0}\big\{B(s)-\frac{\rho}{2} s\big\},
\end{equation}
so by Lemma~\ref{lemBMExp} it has exponential distribution with parameter $\rho$. As it is independent of the processes $\{B(t),C(t),t\geq0\}$ we can write $q(0)=\zeta$. Dividing the supremum into $s<0$ and $s\geq0$ we arrive at:
\begin{equation}\begin{aligned}
 -d(t)&=q(t)-B(t)-q(0)\\
 &=\max\Big\{C(t)-\rho t,\sup_{0\leq s\leq t}\{-B(s)+C(t)-C(s)-\rho(t-s)\}-\zeta\Big\},
\end{aligned}\end{equation}
which is \eqref{eq2.29} up to a sign flip of $B(s)$.
\end{proof}

\begin{lem}\label{lemBMExp}
 Fix $\rho>0$ and let $B(t)$ be a standard Brownian motion. Then
 \begin{equation}
  \sup_{s\geq0}(B(s)-\rho s)\sim\exp(2\rho).
 \end{equation}
\end{lem}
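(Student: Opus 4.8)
The plan is to compute the distribution of $M:=\sup_{s\geq 0}(B(s)-\rho s)$ directly. First I would observe that $M\geq 0$ almost surely, since the supremum includes $s=0$ where $B(0)-0=0$, and that $M<\infty$ almost surely because $B(s)-\rho s\to-\infty$ by the strong law of large numbers for Brownian motion. So it suffices to show $\Pb(M>a)=e^{-2\rho a}$ for every $a>0$.

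\medskip

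The cleanest route is via the optional stopping theorem applied to an exponential martingale. For fixed $a>0$ let $\tau_a=\inf\{s\geq 0: B(s)-\rho s\geq a\}$, so that $\{M>a\}=\{M\geq a\}=\{\tau_a<\infty\}$ up to a null set (the boundary case $M=a$ has probability zero by continuity, or one can work with $\geq$ throughout). The process $Z(s)=\exp\bigl(2\rho B(s)-2\rho^2 s\bigr)=\exp\bigl(2\rho(B(s)-\rho s)\bigr)$ is a nonnegative martingale with $Z(0)=1$. Applying optional stopping at the bounded stopping time $\tau_a\wedge T$ gives $\E[Z(\tau_a\wedge T)]=1$. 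On the event $\{\tau_a\leq T\}$ we have $Z(\tau_a\wedge T)=Z(\tau_a)=e^{2\rho a}$ by continuity of paths (the process $B(s)-\rho s$ hits level $a$ exactly, so there is no overshoot). On the complementary event $Z(\tau_a\wedge T)=Z(T)\geq 0$, and moreover $Z(T)\to 0$ as $T\to\infty$ on $\{\tau_a=\infty\}$ because there $B(s)-\rho s<a$ for all $s$ yet $B(s)-\rho s\to-\infty$. Using $Z(T)\Id_{\{\tau_a>T\}}\leq e^{2\rho a}$ as a dominating bound (valid since on $\{\tau_a>T\}$ we have $B(T)-\rho T<a$) together with dominated convergence, the $T\to\infty$ limit yields
\begin{equation}
1=e^{2\rho a}\,\Pb(\tau_a<\infty),
\end{equation}
hence $\Pb(M\geq a)=\Pb(\tau_a<\infty)=e^{-2\rho a}$, which is exactly the statement that $M\sim\exp(2\rho)$.

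\medskip

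The only subtle point, and the step I would be most careful about, is the passage to the limit $T\to\infty$: one must justify that $\E[Z(T)\Id_{\{\tau_a>T\}}]\to 0$. This is where one genuinely uses that $B(s)-\rho s\to-\infty$; without the negative drift the claim is false. Concretely, on $\{\tau_a>T\}$ one has $B(T)<a+\rho T$, so $Z(T)\leq e^{2\rho a}$ and the integrand is bounded by the integrable constant $e^{2\rho a}$; since $\Id_{\{\tau_a>T\}}Z(T)\to 0$ pointwise a.s.\ (on $\{\tau_a<\infty\}$ eventually $\tau_a\leq T$ so the indicator vanishes, while on $\{\tau_a=\infty\}$ we have $Z(T)\to 0$), dominated convergence applies. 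An alternative, avoiding the limiting argument entirely, is to invoke the well-known fact that for drifted Brownian motion with negative drift the running maximum is exponentially distributed — but since the paper is otherwise self-contained it seems preferable to give the short martingale argument. A second alternative is a scaling/reflection computation: by Brownian scaling $\sup_{s\geq 0}(B(s)-\rho s)\stackrel{d}{=}\rho^{-1}\sup_{s\geq 0}(B(s)-s)$ after the substitution $s\mapsto \rho^{-2}s$, reducing to the case $\rho=1$, whose distribution can be read off from standard first-passage results; I would mention this only as a remark.
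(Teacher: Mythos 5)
Your argument is correct, but it takes a genuinely different route from the paper. The paper observes that, for fixed $t$, the running maximum $\sup_{0\leq s\leq t}(B(s)-\rho s)$ is equal in law to a Brownian motion with drift $-\rho$ reflected upward at the origin (this is the L\'evy/Skorokhod reflection identity), and then lets $t\to\infty$ and invokes the known stationary distribution of that reflected diffusion, which is $\exp(2\rho)$. You instead compute the law of $M=\sup_{s\geq 0}(B(s)-\rho s)$ directly via the exponential martingale $Z(s)=\exp(2\rho B(s)-2\rho^2 s)$ and optional stopping at the bounded time $\tau_a\wedge T$, with a dominated-convergence pass to the limit. The paper's argument is shorter but outsources two standard facts (the reflection identity and the form of the stationary density); yours is longer but fully self-contained, and you are careful about exactly the point where drift is used, namely showing $\E[Z(T)\Id_{\{\tau_a>T\}}]\to 0$. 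Both give the same constant $2\rho$. One small remark: for the chain $\{M>a\}=\{M\geq a\}=\{\tau_a<\infty\}$ up to a null set, it is cleanest to prove $\Pb(M\geq a)=e^{-2\rho a}$ for all $a>0$ as you in fact do, and then note that this right-continuous, strictly decreasing expression already forces $\Pb(M=a)=0$; no separate continuity argument is needed.
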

\begin{proof}
The random variable
 \begin{equation}
  \sup_{0\leq s\leq t}(B(s)-\rho s)
 \end{equation}
is distributed as a Brownian motion with drift $-\rho$ started at zero and being reflected (upwards) at zero, at time $t$. As $t\to\infty$, this converges to the stationary distribution of this process, which is the exponential distribution with parameter $2\rho$.
\end{proof}

From a stochastic analysis point of view, the system \mbox{$\{x_n(t),n\geq0\}$} satisfies
\begin{equation}\begin{aligned}
 x_n(t)&=\zeta_n+B_n(t)+L^n(t),\quad\text{for } n\geq1,\\
 x_0(t)&=\widetilde{B}_0(t)+\rho t.
 \end{aligned}\end{equation}
Here $L^n$, $n\geq2$, are continuous non-decreasing processes increasing only when $x_n(t)=x_{n-1}(t)$. In fact, $L^n$ is twice the semimartingale local time at zero of $x_n-x_{n-1}$.
Notice that $\widetilde{B}_0(t)$ is a standard Brownian motion independent of $\{\zeta_n,B_n(t),n\geq1\}$, but not equal to $B_0(t)$.

\subsection{Last passage percolation interpretation}\label{SectLPP}
One can also view the system \mbox{$\{x_n(t),n\geq0\}$} as a model for last passage percolation (or zero-temperature semi-discrete directed polymer). We assign random background weights on the set $\{\R_+\times \N_0\}$ in the following way:
\begin{itemize}
 \item White noises $\D B_n$ on the lines $\R_+\times \{n\}$ for $n\geq1$,
 \item White noise $\D \widetilde{B}_0$ plus a Lebesgue measure of density $\rho$ on the line $\R_+\times\{0\}$, and
 \item Dirac measures of magnitude $\zeta_n-\zeta_{n-1}$ on $(0,n)$ for $n\geq1$.
\end{itemize}

\begin{figure}
\begin{center}
 \psfrag{0}[lb]{$0$}
 \psfrag{1}[lb]{$1$}
 \psfrag{2}[lb]{$2$}
 \psfrag{3}[lb]{$3$}
 \psfrag{4}[lb]{$4$}
 \psfrag{s0}[lb]{$s_0$}
 \psfrag{s1}[lb]{$s_1$}
 \psfrag{s2}[lb]{$s_2$}
 \psfrag{s3}[lb]{$s_3$}
 \psfrag{t}[lb]{$t$}
\includegraphics[height=5cm]{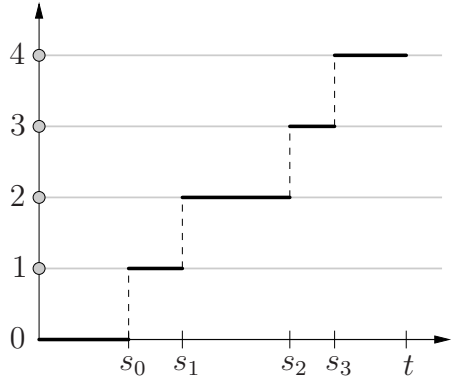}
 \caption{A path $\pi\in\Pi(0,0;t,4)$ (thick black) and the random background noise (grey).}
 \label{FigDP}
\end{center}
\end{figure}

An up-right path in $\{\R_+\times\N_0\}$ is characterized by its jumping points $s_i$ and it consists of line segments $[s_{n-1},s_n]\times \{n\}$, see Figure~\ref{FigDP}. The set of up-right paths can then be parameterized by
\begin{equation}
 \Pi(t_1,n_1;t_2,n_2) = \left\{\vec{s}\in\R^{n_2-n_1+2}|t_1=s_{n_1-1}\leq s_{n_1}\leq \dots\leq s_{n_2}=t_2\right\}.
\end{equation}
The \emph{percolation time} or \emph{weight} of a path $\vec\pi\in\Pi$ is the integral over the background weights along the path. Explicitly, we have:
\begin{equation}
 w(\vec{\pi})=\widetilde{B}_0(s_0)+\rho s_0+\sum_{i=1}^{n_2} \left((\zeta_i-\zeta_{i-1})\Id_{s_{i-1}=0} +B_i(s_{i})-B_i(s_{i-1})\right)
\end{equation}
for $n_1=0$, and for $n_1>0$,
\begin{equation}
 w(\vec{\pi})=\sum_{i=n_1}^{n_2} \left((\zeta_i-\zeta_{i-1})\Id_{s_{i-1}=0} +B_i(s_{i})-B_i(s_{i-1})\right).
\end{equation}

The \emph{last passage percolation time} is given by the supremum over all paths:
\begin{equation}
 L_{(t_1,n_1)\to(t_2,n_2)}:=\sup_{\vec{\pi}\in\Pi(t_1,n_1;t_2,n_2)}w(\vec{\pi}).
\end{equation}
The supremum is almost surely attained by a unique path $\vec\pi^*$, called the maximizer. It exists because the supremum can be rewritten as a composition of a finite maximum and a supremum of a continuous function over a compact set. Uniqueness follows from elementary properties of the Brownian measure. Most importantly, from the definition, we have
\begin{equation}
 x_n(t)=L_{(0,0)\to(t,n)}.
\end{equation}

We will use this interpretation in Section~\ref{SectAnCont}, however, it also gives some connections to different works. Our model can be seen as the semi-continuous limit of a more widely studied discrete last passage percolation model (see for example~\cite{Jo00b,Jo03b}). Moreover, our last passage percolation model is the zero temperature limit of a directed polymer model, which has been studied thoroughly in the recent past \cite{SV10,BCFV14}.

For later use we also define a version without boundary weights:
\begin{equation}
 L^{\rm step}_{(0,1)\to(t,n)}:=\sup_{\vec{\pi}\in\Pi(0,1;t,n)}\sum_{i=1}^{n} \left(B_i(s_{i})-B_i(s_{i-1})\right).
\end{equation}

\newpage
\section{Determinantal structure - Proof of Proposition~\ref{propKernel}}\label{SectDetStruct}
In order to prove Proposition~\ref{propKernel} we first start by considering the transition probability for a finite number of reflecting Brownian motions with drifts and arbitrary initial positions (Proposition~\ref{PropWarren}). Then we will set the drift of the first Brownian motion to $\rho$, see Remark~\ref{RemarkBurke}, and we will randomize the initial positions (Proposition~\ref{PropRandomIC}).

\subsection{Transition density for fixed initial positions}
Proposition~\ref{PropWarren} generalizes Proposition 4.1~\cite{FSW13}, which has been first shown in~\cite{War07}, to the case of non-zero drifts.
\begin{prop}\label{PropWarren}
The transition probability density of $N$ one-sided reflected Brownian motions with drift $\vec{\mu}$ from $\vec{x}(0)=\vec{\zeta}\in W^N$  to \mbox{$\vec{x}({t})=\vec{\xi}\in W^N$} at time ${t}$ has a continuous version, which is given as follows.
\begin{equation}\label{transDens}
	\Pb\left(\vec{x}({t})\in \D\vec{\xi}|\vec{x}(0)=\vec{\zeta}\right)=r_{t}(\vec{\zeta},\vec{\xi})\D\vec{\xi},
\end{equation}
where
\begin{equation}\label{eqrtau}
r_{t}(\vec{\zeta},\vec{\xi})=\bigg(\prod_{n=1}^Ne^{\mu_n(\xi_n-\zeta_n)-{t}\mu_n^2/2}\bigg)\det_{1\leq k,l\leq N}[F_{k,l}(\xi_{N+1-l}-\zeta_{N+1-k},{t})],
\end{equation}
and
\begin{equation}\label{eqFkl}
	F_{k,l}(\xi,{t})=\frac{1}{2\pi\I}\int_{\I\R+\mu}\D w\,e^{{t} w^2/2+\xi w}\frac{\prod_{i=1}^{k-1}(w+\mu_{N+1-i})}{\prod_{i=1}^{l-1}(w+\mu_{N+1-i})},
\end{equation}
with $\mu>-\min\{\mu_1,\dots,\mu_N\}$.

\end{prop}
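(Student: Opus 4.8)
The plan is to verify directly that the kernel in \eqref{transDens}--\eqref{eqFkl} is the transition density of the Markov process $\vec x(t)$ and to identify it as such by well-posedness. The $N$ one-sided reflected Brownian motions with drift form a Feller process on $\bar W^N$ (existence and pathwise uniqueness follow from the Skorokhod representation already recorded in \eqref{eq3a}); applying It\^o's formula to $f(\vec x(t))$, the bounded-variation part is $\int_0^t\sum_{n}\partial_nf(\vec x(s))\,\D\ell^n(s)$ with $\ell^n$ the (nonnegative) reflection term of $x_n$ off $x_{n-1}$, supported on $\{x_n=x_{n-1}\}$, so the generator is $\mathcal L=\tfrac12\sum_{n=1}^N\partial_n^2+\sum_{n=1}^N\mu_n\partial_n$ on smooth functions subject to the one-sided Neumann conditions $\partial_nf|_{\{\xi_n=\xi_{n-1}\}}=0$, $n=2,\dots,N$ (particle $1$ being free). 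Hence it suffices to show that, for each $t>0$ and $\vec\zeta\in W^N$, the function $r_t(\vec\zeta,\cdot)$ is a probability density on $W^N$ that (a) solves $\partial_tr_t=\tfrac12\sum_n\partial_{\xi_n}^2r_t-\sum_n\mu_n\partial_{\xi_n}r_t$ in the interior, (b) satisfies the boundary conditions adjoint to those above on each face $\{\xi_n=\xi_{n-1}\}$, and (c) converges to $\prod_n\delta(\xi_n-\zeta_n)$ as $t\downarrow0$; then $\langle r_t(\vec\zeta,\cdot),f\rangle=\E_{\vec\zeta}[f(\vec x(t))]$ for $f$ in a core, by uniqueness of the martingale problem.

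I would carry out the three checks in that order. For (a): each $F_{k,l}(\xi,t)$ in \eqref{eqFkl} is a contour integral of $e^{tw^2/2+\xi w}$ against a $t$-independent rational function of $w$, hence solves the heat equation $\partial_tF_{k,l}=\tfrac12\partial_\xi^2F_{k,l}$; expanding the determinant in \eqref{eqrtau} by multilinearity in its columns (the variable $\xi_n=\xi_{N+1-l}$ entering only column $l=N+1-n$) and conjugating by the Gaussian prefactor $\prod_n e^{\mu_n(\xi_n-\zeta_n)-t\mu_n^2/2}$ turns the heat equation into the claimed drift-diffusion equation, using the elementary identity $e^{-\mu\xi+t\mu^2/2}(\tfrac12\partial_\xi^2-\mu\partial_\xi)(e^{\mu\xi-t\mu^2/2}\,\cdot\,)=\tfrac12\partial_\xi^2(\cdot)-\tfrac12\mu^2(\cdot)$. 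For (b): the decisive structural fact is $F_{k,l}(\xi,t)=(\partial_\xi+\mu_{N+1-l})F_{k,l+1}(\xi,t)$ (acting with $\partial_\xi$ brings down the factor $w+\mu_{N+1-l}$ that cancels the last denominator factor of $F_{k,l+1}$); on the face $\{\xi_n=\xi_{n-1}\}$ the column $l=N+1-n$ then becomes $\partial_{\xi_n}$ applied to the column $l=N+2-n$ (after subtracting $\mu_n$ times the latter, which does not change the determinant), and this degeneracy is exactly what forces the adjoint one-sided boundary condition to hold. For (c): a Gaussian/steepest-descent estimate on the contour integrals shows the diagonal entries $F_{k,k}(\xi,t)=(2\pi t)^{-1/2}e^{-\xi^2/2t}$ concentrate as $t\downarrow0$, localizing $\xi_n\to\zeta_n$ while the prefactors tend to $1$ and off-diagonal terms are lower order. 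Mass one is then automatic from the identification (or can be read off the formula), and analyticity of the integrand in $(t,\vec\mu)$ gives the asserted continuous version.

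The main obstacle is step (b): because the reflection is one-sided, the boundary condition adjoint to $\partial_nf|_{\xi_n=\xi_{n-1}}=0$ is not the symmetric Neumann condition one has for doubly reflected systems, and matching it against the explicit determinant requires precisely the numerator/denominator structure $\prod_{i=1}^{k-1}(w+\mu_{N+1-i})/\prod_{i=1}^{l-1}(w+\mu_{N+1-i})$ and the choice of contour $\I\R+\mu$ with $\mu>-\min_n\mu_n$, so one cannot afford to be cavalier about pole locations and column pairings. One might hope to bypass all of this by deriving the drifted formula from the driftless case (Proposition 4.1 of \cite{FSW13}, originally \cite{War07}) via Girsanov, but this is genuinely short only when all $\mu_n$ coincide — then the process is a deterministic space shift of the driftless one, and one checks the formula collapses accordingly — whereas for distinct drifts the Radon--Nikodym factor $\exp(\sum_n\mu_n(x_n(t)-\zeta_n-\ell^n(t))-\tfrac12\sum_n\mu_n^2t)$ retains the path-dependent local times $\ell^n(t)$, so nothing is gained; the other alternative is to re-run Warren's intertwining/interlacing argument with drifts carried along (via \cite{OCY01}). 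I would regard the direct verification above as the cleaner route.
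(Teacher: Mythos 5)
Your plan — verify that $r_t$ solves the right PDE with the right boundary and initial conditions, then identify it with the transition density via It\^o/martingale uniqueness — is the same overall strategy as the paper's, but you apply it on the \emph{wrong side}: you propose to check the \emph{forward} (Kolmogorov/Fokker--Planck) equation in the $\vec\xi$ variables, whereas the paper checks the \emph{backward} equation in the $\vec\zeta$ variables, and this difference is precisely where a gap opens up.

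The reason the backward route is so much cleaner is that the backward boundary condition is simply the Neumann condition $\partial_{\zeta_i}r_t=0$ on $\{\zeta_i=\zeta_{i-1}\}$, i.e.\ the very same condition that defines the domain of the generator. Verifying it is a one-line row manipulation: moving $e^{-\mu_i\zeta_i}$ into row $k=N+1-i$ and using $(\partial_\xi+\mu_{N+1-k})F_{k,l}=F_{k+1,l}$ shows that $\partial_{\zeta_i}$ turns row $N+1-i$ into minus row $N+2-i$, which vanishes on the diagonal $\zeta_i=\zeta_{i-1}$. One then applies It\^o to $F(T+\e-s,\vec x(s))$ with $F=\int r_t f$: the $\D s$ terms cancel by the backward equation and the $\D L^n$ terms die by the Neumann condition on $F$, and the conclusion follows after sending $\e\downarrow0$ via the initial condition lemma.

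Your forward route, by contrast, requires identifying the correct \emph{adjoint} boundary condition on each face $\{\xi_n=\xi_{n-1}\}$, which for one-sided reflection is \emph{not} Neumann --- even for $N=1$ with a drift, the stationary density of reflected Brownian motion is an exponential, which does not have vanishing derivative at the boundary; the right condition is a flux (no-flow) condition involving $\mu$ and a mixture of $\partial_n p$ and $p$ itself. You acknowledge this ("the boundary condition adjoint to $\partial_nf|_{\xi_n=\xi_{n-1}}=0$ is not the symmetric Neumann condition") but then never write it down, and the column identity $(\partial_\xi+\mu_{N+1-l})F_{k,l+1}=F_{k,l}$ by itself does not obviously produce it. The claim that "this degeneracy is exactly what forces the adjoint one-sided boundary condition to hold" is an assertion, not an argument; the hard step of the proof is exactly the step you leave unfinished. (Also, strictly speaking no "degeneracy" occurs: on the face, column $N+1-n$ is $\partial_{\xi_n}$ of column $N+2-n$ after a column operation, which is a structural relation, not a linear dependence that kills the determinant.) To complete this route you would need to (i) derive, by integration by parts against a test function with one-sided Neumann data, the explicit adjoint flux condition on each face for the one-sided reflected system, and (ii) check it holds for the determinant --- a computation you do not do, and which is not merely a rephrasing of the column identity.

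Your step (c) is also rougher than what is actually needed: the paper's Lemma~\ref{lemInCond} has to rule out \emph{every} non-identity permutation of the determinant, and this is not "obviously lower order" but relies on the support of $f$ being bounded away from the boundary, the iterated-derivative identity relating $F_{k,l}$ to $F_{k,k}$, and the bounds in Lemma~\ref{lemFkl}. Still, the substantive gap is step (b); steps (a) and (c) are fixable with more care. Your closing remark about Girsanov failing for unequal drifts because of the local-time term in the Radon--Nikodym density is correct and a good observation.
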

\begin{proof}
We follow the proof of Proposition~8 in~\cite{War07}. The strategy is to show that the transition density satisfies three equations, the backwards equation, boundary condition and initial condition, the latter one being contained in Lemma~\ref{lemInCond}. These equations are then used for It\^o's formula to prove that it indeed is the transition density.

We start with the backwards equation and boundary condition:
\begin{align}
	\frac{\partial r_{t}}{\partial{t}}&=\sum_{n=1}^N\left(\frac{1}{2}\frac{\partial^2}{\partial\zeta_n^2}+\mu_n\frac{\partial }{\partial\zeta_n}\right)r_{t}.\label{eqBackwards}\\
	\frac{\partial r_{t}}{\partial\zeta_i}&=0,\qquad\text{whenever } \zeta_i=\zeta_{i-1},\ 2\leq i\leq N\label{eqBoundCond}
\end{align}
To see \eqref{eqBoundCond}, move the prefactor $e^{-\mu_{i}\zeta_{i}}$ inside the integral in the \mbox{$(N+1-i)$-}th row of the determinant and notice that the differential operator transforms $F_{k,l}$ into $-F_{k+1,l}$.
 Consequently, $\zeta_i=\zeta_{i-1}$ implies the $(N+1-i)$-th being the negative of the $(N+2-i)$-th row.
\eqref{eqBackwards} can be obtained by the computation
\begin{equation}
	\frac{\partial r_{t}}{\partial{t}}=\frac{1}{2}\sum_{n=1}^N\left(-\mu_n^2+e^{-\mu_n\zeta_n}\frac{\partial^2 }{\partial\zeta_n^2}e^{\mu_n\zeta_n}\right)r_{t}.
\end{equation}

Let \mbox{$f\colon W^N\rightarrow\R$} be a $C^\infty$ function, whose support is compact and has a distance of at least some $\e>0$ to the boundary of $W^N$. Define a function $F\colon(0,\infty)\times W^N\rightarrow\R$ as
\begin{equation}
	F({t},\vec{\zeta}\,)=\int_{W^N}r_{t}(\vec{\zeta},\vec{\xi})f(\vec{\xi})\,\D \vec{\xi}.
\end{equation}
The previous identities \eqref{eqBoundCond} and \eqref{eqBackwards} carry over to the function $F$ in the form of:
\begin{align}
	\frac{\partial F}{\partial\zeta_i}&=0,\qquad\text{for } \zeta_i=\zeta_{i-1},\ 2\leq i\leq N\label{eqBoundCond2}\\
	\frac{\partial F}{\partial{t}}&=\sum_{n=1}^N\left(\frac{1}{2}\frac{\partial^2}{\partial\zeta_n^2}+\mu_n\frac{\partial }{\partial\zeta_n}\right)F.\label{eqBackwards2}
\end{align}
Our processes satisfy $x_n({t})=\zeta_n+\mu_n{t}+B_n({t})+L^n({t})$, where $B_n$ are independent Brownian motions, $L^1\equiv0$ and $L^n$, $n\geq2$, are continuous non-decreasing processes increasing only when $x_n({t})=x_{n-1}({t})$. In fact, $L^n$ is twice the semimartingale local time at zero of $x_n-x_{n-1}$. Now fix some $\e>0$, $T>0$, define a process $F(T+\e-{t},\vec{x}({t}))$ for ${t}\in[0,T]$ and apply It\^o's formula:
\begin{equation}\label{eqIto}\begin{aligned}
	F\left(T+\e-{t},\vec{x}({t})\right)&=F\left(T+\e,\vec{x}(0)\right)+\int_0^{t}-\frac{\partial}{\partial s}F\left(T+\e-s,\vec{x}(s)\right)\D s
	\\&+\sum_{n=1}^N\int_0^{t}\frac{\partial}{\partial\zeta_n}F\left(T+\e-s,\vec{x}(s)\right)\D x_n(s)
	\\&+\frac{1}{2}\sum_{m,n=1}^N\int_0^{t}\frac{\partial^2}{\partial\zeta_m\partial\zeta_n}F\left(T+\e-s,\vec{x}(s)\right)\D \left\langle x_m(s),x_n(s)\right\rangle.
\end{aligned}\end{equation}
From the definition it follows that $\D x_n(t)=\mu_n\D t +\D B_n(t)+\D L^n(t)$ and
\begin{equation}
	\D \left\langle x_m(t),x_n(t)\right\rangle=\D \left\langle B_m(t),B_n(t)\right\rangle=\delta_{m,n}\D t,
\end{equation}
because continuous functions of finite variation do not contribute to the quadratic variation. Inserting the differentials, by \eqref{eqBackwards2} the integrals with respect to $\D s$ integrals cancel, which results in:
\begin{equation}\begin{aligned}
	\eqref{eqIto}&=F\left(T+\e,\vec{x}(0)\right)+\sum_{n=1}^N\int_0^{t}\frac{\partial}{\partial\zeta_n}F\left(T+\e-s,\vec{x}(s)\right)\D B_n(s)
	\\&\quad+\sum_{n=1}^N\int_0^{t}\frac{\partial}{\partial\zeta_n}F\left(T+\e-s,\vec{x}(s)\right)\D L^n(s).
\end{aligned}\end{equation}
Since the measure $\D L^n(t)$ is supported on $\{x_n(t)=x_{n-1}(t)\}$, where the spatial derivative of $F$ is zero (see (\ref{eqBoundCond2})), the last term vanishes, too. So $F\left(T+\e-{t},\vec{x}({t})\right)$ is a local martingale and, being bounded, even a true martingale. In particular its expectation is constant, i.e.:
\begin{equation}
	F(T+\e,\vec{\zeta}\,)=\E\left[F\left(T+\e,\vec{x}(0)\right)\right]=\E\left[F\left(\e,\vec{x}(T)\right)\right].
\end{equation}
Applying Lemma~\ref{lemInCond} we can take the limit $\e\to0$, leading to
\begin{equation}
	F(T,\vec{\zeta}\,)=\E\left[f\left(\vec{x}(T)\right)\right].
\end{equation}
Because of the assumptions we made on $f$ it is still possible that the distribution of $\vec{x}(T)$ has positive measure on the boundary. We thus have to show that $r_{t}(\vec{\zeta},\vec{\xi})$ is normalized over the interiour of the Weyl chamber:

Start by integrating \eqref{eqrtau} over $\xi_N\in[\xi_{N-1},\infty)$. Pull the prefactor indexed by $n=N$ as well as the integration inside the $l=1$ column of the determinant. The $(k,1)$ entry is then given by:
\begin{equation}
 \begin{aligned}
  &e^{-\mu_N\zeta_N-{t}\mu_N^2/2}\int^{\infty}_{\xi_{N-1}}\D\xi_N e^{\mu_N\xi_N}F_{k,1}(\xi_N-\zeta_{N+1-k},{t})\\
  &\quad=e^{-\mu_N\zeta_N-{t}\mu_N^2/2}  e^{\mu_N x}F_{k,2}(x-\zeta_{N+1-k},{t})\Big|^{x=\infty}_{x=\xi_{N-1}}.
 \end{aligned}
\end{equation}
The contribution from $x=\xi_{N-1}$ is a constant multiple of the second column and thus cancels out. The remaining terms are zero for $k\geq2$, since all these functions $F_{k,2}$ have Gaussian decay. The only non-vanishing term comes from $k=1$ and returns exactly $1$ by an elementary residue calculation.

The determinant can thus be reduced to the index set $2\leq k,l\leq N$. Successively carrying out the integrations of the remaining variables in the same way, we arrive at the claimed normalization. This concludes the proof.
\end{proof}
\begin{lem}\label{lemInCond}
For fixed $\vec{\zeta}\in W^N$, the transition density $r_{t}(\vec{\zeta},\vec{\xi})$ as given by \eqref{eqrtau}, satisfies
\begin{equation}
	\lim_{{t}\to0}\int_{W^N}r_{t}(\vec{\zeta},\vec{\xi})f(\xi)\,\D \vec{\xi}=f(\vec{\zeta})
\end{equation}
for any $C^\infty$ function \mbox{$f\colon W^N\rightarrow\R$}, whose support is compact and has a distance of at least some $\e>0$ to the boundary of $W^N$.
\end{lem}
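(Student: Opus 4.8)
The plan is to verify that $r_t(\vec\zeta,\vec\xi)$ from \eqref{eqrtau}, viewed as a kernel acting on test functions concentrated in the interior of $W^N$, converges to the delta mass at $\vec\zeta$. The cleanest route is to factor the problem into a product over the individual particle labels plus a determinantal correction, and to exploit that each $F_{k,l}(\xi,t)$ is, up to polynomial-in-$w$ weights, a Gaussian heat kernel whose short-time behavior is well understood. First I would rewrite the prefactor $\prod_n e^{\mu_n(\xi_n-\zeta_n)-t\mu_n^2/2}$ and observe that as $t\to 0$ it tends to $\prod_n e^{\mu_n(\xi_n-\zeta_n)}$ uniformly on the compact support of $f$, so it contributes nothing singular and can essentially be pulled out; the entire delta-forming mechanism must come from the determinant $\det[F_{k,l}(\xi_{N+1-l}-\zeta_{N+1-k},t)]$.

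Next I would analyze $F_{k,l}$. Shifting the contour to $\I\R$ (permissible away from the drift poles, after absorbing the drifts), one sees $F_{l,l}(\xi,t) = \frac{1}{2\pi\I}\int_{\I\R}e^{tw^2/2+\xi w}\,\D w = \frac{1}{\sqrt{2\pi t}}e^{-\xi^2/(2t)}$, the standard heat kernel, while for $k>l$ the extra polynomial factor $\prod_{i=l}^{k-1}(w+\mu_{N+1-i})$ produces derivatives of the heat kernel (hence lower-order in the short-time scaling), and for $k<l$ one gets an antiderivative-type object that stays bounded. The key structural point, exactly as in Warren's proof, is that the determinant is dominated by its diagonal: the product $\prod_l F_{l,l}(\xi_{N+1-l}-\zeta_{N+1-l},t) = \prod_l \frac{1}{\sqrt{2\pi t}}e^{-(\xi_{N+1-l}-\zeta_{N+1-l})^2/(2t)}$ is precisely the $N$-fold heat kernel, which converges weakly to $\prod_l \delta(\xi_{N+1-l}-\zeta_{N+1-l})$, i.e.\ to $\delta(\vec\xi-\vec\zeta)$. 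One then shows the off-diagonal terms of the determinant expansion are subdominant: each off-diagonal entry $F_{k,l}$ with $k\neq l$, when integrated against $f$, contributes a factor that vanishes relative to the diagonal as $t\to 0$, because either it carries fewer singular Gaussian factors or the mismatched arguments $\xi_{N+1-l}-\zeta_{N+1-k}$ with $k\neq l$ keep the relevant exponent bounded away from its singular locus on $\supp f$ (which sits at positive distance $\e$ from $\partial W^N$, so all coordinate gaps are bounded below).

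Concretely, I would expand $\det[F_{k,l}] = \sum_{\sigma\in S_N}\sgn(\sigma)\prod_l F_{\sigma(l),l}$, isolate $\sigma=\mathrm{id}$, and bound each $\sigma\neq\mathrm{id}$ term: on $\supp f$, for any $\sigma\neq\mathrm{id}$ there is some $l$ with $\sigma(l)\neq l$, and then grouping the $w$-integrals appropriately shows $\prod_l F_{\sigma(l),l}$ is $O(t^{-(N-1)/2})$ times a quantity that stays bounded while also being concentrated near $\vec\xi=\vec\zeta$ — strictly smaller order than the $O(t^{-N/2})$ scale carried by the identity term, so after integrating against the bounded compactly supported $f$ and using standard Gaussian estimates the non-identity contributions vanish in the limit. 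Combining, $\int_{W^N} r_t(\vec\zeta,\vec\xi)f(\vec\xi)\,\D\vec\xi \to \big(\prod_n e^{\mu_n\cdot 0}\big)f(\vec\zeta) = f(\vec\zeta)$.

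The main obstacle is the rigorous control of the off-diagonal permutation terms: one must show they are genuinely of lower order after integration, which requires being careful about the contour manipulations for $k<l$ (where $F_{k,l}$ is not a nice probability density but rather an oscillatory/antiderivative object) and about uniformity of all Gaussian tail bounds over $\supp f$. The fact that $f$ is supported at distance $\ge\e$ from $\partial W^N$ is exactly what makes this uniform — all the coordinate differences $\xi_i-\xi_j$ are bounded below there — so I would lean on that hypothesis heavily. Everything else (the prefactor limit, the diagonal heat-kernel approximation to the identity) is standard; the proof therefore reduces, as Warren's does, to this dominance-of-the-diagonal estimate.
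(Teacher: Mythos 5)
Your outline correctly identifies the structural plan (expand the determinant over permutations, identify the diagonal term with the $N$-fold heat kernel, kill the off-diagonal terms using the $\e$-distance of $\supp f$ from $\partial W^N$), but the specific mechanism you propose for killing the off-diagonal terms is not right, and the technical subtleties that actually drive the paper's proof are missing.

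First, the scaling claim is incorrect. The change of variables $w\mapsto v/\sqrt t$ in \eqref{eqFkl} shows $F_{k,l}(\cdot,t)\sim t^{(l-k-1)/2}$, so for a permutation $\sigma$ the raw order of $\prod_k F_{k,\sigma(k)}$ is $t^{\sum_k(\sigma(k)-k-1)/2}=t^{-N/2}$, \emph{independent} of $\sigma$, because $\sum_k(\sigma(k)-k)=0$. Off-diagonal terms are \emph{not} $O(t^{-(N-1)/2})$; they carry exactly the same singular power as the identity. The reason they vanish after integration against $f$ is entirely the misalignment of arguments that you mention in passing, not a loss of order in $t$.

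Second, the misalignment argument has an asymmetry you do not address. For $k<l$, the function $F_{k,l}$ is an antiderivative-type object and \emph{does not} decay as $t\to0$ for positive arguments — only for negative ones (this is precisely Lemma~\ref{lemFkl}: \eqref{eqFkllimn} holds for all $k,l$, while \eqref{eqFkllimp} holds only for $l\leq k$). For positive arguments, $F_{k,l}$ with $k<l$ converges to a nonzero polynomial. So you cannot simply say "the Gaussian peak is outside $\supp f$, hence the term dies": that works for the $k\geq l$ factors, but a $k<l$ factor with a positive misaligned argument does not vanish. The paper handles this by observing that any $\sigma\neq\mathrm{id}$ admits $i<j$ with $\sigma(j)\leq i<\sigma(i)$, then splitting $\supp f$ into $\widetilde W_1$ (where $\xi_{N+1-\sigma(i)}-\zeta_{N+1-i}<-\e/2$, so $F_{i,\sigma(i)}$ with $i<\sigma(i)$ dies via the negative-side estimate) and $\widetilde W_2$ (where the argument of $F_{j,\sigma(j)}$ with $\sigma(j)\leq j$ is $>\e/2$, so the derivative-type factor dies via the positive-side estimate), and showing the complement lies outside $\supp f$. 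This two-sided cover is essential and is what the $\e$-distance hypothesis is really buying you; your proposal leans on the hypothesis but doesn't actually use it in this structured way.

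Third, you omit the integration-by-parts step. For factors $F_{k,\sigma(k)}$ with $k>\sigma(k)$ (other than the designated $i$ or $j$), one writes them as iterated derivatives of the Gaussian $F_{k,k}$ conjugated by exponentials, and integrates by parts to move those derivatives onto $f$, producing a new bounded $C^\infty$ function $\tilde f$. Only after that can one integrate out the Gaussian variables and reduce the whole off-diagonal contribution to a single integral of $|F_{i,\sigma(i)}|$ (or $|F_{j,\sigma(j)}|$) over a half-line bounded away from $0$, which goes to zero by Lemma~\ref{lemFkl}. Without this step the "grouping the $w$-integrals appropriately" is not well-defined, and there is no clean way to uniformly control the product of factors with mixed singular scalings. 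So while the high-level intuition is in the right direction, as written the proposal contains a wrong claim (lower order in $t$ for $\sigma\neq\mathrm{id}$) and is missing the two ingredients that make the argument rigorous: the asymmetric decay of $F_{k,l}$ for $k<l$ and the integration-by-parts reduction.
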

\begin{proof}
At first consider the contribution to the determinant in \eqref{eqrtau} coming from the diagonal. For $k=l$ the products in \eqref{eqFkl} cancel out, so we are left with a simple gaussian density. This contribution is thus given by the multidimensional heat kernel, which is well known to converge to the delta distribution. The remaining task is to prove that for all other permutations the integral vanishes in the limit.

Let $\sigma$ be such a permutation. Its contribution is
\begin{equation}\label{eqL2.2a}
	\int_{\R^N}\D\vec{\xi}\,f(\vec{\xi})\prod_{k=1}^NF_{k,\sigma(k)}(\xi_{N+1-\sigma(k)}-\zeta_{N+1-k},{t}),
\end{equation}
where we have extended the domain of $f$ to $\R^N$, being identically zero outside of $W_N$. We also omitted the prefactor since it is bounded for $\xi$ in the compact domain of $f$.

There exist $i<j$ with $\sigma(j)\leq i<\sigma(i)$. Let
\begin{equation}\begin{aligned}
	\widetilde{W}_1&=\{\vec{\xi}\in \R^N\colon \xi_{N+1-\sigma(i)}-\zeta_{N+1-i}<-\e/2\}\\
	\widetilde{W}_2&=\{\vec{\xi}\in \R^N\colon \xi_{N+1-\sigma(j)}-\zeta_{N+1-j}>\e/2\}.
\end{aligned}\end{equation}
It is enough to restrict the area of integration to these two sets, since on the complement of $\widetilde{W}_1\cup\widetilde{W}_2$, we have
\begin{equation}
	\xi_{N+1-\sigma(i)}\geq\zeta_{N+1-i}-\e/2\geq\zeta_{N+1-j}-\e/2\geq\xi_{N+1-\sigma(j)}-\e,
\end{equation}
so we are not inside the support of $f$.

We start with the contribution coming from $\widetilde{W}_1$. Notice that by
\begin{equation}
	F_{k,l}(\xi,{t})=e^{-\xi\mu_{N+1-l}}\frac{\D}{\D\xi}\Big(e^{\xi\mu_{N+1-l}}F_{k,l+1}(\xi,{t})\Big),
\end{equation}
all functions $F_{k,l}$ with $k>l$ can be written as iterated derivatives of $F_{k,k}$ and some exponential functions. For each $k\neq i$ with $k>\sigma(k)$ we write $F_{k,\sigma(k)}$ in this way and then use partial integration to move the exponential factors and derivatives onto $f$. The result is
\begin{equation}\label{eq2.20}
	\int_{\widetilde{W}_1}\D\vec{\xi}\,\tilde{f}(\vec{\xi})F_{i,\sigma(i)}(\xi_{N+1-\sigma(i)}-\zeta_{N+1-i},{t})\prod_{k\neq i}F_{k,\max\{k,\sigma(k)\}}(\xi_{N+1-\sigma(k)}-\zeta_{N+1-k},{t})
\end{equation}
for a new $C^\infty$ function $\tilde{f}$, which has compact support and is therefore bounded, too. We can bound the contribution by first integrating the variables $\xi_{N+1-\sigma(k)}$ with $k\geq\sigma(k)$, $k\neq i$, where we have a gaussian factor $F_{k,k}$:
\begin{equation}\begin{aligned}
	\left|\eqref{eq2.20}\right|\leq\sup_{\vec{x}}\big|\tilde{f}(\vec{x})\big|\int_{\widetilde{W}_1'}&\left|F_{i,\sigma(i)}(\xi_{N+1-\sigma(i)}-\zeta_{N+1-i},{t})\right|\D\xi_{N+1-\sigma(i)}\,\\&\prod_{k< \sigma(k),k\neq i}\left|F_{k,\sigma(k)}(\xi_{N+1-\sigma(k)}-\zeta_{N+1-k},{t})\right|\D\xi_{N+1-\sigma(k)}.
\end{aligned}\end{equation}
$\widetilde{W}_1'$ consists of the yet to be integrated $\xi$-components that are contained in the set \mbox{$\widetilde{W}_1\cap\supp(\widetilde{f})$}. In particular, $\widetilde{W}_1'$ is compact, so the functions $F_{k,\sigma(k)}$, $k\neq i$, are bounded uniformly in ${t}$ by Lemma~\ref{lemFkl}. The remaining integral gives:
\begin{equation}
	\left|\eqref{eq2.20}\right|\leq\const\int^{-\e/2}_{-\infty}\left|F_{i,\sigma(i)}(x,{t})\right|\D x,
\end{equation}
which converges to $0$ as ${t}\to0$ by \eqref{eqFkllimn}.

The contribution of $\widetilde{W}_2$ can be bounded analogously with $j$ playing the role of $i$. The final convergence is then given by \eqref{eqFkllimp}.
\end{proof}

\begin{lem}\label{lemFkl}
For each $\e>0$ we have
\begin{align}
	\lim_{{t}\to0}\int_\e^\infty\left|F_{k,l}(x,{t})\right|\D x&=0, & 1&\leq l\leq k\leq N,\label{eqFkllimp}\\
	\lim_{{t}\to0}\int^{-\e}_{-\infty}\left|F_{k,l}(x,{t})\right|\D x&=0, & 1&\leq k,l\leq N.\label{eqFkllimn}
\end{align}
In addition, for each $1\leq k<l\leq N$ the function $F_{k,l}(x,{t})$ is bounded uniformly in ${t}$ on compact sets.
\end{lem}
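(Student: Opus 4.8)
The plan is to analyze the contour-integral representation \eqref{eqFkl} of $F_{k,l}(x,t)$ directly, by deforming the vertical contour $\I\R+\mu$ and extracting the small-$t$ asymptotics. First I would treat the three cases separately. For \eqref{eqFkllimn} (decay of the $L^1$-mass on $(-\infty,-\e)$, all $k,l$), the key point is that the Gaussian factor $e^{tw^2/2}$ together with $e^{xw}$ forces a saddle point at $w=-x/t$, which for $x<-\e<0$ lies far to the right; the rational factor $\prod_{i=1}^{k-1}(w+\mu_{N+1-i})/\prod_{i=1}^{l-1}(w+\mu_{N+1-i})$ is at most polynomial in $w$, so shifting the contour to pass through (or near) the relevant vertical line and estimating gives a bound of the form $|F_{k,l}(x,t)|\leq C(t)\,t^{-N}e^{-x^2/(2t)}\cdot(\text{polynomial in }|x|/t)$ for $x<-\e$, whose integral over $(-\infty,-\e)$ is $O(e^{-\e^2/(4t)})\to 0$. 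For \eqref{eqFkllimp} in the case $k=l$, $F_{k,k}$ is literally the Gaussian kernel $\tfrac{1}{\sqrt{2\pi t}}e^{-x^2/(2t)}$, whose mass on $(\e,\infty)$ tends to $0$; for $l<k$ one uses the identity already recorded in the proof of Lemma~\ref{lemInCond}, namely
\begin{equation}
F_{k,l}(x,t)=e^{-x\mu_{N+1-l}}\frac{\D}{\D x}\Big(e^{x\mu_{N+1-l}}F_{k,l+1}(x,t)\Big),
\end{equation}
to write $F_{k,l}$ (for $l<k$) as a finite combination of derivatives of the Gaussian $F_{k,k}$ multiplied by bounded exponential factors on the compact-in-$x$ region of interest; each such term is again $t^{-O(1)}e^{-x^2/(2t)}\times(\text{poly})$, so its $L^1$-mass on $(\e,\infty)$ is exponentially small in $1/t$.

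For the uniform boundedness on compact $x$-sets when $k<l$, the mechanism is different: now the rational factor has more poles in the denominator than zeros in the numerator, so the integrand decays like $|w|^{k-l}$ as $|w|\to\infty$ along the contour, hence $\int_{\I\R+\mu}$ converges absolutely and one may deform the contour to a fixed vertical line $\Re w=\mu'$ (with $\mu'>-\min_i\mu_i$) independent of $t$. On this fixed contour, for $x$ in a compact set $|e^{xw}|=e^{x\mu'}$ is bounded, $|e^{tw^2/2}|=e^{t(\mu'^2-(\Im w)^2)/2}\le e^{t\mu'^2/2}$ is bounded uniformly for $t\in(0,T]$, and the absolutely convergent tail $\int |w|^{k-l}\,|\D w|$ is finite; this yields a bound on $|F_{k,l}(x,t)|$ uniform in both $x$ (on compacts) and $t\in(0,T]$.

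The main obstacle, and the only place requiring care, is making the contour deformation in the case $k\ge l$ rigorous while tracking the $t$-dependence of the constants: one must either (i) move the contour $\I\R+\mu$ to the saddle line $\Re w=-x/t$ and control the resulting Gaussian integral along that line, picking up no residues because all poles $w=-\mu_{N+1-i}$ lie to the left of $-\mu$, or (ii) keep the contour fixed at $\Re w=\mu$ but then the naive bound $|e^{tw^2/2+xw}|\le e^{t\mu^2/2+x\mu}$ loses the crucial Gaussian decay in $x$. Approach (i) is the right one; the bookkeeping is to complete the square $tw^2/2+xw = \tfrac{t}{2}(w+x/t)^2 - x^2/(2t)$, substitute $w=-x/t+\I u/\sqrt t$, and bound the resulting integral $\int_{\R}|{-x/t+\I u/\sqrt t}|^{|k-l|}e^{-u^2/2}\,\D u$ by a polynomial in $(|x|/t+1)/\sqrt t$, which after multiplication by $e^{-x^2/(2t)}$ and integration in $x$ over $(-\infty,-\e)$ (resp. $(\e,\infty)$) is dominated by $e^{-\e^2/(4t)}$ times a power of $1/t$, and hence vanishes as $t\to0$. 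Once the $t$-dependence is correctly extracted this way, \eqref{eqFkllimp} and \eqref{eqFkllimn} follow, and together with the $k<l$ boundedness argument the lemma is complete.
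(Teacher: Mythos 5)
Your strategy for \eqref{eqFkllimn} and \eqref{eqFkllimp} — shift the contour to the saddle line $\Re w=-x/t$, complete the square, and integrate the resulting Gaussian — is sound, and it is a valid (if slightly more laborious) alternative to what the paper does. The paper instead performs the dilation $w=v/\sqrt{t}$, keeps a fixed vertical contour in $v$, and bounds $|e^{xv/\sqrt{t}}|=e^{x\mu/\sqrt{t}}$; integrating this one-sided exponential over $x<-\e$ (or, after pushing $\mu$ to be negative when $l\leq k$, over $x>\e$) gives a factor $e^{-\e|\mu|/\sqrt{t}}$ that kills the polynomial prefactor $\sqrt{t}^{l-k-1}g(\cdot)$. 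That route avoids the complete-the-square bookkeeping and treats both signs of $x$ and all $(k,l)$ with one estimate.

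There is, however, a genuine gap in your argument for the uniform boundedness claim ($k<l$). You bound $|e^{tw^2/2}|\le e^{t\mu'^2/2}$, thereby discarding the Gaussian decay $e^{-t(\Im w)^2/2}$, and then assert that the remaining integral is controlled because $\int_{\I\R+\mu'}|w|^{k-l}\,|\D w|<\infty$. This is false in the borderline case $l=k+1$: the rational factor decays only like $|w|^{-1}$, and $\int|w|^{-1}\,|\D w|$ diverges logarithmically on a vertical line. Once the Gaussian is dropped, there is nothing left to make the integral converge, let alone uniformly in $t$. (Indeed, if one keeps the Gaussian, $\int_{|v|>1}\frac{e^{-tv^2/2}}{|v|}\,\D v\sim\tfrac12\log(1/t)$, so the naive absolute bound actually blows up as $t\to0$, even though the lemma is true — the boundedness for $l=k+1$ is a cancellation/convolution phenomenon, not an absolute-convergence one.) The paper's rescaling $w=v/\sqrt{t}$ sidesteps this: in the $v$-variable the Gaussian $e^{\Re(v^2/2)}$ is $t$-independent and guarantees absolute convergence regardless of how slowly the rational factor decays, while the overall prefactor $\sqrt{t}^{\,l-k-1}\le 1$ for $k<l$ and $t\le1$. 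You should replace your absolute-convergence claim with this rescaling (or with an explicit convolution bound, writing $F_{k,k+1}$ as a heat kernel convolved with a one-sided exponential) to make the $l=k+1$ case correct.
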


\begin{proof}
Let $x<-\e$, and choose a $\mu$ which is positive. By a transformation of variable we have
\begin{equation}\begin{aligned}\label{eqFklbound}
	\left|F_{k,l}(x,{t})\right|&=\bigg|\frac{1}{2\pi\I}\int_{\I\R+\mu}\D w\,e^{{t} w^2/2+x w}\frac{\prod_{i=1}^{k-1}(w+\mu_{N+1-i})}{\prod_{i=1}^{l-1}(w+\mu_{N+1-i})}\bigg|\\
	&=\bigg|\frac{1}{2\pi\I}\int_{\I\R+\mu}\D v\,\sqrt{{t}}^{l-k-1}e^{ v^2/2+x v/\sqrt{{t}}}\frac{\prod_{i=1}^{k-1}(v+\sqrt{{t}}\mu_{N+1-i})}{\prod_{i=1}^{l-1}(v+\sqrt{{t}}\mu_{N+1-i})}\bigg|\\
	&\leq(2\pi)^{-1}\sqrt{{t}}^{l-k-1}e^{x\mu/\sqrt{{t}}}\int_{\I\R+\mu}|\D v|\,e^{ \Re(v^2/2)}g(|v|),
\end{aligned}\end{equation}
where $g(|v|)$ denotes a bound on the fraction part of the integrand, which grows at most polynomial in $|v|$. Convergence of the integral is ensured by the exponential term, so integrating and taking the limit ${t}\to0$ gives \eqref{eqFkllimn}. To see \eqref{eqFkllimp}, notice that by $l\leq k$ the integrand has no poles, so we can shift the contour to the right, such that $\mu$ is negative, and obtain the convergence analogously.

We are left to prove uniform boundedness of $F_{k,l}$ on compact sets for $k<l$. For $x\leq0$ we can use \eqref{eqFklbound} to get
\begin{equation}
	\left|F_{k,l}(x,{t})\right|\leq(2\pi)^{-1} \int_{\I\R+\mu}|\D v|\,e^{ \Re(v^2/2)}g(|v|)
\end{equation}
for ${t}\leq1$. In the case $x>0$ we shift the contour to negative $\mu$, thus obtaining contributions from residua as well as from the remaining integral. The latter can be bounded as before, while the residua are well-behaved functions, which converge uniformly on compact sets.
\end{proof}

\subsection{Transition density for random initial positions}
To obtain a representation as a signed determinantal point process we have to introduce a new measure. This measure $\Pb_+$ coincides with $\Pb$ on the sigma algebra which is generated by $\zeta_{k+1}-\zeta_{k}$, $k\in\Z$, and the driving Brownian motions $B_k$, $k\in\Z$. But under $\Pb_+$, $\zeta_0$ is a random variable with an exponential distribution instead of being fixed at zero. Formally, \mbox{$\Pb_+=\Pb\otimes\Pb_{\zeta_0}$}, with $\Pb_{\zeta_0}$ giving rise to \mbox{$\zeta_0\sim\exp(\lambda-\rho)$}, so that $\Pb$ is the result of conditioning $\Pb_+$ on the event $\{\zeta_0=0\}$. This new measure satisfies a determinantal formula for the joint distribution at a fixed time.

\begin{prop}\label{PropRandomIC}
Under the modified initial condition specified by $\Pb_+$, the joint density of the positions of the reflected Brownian motions \mbox{$\{x_n({t}),0\leq n\leq N-1\}$} is given by
	\begin{equation}\begin{aligned}
	\Pb_+(\vec{x}({t})\in \D\vec{\xi})=(\lambda-\rho)\lambda^{N-1}e^{-{t}\rho^2/2+\rho \xi_0}
	 \det_{1\leq k,l\leq N}[\widetilde{F}_{k-l}(\xi_{N-l},{t})]\,\D\vec{\xi}
\end{aligned}\label{Prnd}\end{equation}
with
\begin{equation}
	\widetilde{F}_{k}(\xi,{t}):=\frac{1}{2\pi\I}\int_{\I\R+\e}dw\,\frac{e^{{t} w^2/2+\xi w}w^k}{w+\lambda}.
\end{equation}
\end{prop}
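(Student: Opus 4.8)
The plan is to derive \eqref{Prnd} by integrating the fixed-initial-position transition density of Proposition~\ref{PropWarren} against the law of the initial configuration under $\Pb_+$, and then to evaluate the resulting integral over the Weyl chamber. I relabel the $N$ particles $x_0,\dots,x_{N-1}$ as $1,\dots,N$ in Warren's convention, so that during the computation $\zeta_j,\xi_j$ stand for $\zeta_{j-1},\xi_{j-1}$ of \eqref{statModel} and \eqref{Prnd}, and translate back at the end. First I specialize Proposition~\ref{PropWarren} to the drift vector $\vec\mu=(\rho,0,\dots,0)$, as dictated by Remark~\ref{RemarkBurke}. Then in \eqref{eqFkl} every factor $w+\mu_{N+1-i}$ with $i<N$ is just $w$, and the only nonzero drift $\mu_1=\rho$ would appear at $i=N$, which never happens since $k,l\le N$; hence $F_{k,l}(\xi,t)=G_{k-l}(\xi,t)$ with $G_m(\xi,t):=\tfrac1{2\pi\I}\int_{\I\R+\e}\D w\,e^{tw^2/2+\xi w}w^m$, while $\widetilde F_m(\xi,t)=\tfrac1{2\pi\I}\int_{\I\R+\e}\D w\,e^{tw^2/2+\xi w}\tfrac{w^m}{w+\lambda}$. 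I keep a vertical contour $\I\R+\e$ with $\e>0$ fixed once and for all, to the right of the only poles ever present, namely $w=0$ (from negative powers of $w$) and $w=-\lambda$.

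Under $\Pb_+$ the vector $\vec\zeta$ has density $(\lambda-\rho)e^{-(\lambda-\rho)\zeta_1}\prod_{k=2}^N\lambda e^{-\lambda(\zeta_k-\zeta_{k-1})}$ on $W^N\cap\{\zeta_1\ge0\}$ (the constraint $\zeta_1\ge0$ being forced by the $\exp(\lambda-\rho)$ law of $\zeta_1=\zeta_0$). Multiplying this by $r_t(\vec\zeta,\vec\xi)$ from \eqref{eqrtau} and telescoping via $\sum_{k=2}^N(\zeta_k-\zeta_{k-1})=\zeta_N-\zeta_1$, all exponential $\vec\zeta$-dependence outside the determinant collapses to $(\lambda-\rho)\lambda^{N-1}e^{-t\rho^2/2+\rho\xi_1}e^{-\lambda\zeta_N}$, so it suffices to prove the kernel identity
\begin{equation}
\int_{W^N\cap\{\zeta_1\ge0\}}e^{-\lambda\zeta_N}\det_{1\le k,l\le N}\!\big[G_{k-l}(\xi_{N+1-l}-\zeta_{N+1-k},t)\big]\,\D\vec\zeta=\det_{1\le k,l\le N}\!\big[\widetilde F_{k-l}(\xi_{N+1-l},t)\big].
\end{equation}
Once this is established, reverting the labels ($\xi_j\to\xi_{j-1}$) turns the right-hand side into $\det[\widetilde F_{k-l}(\xi_{N-l},t)]$ and the prefactor into $(\lambda-\rho)\lambda^{N-1}e^{-t\rho^2/2+\rho\xi_0}$, giving \eqref{Prnd}.

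For the kernel identity I substitute $\eta_k:=\zeta_{N+1-k}$, $a_l:=\xi_{N+1-l}$, so that the domain becomes $0\le\eta_N\le\eta_{N-1}\le\dots\le\eta_1<\infty$, the weight is $e^{-\lambda\eta_1}$, and the $(k,l)$-entry $G_{k-l}(a_l-\eta_k,t)$ depends on $\vec\eta$ only through $\eta_k$, one variable per row. From the contour representation, $\int_0^b G_m(a-\eta,t)\,\D\eta=G_{m-1}(a,t)-G_{m-1}(a-b,t)$ and $\int_0^\infty e^{-\lambda\eta}G_m(a-\eta,t)\,\D\eta=\widetilde F_m(a,t)$, the Fubini interchanges and the splitting into two contour integrals being legitimate by the Gaussian decay of $e^{tw^2/2}$ on $\I\R+\e$ and by $\Re(\lambda+w)=\lambda+\e>0$. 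I integrate the variables in the order $\eta_N,\eta_{N-1},\dots,\eta_2,\eta_1$. Integrating $\eta_k$ over $[0,\eta_{k-1}]$ (for $k=N$ down to $2$) replaces row $k$ by $\big(G_{k-1-l}(a_l,t)-G_{k-1-l}(a_l-\eta_{k-1},t)\big)_l$, whose subtracted part equals exactly the (still original) row $k-1$, namely $\big(G_{(k-1)-l}(a_l-\eta_{k-1},t)\big)_l$; adding row $k-1$ to row $k$ hence leaves row $k$ equal to $\big(G_{k-1-l}(a_l,t)\big)_l$, independent of the remaining variables and untouched by later steps. Integrating the last variable $\eta_1$ over $[0,\infty)$ against $e^{-\lambda\eta_1}$ replaces row $1$ by $\big(\widetilde F_{1-l}(a_l,t)\big)_l$. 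Thus the integral equals the determinant whose first row is $\big(\widetilde F_{1-l}(a_l,t)\big)_l$ and whose $k$-th row, $k\ge2$, is $\big(G_{k-1-l}(a_l,t)\big)_l$.

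To match this with $\det[\widetilde F_{k-l}(a_l,t)]$, the identity $w^m(w+\lambda)=w^{m+1}+\lambda w^m$ gives $G_m=\widetilde F_{m+1}+\lambda\widetilde F_m$, so for $k\ge2$ the $k$-th row above equals $\widetilde R_k+\lambda\widetilde R_{k-1}$ with $\widetilde R_k:=\big(\widetilde F_{k-l}(a_l,t)\big)_l$, while row $1$ is already $\widetilde R_1$; processing rows top to bottom and subtracting $\lambda$ times the already-reduced preceding row turns every row into $\widetilde R_k$ without changing the determinant, which is therefore $\det_{1\le k,l\le N}[\widetilde F_{k-l}(a_l,t)]$, completing the identity. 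I expect essentially all the work to be the bookkeeping in the iterated integration --- checking that the boundary term at the \emph{upper} limit of each one-dimensional integral is \emph{precisely} (not merely proportional to) the adjacent row, so that it cancels after one row operation, and that already-reduced rows are not disturbed by later steps --- together with the observation that exactly one application of $G_m=\widetilde F_{m+1}+\lambda\widetilde F_m$ per row repairs the mismatch between the $G$-rows and the desired $\widetilde F$-rows. One must also keep the contour at $\I\R+\e$, $\e>0$, throughout for the Fubini interchanges and contour splittings, and remember that under $\Pb_+$ the randomization of $\zeta_1=\zeta_0$ only enlarges the integration domain to $\{\zeta_1\ge0\}$ rather than pinning $\zeta_1=0$.
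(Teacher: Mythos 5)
Your proof is correct, but it takes a genuinely different route from the paper's. Both begin identically: specialize Proposition~\ref{PropWarren} to the drift $\vec\mu=(\rho,0,\dots,0)$ so that $F_{k,l}=G_{k-l}$ with $G_m(\xi,t)=\frac{1}{2\pi\I}\int_{\I\R+\e}\D w\,e^{tw^2/2+\xi w}w^m$, integrate $r_t(\vec\zeta,\vec\xi)$ against the initial density under $\Pb_+$, and collapse the exponential prefactors to $(\lambda-\rho)\lambda^{N-1}e^{-t\rho^2/2+\rho\xi_0}e^{-\lambda\zeta_{N-1}}$. From there the paper expands the determinant as a sum over permutations $\sigma$, performs the iterated exponential integrals in $\vec\zeta$ to produce the telescoping denominators $w_{\sigma(1)}+\dots+w_{\sigma(k)}+\lambda$, symmetrizes the integrand, and then invokes Lemma~\ref{lemDetIdent} --- a separate combinatorial identity proved by induction using the Vandermonde-type formula \eqref{eqMuir} --- to reassemble the permutation sum into the target determinant. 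You instead keep the determinant intact and integrate the initial positions one row at a time, from the innermost variable $\eta_N$ outward, exploiting that the $(k,l)$-entry depends on $\vec\zeta$ only through $\eta_k$: each finite integral over $[0,\eta_{k-1}]$ produces a boundary term that is exactly the adjacent row and is removed by a single row operation, the final half-line integral against $e^{-\lambda\eta_1}$ introduces the pole at $w=-\lambda$ and produces $\widetilde F$, and a last telescoping sweep with $G_m=\widetilde F_{m+1}+\lambda\widetilde F_m$ converts every remaining $G$-row into a $\widetilde F$-row. Your argument mirrors the normalization computation inside the proof of Proposition~\ref{PropWarren} (successive integration with boundary terms absorbed by elementary row or column operations), avoids the permutation expansion entirely, and makes Lemma~\ref{lemDetIdent} unnecessary, at the modest cost of more careful bookkeeping of which rows have been frozen at each stage; your identity $G_m=\widetilde F_{m+1}+\lambda\widetilde F_m$ plays exactly the structural role that \eqref{eqMuir} plays in the paper's induction.
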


For the related model, the totally asymmetric simple exclusion process, a formula similar to the one of Proposition~\ref{PropRandomIC} also exists~\cite{Bor08_privatecomm}. Here we provide a direct proof of it.

\begin{proof}[Proof of Proposition~\ref{PropRandomIC}]
The fixed time distribution can be obtained by integrating the transition density \eqref{transDens} over the initial condition. Denote by $p_+(\vec{\xi})$ the probability density of $\vec{x}(t)$, i.e., $\Pb_+(\vec{x}({t})\in \D\vec{\xi})=p_+(\vec{\xi})\D\vec{\xi}$. $p_+(\vec{x})$ equals
\begin{equation}\label{eqPrnd1}\begin{aligned}
	&\int_{W^N\cap\{\zeta_0>0\}}\hspace{-1cm}\D\vec{\zeta}\, e^{\rho(\xi_0-\zeta_0)-{t}\rho^2/2}(\lambda-\rho)\lambda^{N-1}e^{\rho\zeta_0}e^{-\lambda\zeta_{N-1}}
\det_{1\leq k,l\leq N}[F_{k,l}(\xi_{N-l}-\zeta_{N-k},{t})]\\
=&(\lambda-\rho)\lambda^{N-1}e^{-{t}\rho^2/2+\rho\xi_0}\int_{W^N\cap\{\zeta_0>0\}}\D\vec{\zeta}\, e^{\lambda\zeta_{N-1}} \\
	&\quad\times\det_{1\leq k,l\leq N}\left[\frac{1}{2\pi\I}\int_{\I\R+\mu}\D w_k\,e^{{t} w_k^2/2}e^{\xi_{N-l}w_k}e^{-\zeta_{N-k}w_k}w_k^{k-l}\right]\\
	=&(\lambda-\rho)\lambda^{N-1}e^{-{t}\rho^2/2+\rho\xi_0}\sum_{\sigma\in S_N}(-1)^{|\sigma|}\prod_{k=1}^N\int_{\I\R+\mu}\frac{\D w_k}{2\pi\I}\,e^{{t} w_k^2/2}e^{\xi_{N-\sigma(k)}w_k}w_k^{k-\sigma(k)}\\ &\quad\times\int^{\infty}_0d\zeta_0\dots\int^{\infty}_{\zeta_{N-2}}d\zeta_{N-1}\,e^{-\lambda\zeta_{N-1}}e^{-\zeta_{N-1}w_1}e^{-\zeta_{N-2}w_2}\dots e^{-\zeta_0w_N}\\
=&(\lambda-\rho)\lambda^{N-1}e^{-{t}\rho^2/2+\rho\xi_0}\sum_{\sigma\in S_N}(-1)^{|\sigma|}\prod_{k=1}^N\int_{\I\R+\mu}\frac{\D w_k}{2\pi\I}\,\frac{e^{{t} w_k^2/2}e^{\xi_{N-\sigma(k)}w_k}w_k^{k-\sigma(k)}}{w_1+\dots+w_k+\lambda}.
\end{aligned}\end{equation}
Since all $w_k$ are integrated over the same contour, we can replace $w_k$ by $w_{\sigma(k)}$:
\begin{equation}\label{eqPrnd2}\begin{aligned}
\eqref{eqPrnd1}&=(\lambda-\rho)\lambda^{N-1}e^{-{t}\rho^2/2+\rho\xi_0}\\
	&\quad\times\sum_{\sigma\in S_N}(-1)^{|\sigma|}\prod_{k=1}^N\int_{\I\R+\mu}\frac{\D w_k}{2\pi\I}\,\frac{e^{{t} w_k^2/2}e^{\xi_{N-\sigma(k)}w_{\sigma(k)}}w_{\sigma(k)}^{k-\sigma(k)}}{w_{\sigma(1)}+\dots+w_{\sigma(k)}+\lambda}\\ &=(\lambda-\rho)\lambda^{N-1}e^{-{t}\rho^2/2+\rho\xi_0}\prod_{k=1}^N\int_{\I\R+\mu}\frac{\D w_k}{2\pi\I}\,e^{{t} w_k^2/2}e^{\xi_{N-k}w_k}w_k^{-k}\\
&\quad\times\sum_{\sigma\in S_N}(-1)^{|\sigma|}\prod_{k=1}^N\frac{w_{\sigma(k)}^{k}}{w_{\sigma(1)}+\dots+w_{\sigma(k)}+\lambda}.
\end{aligned}\end{equation}
We apply Lemma~\ref{lemDetIdent} below to the sum and finally obtain
\begin{equation}\begin{aligned}
p_+(\vec{x})&=(\lambda-\rho)\lambda^{N-1}e^{-{t}\rho^2/2+\rho\xi_0}\prod_{k=1}^N\int_{\I\R+\mu}\frac{\D w_k}{2\pi\I}\,e^{{t} w_l^2/2}e^{\xi_{N-l}w_l}w_l^{-l} \det_{1\leq k,l\leq N}\left[\frac{w_l^k}{w_l+\lambda}\right]\\
&=(\lambda-\rho)\lambda^{N-1}e^{-{t}\rho^2/2+\rho\xi_0}\det_{1\leq k,l\leq N}\left[\widetilde{F}_{k-l}(\xi_{N-l},{t})\right].
\end{aligned}\end{equation}
\end{proof}

\begin{lem}\label{lemDetIdent}
Given $N\in\N$, $\lambda>0$ and $w_1,\dots,w_N\in\C\setminus\R_-$, the following identity holds:
\begin{equation}\label{eqDetIdent}
	\sum_{\sigma\in S_N}(-1)^{|\sigma|}\prod_{k=1}^N\frac{w_{\sigma(k)}^{k}}{w_{\sigma(1)}+\dots+w_{\sigma(k)}+\lambda}=\det_{1\leq k,l\leq N}\left[\frac{w_l^k}{w_l+\lambda}\right].
\end{equation}
\end{lem}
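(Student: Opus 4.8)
The plan is to prove \eqref{eqDetIdent} as an identity of rational functions in $w_1,\dots,w_N$ (so it suffices to check it on the open set where the $w_i$ are pairwise distinct and all partial sums $w_{\sigma(1)}+\dots+w_{\sigma(k)}+\lambda$ are nonzero) by induction on $N$. First I would put the right-hand side in closed form: pulling one factor $w_l$ out of the $l$-th column of $\det_{1\le k,l\le N}[w_l^k]$ and evaluating the resulting Vandermonde determinant gives
$$\det_{1\le k,l\le N}\Big[\tfrac{w_l^k}{w_l+\lambda}\Big]=\frac{\prod_{l=1}^N w_l}{\prod_{l=1}^N(w_l+\lambda)}\prod_{1\le i<j\le N}(w_j-w_i)=:R_N(w_1,\dots,w_N).$$
The base case $N=1$ is trivial: both sides equal $w_1/(w_1+\lambda)$.

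For the inductive step, write $L_N$ for the left-hand side and $\Sigma:=w_1+\dots+w_N$, and group the permutations $\sigma\in S_N$ by the value $m:=\sigma(N)$ in the last slot. In the $k=N$ factor the denominator is the full sum $\Sigma+\lambda$, independent of $\sigma$, and the numerator is $w_m^N$, so this factor pulls out of the sum. Restricting $\sigma$ to $\{1,\dots,N-1\}$ gives a bijection onto $\{1,\dots,N\}\setminus\{m\}$, hence a permutation $\tau\in S_{N-1}$ of the complementary variables; counting inversions of the word $(\sigma(1),\dots,\sigma(N))$ shows $(-1)^{|\sigma|}=(-1)^{N-m}(-1)^{|\tau|}$, and the remaining product is precisely the summand of $L_{N-1}$ evaluated at $(w_i)_{i\neq m}$. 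Therefore
$$L_N=\frac{1}{\Sigma+\lambda}\sum_{m=1}^N(-1)^{N-m}\,w_m^{N}\,L_{N-1}\big((w_i)_{i\neq m}\big).$$

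Now I would substitute the induction hypothesis $L_{N-1}((w_i)_{i\ne m})=R_{N-1}((w_i)_{i\ne m})$ and divide through by $R_N$: the monomial factors give $(w_m+\lambda)/w_m$ and the ratio of Vandermonde products gives $(-1)^{N-m}/\prod_{i\ne m}(w_m-w_i)$, the two signs $(-1)^{N-m}$ multiplying to $+1$. Thus $L_N=R_N$ reduces to the scalar identity
$$\sum_{m=1}^N\frac{w_m^{N-1}(w_m+\lambda)}{\prod_{i\neq m}(w_m-w_i)}=\Sigma+\lambda.$$
This is a standard Lagrange-interpolation / partial-fraction fact: splitting the numerator as $w_m^N+\lambda w_m^{N-1}$, one has $\sum_m w_m^{p}/\prod_{i\neq m}(w_m-w_i)=h_{p-N+1}(w_1,\dots,w_N)$ (a complete homogeneous symmetric polynomial), which equals $1$ for $p=N-1$ and $\Sigma$ for $p=N$; equivalently it is the coefficient of $z^{-1}$ in the expansion at $\infty$ of $z^{N-1}(z+\lambda)/\prod_i(z-w_i)$, read off from the fact that $z^{N}+\lambda z^{N-1}-\prod_i(z-w_i)$ has leading term $(\lambda+\Sigma)z^{N-1}$. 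This closes the induction. All the computations are elementary; the one place that demands care is the sign bookkeeping in the grouping step — tracking the factor $(-1)^{N-m}$ and verifying that it cancels the Vandermonde sign when the induction hypothesis is inserted.

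As an alternative that avoids induction, one can write each denominator as $\big(w_{\sigma(1)}+\dots+w_{\sigma(k)}+\lambda\big)^{-1}=\int_0^\infty e^{-t_k(w_{\sigma(1)}+\dots+w_{\sigma(k)}+\lambda)}\,\D t_k$ (valid on $\{\Re w_i>0\}$, whence the general identity follows by analyticity), exchange sum and integral, and substitute $T_j=t_j+\dots+t_N$; the domain becomes the simplex $T_1\ge\dots\ge T_N\ge0$, the $\lambda$-part of the exponent becomes $e^{-\lambda T_1}$, and $\sum_\sigma(-1)^{|\sigma|}\prod_j w_{\sigma(j)}^je^{-w_{\sigma(j)}T_j}=\det_{1\le j,l\le N}[w_l^j e^{-w_l T_j}]$. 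Integrating out $T_N,T_{N-1},\dots,T_2$ one at a time (at each step the term $w_l^{j-1}e^{-w_l T_{j-1}}$ produced is a multiple of an earlier row and drops from the determinant), and finally integrating $T_1$ against $e^{-\lambda T_1}$, yields the determinant with rows $w_l/(w_l+\lambda),w_l,w_l^2,\dots,w_l^{N-1}$; this equals $R_N$ after the row operations that, for $k=N,N-1,\dots,2$ in that order, replace row $k$ by row $k$ plus $\lambda$ times row $k-1$, using $\tfrac{w_l^k}{w_l+\lambda}+\lambda\tfrac{w_l^{k-1}}{w_l+\lambda}=w_l^{k-1}$.
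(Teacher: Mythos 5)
Your induction argument is correct and shares the paper's main structural idea: induct on $N$ and group permutations by the last entry $\sigma(N)$. Where you differ is the reduction step. The paper keeps the answer in determinantal form throughout — it uses the induction hypothesis to replace the cumulative denominators $w_{\sigma(1)}+\dots+w_{\sigma(k)}+\lambda$ by the factored ones $w_{\sigma(k)}+\lambda$, reassembles a single sum over $S_N$, pulls out $\prod_l w_l/(w_l+\lambda)$, and reduces to the Vandermonde identity $\det_{k,l}[w_l^{k-1+\delta_{k,N}}]=(w_1+\dots+w_N)\det_{k,l}[w_l^{k-1}]$, proved by comparing coefficients in an $(N+1)$-variable Vandermonde. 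You instead substitute the explicit closed form $R_{N-1}$ for $L_{N-1}$, divide through by $R_N$, and reduce to the Lagrange-interpolation identity $\sum_m w_m^p\big/\prod_{i\ne m}(w_m-w_i)=h_{p-N+1}$. These two endgames are equivalent in content (both are elementary Vandermonde/Jacobi--Trudi facts), but yours isolates a cleaner-looking classical identity; the trade-off is that the paper's bookkeeping stays entirely inside alternating sums over $S_N$ and so never needs the explicit $R_N$. Your integral-representation alternative, writing $(w_{\sigma(1)}+\dots+w_{\sigma(k)}+\lambda)^{-1}=\int_0^\infty e^{-t_k(\,\cdot\,)}\,\D t_k$, is a genuinely different route and a natural one — it essentially reverses the integration that produced these cumulative denominators in the proof of Proposition~\ref{PropRandomIC} — but as sketched it compresses the domain change, the iterated row eliminations, and the interchange of integral and alternating sum, all of which would need to be spelled out in a full write-up.
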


\begin{proof}
We use induction on $N$. For $N=1$ the statement is trivial. For arbitrary $N$, rearrange the left hand side of \eqref{eqDetIdent} as
\begin{equation}\begin{aligned}\label{eqDI1}
	&\sum_{\sigma\in S_N}(-1)^{|\sigma|}\prod_{k=1}^N\frac{w_{\sigma(k)}^{k}}{w_{\sigma(1)}+\dots+w_{\sigma(k)}+\lambda}\\&\quad=\sum_{l=1}^N\frac{w_l^N}{w_1+\dots+w_N+\lambda}\sum_{\sigma\in S_{N},\sigma(N)=l}(-1)^{|\sigma|}\prod_{k=1}^{N-1}\frac{w_{\sigma(k)}^{k}}{w_{\sigma(1)}+\dots+w_{\sigma(k)}+\lambda}\\
	&\quad=\sum_{l=1}^N\frac{w_l^N}{w_1+\dots+w_N+\lambda}\sum_{\sigma\in S_{N},\sigma(N)=l}(-1)^{|\sigma|}\prod_{k=1}^{N-1}\frac{w_{\sigma(k)}^{k}}{w_{\sigma(k)}+\lambda},
\end{aligned}\end{equation}
where we applied the induction hypothesis to the second sum. Further,
\begin{equation}\begin{aligned}\label{eqDI2}
	\eqref{eqDI1}=&\sum_{\sigma\in S_N}(-1)^{|\sigma|}\frac{w_{\sigma(N)}+\lambda}{w_1+\dots+w_N+\lambda}\prod_{k=1}^{N}\frac{w_{\sigma(k)}^{k}}{w_{\sigma(k)}+\lambda}\\
	&=\frac{1}{w_1+\dots+w_N+\lambda}\prod_{l=1}^N\frac{w_l}{w_l+\lambda}\\
&\quad\times\bigg(\sum_{\sigma\in S_N}(-1)^{|\sigma|}w_{\sigma(N)}\prod_{k=1}^Nw_{\sigma(k)}^{k-1}+\lambda\sum_{\sigma\in S_N}(-1)^{|\sigma|}\prod_{k=1}^Nw_{\sigma(k)}^{k-1}\bigg)\\
	&=\frac{1}{w_1+\dots+w_N+\lambda}\prod_{l=1}^N\frac{w_l}{w_l+\lambda}\\
&\quad \times\left(\det_{1\leq k,l\leq N}\left[w_l^{k-1+\delta_{k,N}}\right]+\lambda\det_{1\leq k,l\leq N}\left[w_l^{k-1}\right]\right),
\end{aligned}\end{equation}
with $\delta_{k,N}$ being the Kronecker delta. Inserting the identity
\begin{equation}\label{eqMuir}
	\det_{1\leq k,l\leq N}\left[w_l^{k-1+\delta_{k,N}}\right]=(w_1+\dots+w_N)\det_{1\leq k,l\leq N}\left[w_l^{k-1}\right],
\end{equation}
we arrive at
\begin{equation}
\eqref{eqDI1}=\bigg(\prod_{l=1}^N\frac{w_l}{w_l+\lambda}\bigg)\det_{1\leq k,l\leq N}\left[w_l^{k-1}\right]=\det_{1\leq k,l\leq N}\bigg[\frac{w_l^k}{w_l+\lambda}\bigg].
\end{equation}

To show \eqref{eqMuir} we introduce the variable $w_{N+1}$ and consider the factorization
\begin{equation}
	\det_{1\leq k,l\leq N+1}\left[w_l^{k-1}\right]=\prod_{i=1}^N(w_{N+1}-w_i)\det_{1\leq k,l\leq N}\left[w_l^{k-1}\right],
\end{equation}
which follows directly from the explicit formula for a Vandermonde determinant. Expanding the determinant on the left hand side along the $(N+1)$-th column gives an explicit expression in terms of monomials in $w_{N+1}$. Examining the coefficient of $w_{N+1}^{N-1}$ on the left and right hand side respectively provides \eqref{eqMuir}.
\end{proof}

\subsection{Proof of Proposition~\ref{propKernel}}
We can rewrite the measure in Proposition~\ref{PropRandomIC} in terms of a conditional \mbox{$L$-ensemble} (see Lemma 3.4 of~\cite{BFPS06} reported here as Lemma~\ref{lemDetMeasure}) and obtain a Fredholm determinant expression for the joint distribution of any subsets of particles position. Then it remains to relate the law under $\Pb_+$ and $\Pb$, which is the law of the reflected Brownian motions specified by the initial condition \eqref{statModel}. This is made using a \emph{shift argument}, analogue to the one used for the polynuclear growth model with external sources~\cite{BR00,SI04} or in the totally asymmetric simple exclusion process~\cite{PS02b,FS05a,BFP09}.

\begin{proof}[Proof of Proposition~\ref{propKernel}]
The proof is divided into two steps. In \emph{Step 1} we determine the distribution under $\Pb_+$ and in \emph{Step 2} we extend this result via a shift argument to $\Pb$.

\emph{Step 1.} We consider the law of the process under $\Pb_+$ for now. The first part of the proof is identical to the proof of Proposition 3.5~\cite{FSW13}, so it is only sketched here.
Using repeatedly the identity
\begin{equation}
	\widetilde{F}_{k}(\xi,{t})=\int^\xi_{-\infty} dx\,\widetilde{F}_{k+1}(x,{t}),
\end{equation}
relabeling $\xi_1^k:=\xi_{k-1}$, and introducing new variables $\xi_l^k$ for $2\leq l\leq k\leq N$,
we can write
\begin{equation}
	\det_{1\leq k,l\leq N}\big[\widetilde{F}_{k-l}(\xi_1^{N+1-l},{t})\big]=\int_{\mathcal D'} \det_{1\leq k,l\leq N}\big[\widetilde{F}_{k-1}(\xi_l^{N},{t})\big]\prod_{2\leq l\leq k\leq N} \D\xi_l^k,
\end{equation}
where $\mathcal D' = \{\xi_l^k\in\R,2\leq l\leq k\leq N|x_l^k\leq x_{l-1}^{k-1}\}$. Using the antisymmetry of the determinant and encoding the constraint on the integration variables into indicator functions, we obtain that the measure \eqref{Prnd} is a marginal of
\begin{equation}\label{ExtM}\begin{aligned}
	\const&\cdot e^{\rho\xi^1_1}\prod_{n=2}^{N}\det_{1\leq i,j\leq n}\left[\Id_{\xi_i^{n-1}\leq\xi_j^n}\right]\det_{1\leq k,l\leq N}\big[ \widetilde{F}_{k-1}(\xi_l^{N},{t})\big]\\=
	\const&\cdot \prod_{n=1}^{N}\det_{1\leq i,j\leq n}\big[\tilde{\phi}_n(\xi_i^{n-1},\xi_j^n)\big]\det_{1\leq k,l\leq N}\big[ \widetilde{F}_{k-1}(\xi_l^{N},{t})\big]
\end{aligned}\end{equation}
with
\begin{equation}\begin{aligned}
	\tilde{\phi}_n(x,y)&=\Id_{x\leq y}, \quad \text{for } n\geq2\\
	\tilde{\phi}_1(x,y)&=e^{\rho y},
\end{aligned}\end{equation}
and using the convention that $\xi_n^{n-1}\leq y$ always holds.

The measure \eqref{ExtM} has the appropiate form for applying Lemma~\ref{lemDetMeasure}. The composition of the $\tilde{\phi}$ functions can be evaluated explicitly as
\begin{equation}\begin{aligned}
	\tilde{\phi}_{0,n}(x,y)&=(\tilde{\phi}_1*\dots*\tilde{\phi}_{n})(x,y)=\rho^{1-n}e^{\rho y}, & &\text{for } n\geq1,\\
	\tilde{\phi}_{m,n}(x,y)&=(\tilde{\phi}_{m+1}*\dots*\tilde{\phi}_{n})(x,y)=\frac{(y-x)^{n-m-1}}{(n-m-1)!}\Id_{x\leq y}, & &\text{for } n>m\geq1.
\end{aligned}\end{equation}
Define
\begin{equation}
	\Psi^n_{n-k}(\xi):=\frac{(-1)^{n-k}}{2\pi\I}\int_{\I\R-\e}\D w\,\,\frac{e^{{t} w^2/2+\xi w}w^{n-k}}{w+\lambda},
\end{equation}
for $n,k\geq1$ and some $0<\e<\lambda$. In the case $n\geq k$ the integrand has no poles in the region $|w|<\lambda$, which implies $\Psi^n_{n-k}=(-1)^{n-k}\widetilde{F}_{n-k}$. The straightforward recursion
\begin{equation}
	(\tilde{\phi}_n*\Psi^n_{n-k})(\xi)=\Psi^{n-1}_{n-1-k}(\xi)
\end{equation}
eventually leads to condition \eqref{Sasdef_psi} being satisfied.

The space $V_n$ is generated by
\begin{equation}
	\{\tilde{\phi}_{0,n}(\xi_1^0,x),\dots,\tilde{\phi}_{n-2,n}(\xi_{n-1}^{n-2},x),\tilde{\phi}_{n-1,n}(\xi_n^{n-1},x)\},
\end{equation}
so a basis for $V_n$ is given by
\begin{equation}
	\{e^{\rho x},x^{n-2},x^{n-3},\dots,x,1\}.
\end{equation}
Choose functions $\Phi^n_{n-k}$ as follows
\begin{equation}
		\Phi^n_{n-k}(\xi)=\begin{cases}
\frac{(-1)^{n-k}}{2\pi\I}\oint_{\Gamma_0}\D z\,\frac{z+\lambda}{e^{{t} z^2/2+\xi z}z^{n-k+1}}& 2\leq k\leq n,\\
\frac{(-1)^{n-1}}{2\pi\I}\oint_{\Gamma_{0,-\rho}}\D z\,\frac{z+\lambda}{e^{{t} z^2/2+\xi z}z^{n-1}(z+\rho)}& k=1.
\end{cases}
\end{equation}
By residue calculating rules, $\Phi^n_{n-k}$ is a polynomial of order $n-k$ for $k\geq2$ and a linear combination of $1$ and $e^{\rho\xi}$
for $k=1$, so these functions indeed generate $V_n$. To show \eqref{Sasortho} for $\ell\geq2$, we decompose the scalar product as follows:
\begin{equation}\label{eq2.41}
\int_{\R_-} \D \xi\, \Psi^{n}_{n-k}(\xi) \Phi^n_{n-\ell}(\xi) + \int_{\R_+} \D \xi\, \Psi^{n}_{n-k}(\xi) \Phi^n_{n-\ell}(\xi).
\end{equation}
Since $n-k\geq0$ we are free to choose the sign of $\e$ as necessary. For the first term, we choose $\e<0$ and the path $\Gamma_0$ close enough to zero, such that always \mbox{$\Re(w-z)>0$}. Then, we can take the integral over $\xi$ inside and obtain
\begin{equation}
\int_{\R_-} \D \xi\, \Psi^{n}_{n-k}(\xi) \Phi^n_{n-\ell}(\xi) =\frac{(-1)^{k-l}}{(2\pi\I)^2}\int_{\I\R-\e}\D w \oint_{\Gamma_0}\D z\, \frac{e^{{t} w^2/2} w^{n-k}(z+\lambda)}{e^{{t} z^2/2}z^{n-\ell+1}(w+\lambda)(w-z)}.
\end{equation}
For the second term, we choose $\e>0$ to obtain \mbox{$\Re(w-z)<0$}. Then again, we can take the integral over $\xi$ inside and arrive at the same expression up to a minus sign. The net result of \eqref{eq2.41} is a residue at $w=z$, which is given by
\begin{equation}
\frac{(-1)^{k-l}}{2\pi\I}\oint_{\Gamma_0}\D z\, z^{\ell-k-1}=\delta_{k,\ell}.
\end{equation}
The case $\ell=1$ uses the same decomposition and requires the choice $\e>\rho$ resp.\ $\e<0$, finally leading to
\begin{equation}
\eqref{eq2.41}=\frac{(-1)^{k-1}}{2\pi\I}\oint_{\Gamma_{0,-\rho}}\D z\, \frac{z^{1-k}}{z+\rho}=\delta_{k,1}.
\end{equation}

Furthermore, both $\tilde{\phi}_n(\xi_{n}^{n-1},x)$ and $\Phi_0^{n}(\xi)$ are constants, so the kernel has a simple form (compare with \eqref{SasK})
\begin{equation}
\tilde{\mathcal{K}}(n_1,\xi_1;n_2,\xi_2)=-\tilde{\phi}_{n_1,n_2}(\xi_1,\xi_2)\Id_{(n_2>n_1)} + \sum_{k=1}^{n_2} \Psi_{n_1-k}^{n_1}(\xi_1) \Phi_{n_2-k}^{n_2}(\xi_2).
\end{equation}
However, the relabeling $\xi_1^k:=\xi_{k-1}$ included a index shift, so the kernel of our system is actually
\begin{equation}\begin{aligned}
\mathcal{K}(n_1,\xi_1;n_2,\xi_2)&=\tilde{\mathcal{K}}(n_1+1,\xi_1;n_2+1,\xi_2)\\&=-\phi_{n_1,n_2}(\xi_1,\xi_2)\Id_{(n_2>n_1)} + \sum_{k=1}^{n_2} \Psi_{n_1-k+1}^{n_1+1}(\xi_1) \Phi_{n_2-k+1}^{n_2+1}(\xi_2).
\end{aligned}\end{equation}

Note that we are free to extend the summation over $k$ up to infinity, since the integral expression for $\Phi_{n-k}^{n}(\xi)$ vanishes for $k>n$ anyway. Taking the sum inside the integrals we can write
\begin{equation}\label{eqKernSum}\begin{aligned}
	\sum_{k\geq1} \Psi_{n_1-k+1}^{n_1+1}(\xi_1) \Phi_{n_2-k+1}^{n_2+1}(\xi_2)=\frac{1}{(2\pi\I)^2}\int_{\I\R-\e}\hspace{-0.5cm}\D w \oint_{\Gamma_{0,-\rho}}\hspace{-0.5cm}\D z\,	\frac{e^{{t} w^2/2+\xi_1 w}(-w)^{n_1}}{e^{{t} z^2/2+\xi_2 z}(-z)^{n_2}}\eta(w,z),
\end{aligned}\end{equation}
with
\begin{equation}
	\eta(w,z)=\frac{z+\lambda}{(w+\lambda)(z+\rho)}+\sum_{k\geq2}\frac{z^{k-2}(z+\lambda)}{w^{k-1}(w+\lambda)}.
\end{equation}
By choosing contours such that $|z|<|w|$, we can use the formula for a geometric series, resulting in
\begin{equation}\begin{aligned}
	\eta(w,z)&=\frac{z+\lambda}{(w+\lambda)(z+\rho)}+\frac{z+\lambda}{(w+\lambda)w}\frac{1}{1-z/w}\\
	&=\frac{1}{w-z}+\frac{\lambda-\rho}{(w+\lambda)(z+\rho)}.
\end{aligned}\end{equation}
Inserting this expression back into \eqref{eqKernSum} gives the kernel \eqref{eqKtflat}, which governs the multidimensional distributions of $x_n({t})$ under the measure $\Pb_+$, namely
\begin{equation}\label{eqDistP+}
\Pb_+\bigg(\bigcap_{n\in S} \{x_n({t})\leq a_n\}\bigg)=\det(\Id-\chi_a \mathcal{K} \chi_a)_{L^2(S\times\R)}.
\end{equation}

\emph{Step 2.} The distributions under $\Pb$ and under $\Pb_+$ can be related via the following \emph{shift argument}. Introducing the shorthand
\begin{equation}
	\widetilde{\mathcal{E}}(S,\vec{a}):=\bigcap_{n\in S} \{x_n({t})\leq a_n\},
\end{equation}
we have
\begin{equation}\begin{aligned}
		\Pb_+(\widetilde{\mathcal{E}}(S,\vec{a}))&=\int_{\R_+} \D x\, \Pb_+(x_0(0)\in \D x)\Pb_+(\widetilde{\mathcal{E}}(S,\vec{a})|x_0(0)=x)\\
		&=\int_{\R_+} \D x\,(\lambda-\rho)e^{-(\lambda-\rho)x}\Pb(\widetilde{\mathcal{E}}(S,\vec{a}-x))\\
		&=-e^{-(\lambda-\rho)x}\,\Pb(\widetilde{\mathcal{E}}(S,\vec{a}-x))\Big|_0^\infty+\int_{\R_+} \D x\,e^{-(\lambda-\rho)x}\frac{\D}{\D x}\Pb(\widetilde{\mathcal{E}}(S,\vec{a}-x))\\
		&=\Pb(\widetilde{\mathcal{E}}(S,\vec{a}))-\int_{\R_+} \D x\,e^{-(\lambda-\rho)x}\sum_{k\in S}\frac{\D}{\D a_k}\Pb(\widetilde{\mathcal{E}}(S,\vec{a}))\\
		&=\Pb(\widetilde{\mathcal{E}}(S,\vec{a}))-\frac{1}{\lambda-\rho}\sum_{k\in S}\frac{\D}{\D a_k}\Pb_+(\widetilde{\mathcal{E}}(S,\vec{a})).
\end{aligned}\end{equation}
Combining the identity
\begin{equation}
	\Pb(\widetilde{\mathcal{E}}(S,\vec{a}))=\bigg(1+\frac{1}{\lambda-\rho}\sum_{k\in S}\frac{\D}{\D a_k}\bigg)\Pb_+(\widetilde{\mathcal{E}}(S,\vec{a}))
\end{equation}
with \eqref{eqDistP+} finishes the proof.
\end{proof}	

\begin{lem}[Corollary of Theorem~4.2~\cite{BF07}]\label{lemDetMeasure}
Assume we have a signed measure on $\{x_i^n,n=1,\dotsc,N,i=1,\dotsc,n\}$ given in the form,
\begin{equation}\label{Sasweight}
 \frac{1}{Z_N}\prod_{n=1}^{N} \det[\phi_n(x_i^{n-1},x_j^n)]_{1\leq i,j\leq n} \det[\Psi_{N-i}^{N}(x_{j}^N)]_{1\leq i,j \leq N},
\end{equation}
where $x_{n+1}^n$ are some ``virtual'' variables and $Z_N$ is a normalization constant. If $Z_N\neq 0$, then the correlation functions are determinantal.

To write down the kernel we need to introduce some notations. Define
\begin{equation}\label{Sasdef phi12}
\phi^{(n_1,n_2)}(x,y)=
\begin{cases} (\phi_{n_1+1} \ast \dotsb \ast \phi_{n_2})(x,y),& n_1<n_2,\\
0,& n_1\geq n_2,
\end{cases}.
\end{equation}
where $(a* b)(x,y)=\int_\R \D z\, a(x,z) b(z,y)$, and, for $1\leq n<N$,
\begin{equation}\label{Sasdef_psi}
\Psi_{n-j}^{n}(x) := (\phi^{(n,N)} * \Psi_{N-j}^{N})(y), \quad j=1,\dotsc,N.
\end{equation}
Then the functions
\begin{equation}
\{ \phi^{(0,n)}(x_1^0,x), \dots,\phi^{(n-2,n)}(x_{n-1}^{n-2},x), \phi_{n}(x_{n}^{n-1},x)\}
\end{equation}
are linearly independent and generate the $n$-dimensional space $V_n$. Define a set of functions $\{\Phi_{n-j}^{n}(x), j=1,\dotsc,n\}$ spanning $V_n$ defined by the orthogonality relations
\begin{equation}\label{Sasortho}
\int_\R \D x\, \Phi_{n-i}^n(x) \Psi_{n-j}^n(x) = \delta_{i,j}
\end{equation}
for $1\leq i,j\leq n$.

Further, if $\phi_n(x_n^{n-1},x)=c_n \Phi_0^{n}(x)$, for some $c_n\neq 0$, \mbox{$n=1,\dotsc,N$}, then the kernel takes the simple form
\begin{equation}\label{SasK}
K(n_1,x_1;n_2,x_2)= -\phi^{(n_1,n_2)}(x_1,x_2)+ \sum_{k=1}^{n_2} \Psi_{n_1-k}^{n_1}(x_1) \Phi_{n_2-k}^{n_2}(x_2).
\end{equation}
\end{lem}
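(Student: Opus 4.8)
The plan is to deduce this statement from the general Eynard--Mehta type theorem for conditional $L$-ensembles in Theorem~4.2~\cite{BF07} (see also Lemma~3.4~\cite{BFPS06}); the present lemma is the specialization of that result to the particular class of weights \eqref{Sasweight}. First I would match the data. The measure \eqref{Sasweight} has the form ``$\prod_{n=1}^{N}\det[\phi_n]$ times the single final determinant $\det[\Psi^N_{N-i}(x^N_j)]$'', where level $n$ carries the $n$ real coordinates $x_1^n,\dots,x_n^n$ and level $n-1$ is augmented by the virtual variable $x^{n-1}_n$, with the bound function $\phi_n(x^{n-1}_n,\cdot)$ prescribed. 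The convolutions $\phi^{(n_1,n_2)}$ of the transition functions, the backward-propagated functions $\Psi^n_{n-j}=\phi^{(n,N)}\ast\Psi^N_{N-j}$ from \eqref{Sasdef_psi}, and the spaces $V_n$ are precisely the objects occurring in Theorem~4.2~\cite{BF07}. That theorem then applies word for word as soon as the normalization $Z_N$ is nonzero, yielding that the point process $\{x_i^n\}$ is determinantal.

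Second, I would identify the correlation kernel. The general theorem produces a kernel of the shape $-\phi^{(n_1,n_2)}(x_1,x_2)$ plus a bilinear term $\sum_{k,l}\Psi^{n_1}_{n_1-k}(x_1)\,[\mathcal G^{-1}]_{k,l}\,\psi_l(x_2)$, where $\mathcal G$ is the Gram-type matrix pairing the functions $\Psi^N_{N-k}$ with a fixed set of generators $\psi_l$ of $V_N$. The hypothesis $Z_N\neq 0$ is exactly the invertibility of $\mathcal G$, and invertibility of $\mathcal G$ is in turn equivalent to the existence --- and, once a spanning set of each $V_n$ has been fixed, the uniqueness --- of a dual family $\{\Phi^n_{n-j}\}$ satisfying the biorthogonality relations \eqref{Sasortho}. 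Re-expressing the bilinear term in this dual basis diagonalizes $\mathcal G$ and collapses the double sum into the single sum $\sum_k\Psi^{n_1}_{n_1-k}(x_1)\,\Phi^{n_2}_{n_2-k}(x_2)$.

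The only remaining point --- and the sole place where the extra hypothesis $\phi_n(x^{n-1}_n,x)=c_n\Phi^n_0(x)$ enters --- is to reduce the kernel to the clean form \eqref{SasK}, in particular to truncate the sum at $k=n_2$ and to eliminate all explicit dependence on the virtual variables. I expect this to be the main (and essentially only) computation in the proof: one must check that the rows and columns of $\mathcal G$ indexed by the virtual variables of levels $\le n_2$ contribute trivially, so that the terms with $k>n_2$ drop out and the boundary data cancels, leaving \eqref{SasK}. This is a telescoping/residue manipulation of the same flavor as the recursion $(\tilde\phi_n\ast\Psi^n_{n-k})=\Psi^{n-1}_{n-1-k}$ used earlier in the section. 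The main obstacle is therefore the bookkeeping of the virtual variables rather than any analytic difficulty, since the substantive content is supplied by Theorem~4.2~\cite{BF07}.
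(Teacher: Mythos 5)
The paper offers no proof of this lemma at all: it is labeled ``Corollary of Theorem~4.2~\cite{BF07}'' and in the surrounding text is explicitly identified with Lemma~3.4 of~\cite{BFPS06}, so the authors are simply quoting a known result from the literature. Your proposal therefore cannot be compared against a proof in the paper, only against the cited source. That said, your reconstruction is essentially the standard derivation of this Eynard--Mehta/Borodin--Ferrari type formula and is consistent with how Theorem~4.2~\cite{BF07} and Lemma~3.4~\cite{BFPS06} are actually established: identify the measure with a conditional $L$-ensemble, observe that $Z_N\neq 0$ is invertibility of the Gram-type pairing, biorthogonalize at level $N$ to produce the $\Phi^n_{n-j}$, and telescope down through the $\phi_n$'s.

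One small imprecision worth flagging: you locate the role of the hypothesis $\phi_n(x_n^{n-1},x)=c_n\Phi_0^n(x)$ in ``truncating the sum at $k=n_2$.'' In fact the truncation at $n_2$ is automatic once you express things in the $\{\Phi^{n_2}_{n_2-k}\}_{k=1}^{n_2}$ basis, since there are only $n_2$ such functions. The real work done by the hypothesis is to kill the additional correction term in the general kernel of Theorem~4.2~\cite{BF07} (the one carrying explicit dependence on the virtual variables); when $\phi_n(x_n^{n-1},\cdot)$ is a multiple of $\Phi_0^n$, that term collapses, and only then does the kernel reduce to the ``simple form'' \eqref{SasK}. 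Your sketch says this too (``eliminate all explicit dependence on the virtual variables''), but conflates it with the truncation. Apart from that, the plan matches the cited proof.
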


\section{Asymptotic analysis - Proof of Theorem~\ref{thmAsymp}}\label{SectAsymtotics}
\begin{remark}\label{remLambda}
The change in variables
\begin{equation}\begin{aligned}
	x&\mapsto\lambda^{-1}x&\quad {t}&\mapsto\lambda^{-2}{t}
\end{aligned}\end{equation}
reproduces the same system with new parameters $\widetilde{\lambda}=1$ and $\widetilde{\rho}=\frac{\rho}{\lambda}$. We can therefore restrict our considerations to $\lambda=1$ without loss of generality.
\end{remark}
Fix $\lambda=1$ from now on. According to \eqref{eqScaledProcess} we use the scaled variables
\begin{equation}\label{eqScaling}\begin{aligned}
	n_i&={t}+2{t}^{2/3}r_i\\
	\xi_i&=2{t}+2{t}^{2/3}r_i+{t}^{1/3}s_i\\
	\rho&=1-{t}^{-1/3}\delta,
\end{aligned}\end{equation}
with $\delta>0$. Correspondingly, consider the rescaled (and conjugated) kernel
\begin{equation}
	\mathcal{K}^{\rm resc}(r_1,s_1;r_2,s_2)={t}^{1/3}e^{\xi_1-\xi_2}\mathcal{K}(n_1,\xi_1;n_2,\xi_2),
\end{equation}
which naturally decomposes into
\begin{equation}
	\mathcal{K}^{\rm resc}(r_1,s_1;r_2,s_2)=-\phi_{r_1,r_2}^{\rm resc}(s_1,s_2)\Id_{(r_1<r_2)}+\mathcal{K}_0^{\rm resc}(r_1,s_1;r_2,s_2).
\end{equation}

\begin{rem}
Instead of integrals over Airy functions \eqref{eqKernelDef} can also be written as contour integrals:
\begin{equation}\label{contInt}\begin{aligned}
	K_{r_1,r_2}(s_1,s_2)&=\frac{-1}{(2\pi\I)^2}\int_{e^{-2\pi\I/3}\infty}^{e^{2\pi\I/3}\infty} \D W\int_{e^{\pi\I/3}\infty}^{e^{-\pi\I/3}\infty} \D Z\frac{e^{Z^3/3+r_2Z^2-s_2Z}}{e^{W^3/3+r_1W^2-s_1W}}\frac{1}{Z-W}\\
		f_{r_1}(s_1)&=\frac{1}{2\pi\I}\int_{e^{-2\pi\I/3}\infty,\text{ right of }0}^{e^{2\pi\I/3}\infty}   \D W\frac{e^{-(W^3/3+r_1W^2-s_1W)}}{W}\\
		g_{r_2}(s_2)&=\frac{1}{2\pi\I}\int_{e^{\pi\I/3}\infty,\text{ left of }\delta}^{e^{-\pi\I/3}\infty} \D Z\frac{e^{Z^3/3+r_2Z^2-s_2Z}}{Z-\delta}.
\end{aligned}\end{equation}
In the integral defining $K$, the path for $W$ and $Z$ do not have to intersect. In addition, the Gaussian part has a representation in terms of an integral over Airy functions:
\begin{equation}\label{eqVRep}
 V_{r_1,r_2}(s_1,s_2)=\frac{e^{\frac{2}{3}r_2^3+r_2s_2}}{e^{\frac{2}{3}r_1^3+r_1s_1}}\int_\R\D x\, e^{-x(r_1-r_2)}\Ai(r_1^2+s_1+x)\Ai(r_2^2+s_2+x).
\end{equation}

\end{rem}

In order to establish the asymptotics of the joint distributions, one needs both a pointwise limit of the kernel, as well as uniform bounds to ensure convergence of the Fredholm determinant expansion. The first time this approach was used is in~\cite{GTW00}. These results are contained in the following propositions.
\begin{prop}\label{propPointw}
Consider any $r_1,r_2$ in a bounded set and fixed $L$. Then, the kernel converges as
\begin{equation}
	\lim_{{t}\to\infty}\mathcal{K}^{\rm resc}(r_1,s_1;r_2,s_2)=K^\delta(r_1,s_1;r_2,s_2)
\end{equation}
uniformly for $(s_1,s_2)\in[-L,L]^2$.
\end{prop}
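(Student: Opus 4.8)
The plan is to perform a steepest-descent analysis of the three building blocks of $\mathcal{K}^{\rm resc}$ separately: the part $\widetilde{\mathcal K}$ coming from the double contour integral, the rank-one correction $(\lambda-\rho)\mathpzc{f}\,\mathpzc{g}$, and the deterministic term $\phi_{n_1,n_2}\Id_{(n_2>n_1)}$. First I would insert the scaling \eqref{eqScaling} into the integrand and identify the exponent. For the $w$-integral in $\widetilde{\mathcal K}$ the relevant function is $t w^2/2 + \xi_1 w + n_1\log(-w)$; writing $w = t^{-1/3}W$ (and similarly $z = t^{-1/3}Z$, since both $n_i$ and $\xi_i$ scale so that the critical point sits at $|w|\sim t^{-1/3}$) and Taylor-expanding the logarithm, one finds that the $t^{1}$ and $t^{2/3}$ contributions cancel by the choice of centering $2t + 2rt^{2/3}$, the $t^{1/3}$ term produces $r_1 W^2$, and the $O(1)$ term produces $-s_1 W + W^3/3$. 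This is exactly the cubic exponent appearing in \eqref{contInt}. The double critical point of $W^3/3$ at $W=0$ dictates the Airy asymptotics; after deforming $\I\R-\e$ and $\Gamma_0$ to the steepest-descent contours through $0$ (with the $w$-contour going to $\infty e^{\pm2\pi\I/3}$ and the $z$-contour to $\infty e^{\pm\pi\I/3}$, kept non-intersecting), the conjugation factor $e^{\xi_1-\xi_2}$ and the Jacobian $t^{1/3}$ combine to give precisely $K_{r_1,r_2}(s_1,s_2)$ in its contour-integral form \eqref{contInt}, which equals the Airy-function expression \eqref{eqKernelDef} via the standard identity $\frac{1}{2\pi\I}\int e^{X^3/3 - aX}\D X = \Ai(\cdot)$ and $\frac{1}{Z-W} = \int_0^\infty e^{-x(Z-W)}\D x$ for $\Re(Z-W)>0$.

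Next, for the rank-one term I would treat $\mathpzc{f}(n_1,\xi_1)$ and $\mathpzc{g}(n_2,\xi_2)$ in the same way. The factor $\frac{1}{w+\lambda}=\frac{1}{w+1}$ becomes, after $w=t^{-1/3}W$, essentially $t^{1/3}/W$ up to lower-order corrections, so $\mathpzc{f}$ rescales to $\frac{1}{2\pi\I}\int e^{-(W^3/3+r_1W^2-s_1W)}\frac{\D W}{W}$, which is exactly $f_{r_1}(s_1)$ in \eqref{contInt}; the pole at $w=-1$ lies far from the origin and contributes nothing to the contour passing near $0$ (one must check the contour can be taken to the right of $0$ so the pole at $W=0$ is picked up correctly). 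For $\mathpzc{g}$ the pole structure is more delicate: $\frac{1}{z+\rho}$ with $\rho = 1 - t^{-1/3}\delta$ scales to $\frac{1}{z+1-t^{-1/3}\delta}$, and the pole at $z=-\rho$, after rescaling, sits at $Z$ of order $t^{1/3}$ — but the extra boundary term in $g_{r_2}$, namely $e^{\delta^3/3 + r_2\delta^2 - s_2\delta}$, arises precisely as the residue contribution when the $z$-contour $\Gamma_{0,-\rho}$ is deformed; one has to track that residue carefully, and that is where the finite $\delta$ enters the limit. The prefactor $(\lambda-\rho)=t^{-1/3}\delta$ exactly compensates one power of $t^{1/3}$, leaving the clean product $\delta f_{r_1}(s_1)g_{r_2}(s_2)$. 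Finally, the term $\phi_{n_1,n_2}(\xi_1,\xi_2) = \frac{(\xi_2-\xi_1)^{n_2-n_1-1}}{(n_2-n_1-1)!}\Id_{\xi_1\le\xi_2}$ for $1\le n_1<n_2$ is a local central limit computation: with $n_2-n_1 = 2(r_2-r_1)t^{2/3}$ and $\xi_2-\xi_1 = 2(r_2-r_1)t^{2/3} + t^{1/3}(s_2-s_1)$, Stirling's formula turns the Poisson/Gamma-type weight into the Gaussian kernel $V_{r_1,r_2}(s_1,s_2)$ after multiplication by $t^{1/3}e^{\xi_1-\xi_2}$; the case $n_1=0$ with $\phi_{0,n_2}=\rho^{-n_2}e^{\rho\xi_2}$ has to be handled via the same kind of expansion, noting $\rho^{-n_2} = \exp(-n_2\log\rho)$ and expanding $\log\rho$ to the needed order in $t^{-1/3}\delta$.

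The main obstacle I anticipate is not the formal identification of the limiting exponents — that is bookkeeping — but rather the bookkeeping around the moving pole at $z=-\rho$ and making the contour deformations rigorous while keeping the $W$- and $Z$-contours non-intersecting and steepest-descent-admissible simultaneously. Concretely: one must choose, for each finite $t$, contours $\I\R-\e$ and $\Gamma_{0,-\rho}$ that (i) are valid for the original integral representation, (ii) after rescaling converge to the limiting steepest-descent contours in \eqref{contInt}, and (iii) avoid the pole at $-\rho$ in a way that produces exactly the residue term in $g_{r_2}$ and nothing extraneous. Since the claim here is only \emph{pointwise} convergence uniformly on the compact set $[-L,L]^2$ — the uniform integrable bounds needed for the Fredholm determinant are deferred to a separate proposition — it suffices to split each rescaled contour into a fixed compact piece near the origin, where the integrand converges uniformly by dominated convergence after the exponent expansion, and two tails, where a crude bound of the form $e^{-c|W|^3}$ (respectively $e^{-c|Z|^3}$) on the steepest-descent rays makes the tail contribution uniformly small; the error terms from truncating the Taylor expansion of $n_i\log(-w)$ are $O(t^{-1/3})$ uniformly on the compact piece and on $s_1,s_2\in[-L,L]$. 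Assembling the three limits then yields $K^\delta$ in the form stated in Definition~\ref{DefAiryStat}.
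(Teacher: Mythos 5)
Your strategy is a direct steepest-descent analysis, which is a legitimate alternative to what the paper actually does: the paper rewrites the building blocks of $\mathcal{K}^{\rm resc}$ as integrals over two auxiliary functions $\alpha_t(r,s)$, $\beta_t(r,s)$, recognizes these as (rescaled) Hermite polynomials via their integral representations, and imports the Plancherel--Rotach asymptotics $n^{1/12}e^{-x^2/4}\He_n(x)\to\Ai(u)$ from Szeg\H{o}. Dominated convergence on the $x$-integral, justified by Lemma~\ref{lemAlphaBound}, then yields the limit. A saddle-point proof would work too, but your execution has a genuine error.

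You write ``$w = t^{-1/3}W$ \dots since both $n_i$ and $\xi_i$ scale so that the critical point sits at $|w|\sim t^{-1/3}$.'' This is incorrect. The exponent is $tw^2/2 + \xi_1 w + n_1\log(-w)$, and with $\xi_1 \approx 2t$, $n_1 \approx t$ the stationarity condition $tw^2 + \xi_1 w + n_1 = 0$ is, to leading order, $w^2 + 2w + 1 = 0$: a \emph{double} critical point at $w = -1$, not near the origin. The correct local variable is $w = -1 + t^{-1/3}W$; with $w = t^{-1/3}W$ the term $n_1\log(-w)$ produces a divergent $-\tfrac{1}{3}(t + 2t^{2/3}r_1)\log t$ and your Taylor expansion cannot even be written down, let alone produce the cubic $W^3/3$. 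This is not a cosmetic typo: your downstream claim that ``the pole at $w=-1$ lies far from the origin and contributes nothing to the contour passing near $0$'' follows from the same misconception and is exactly backwards. The pole of $\mathpzc{f}$ at $w=-\lambda=-1$ coincides with the saddle. Deforming $\I\R-\e$ (which is to the \emph{right} of $-1$) across $w=-1$ to the steepest-descent contour is what produces the residue term, namely the constant $1$ in $f_{r_1}$; under $w = -1 + t^{-1/3}W$ this pole becomes the pole at $W=0$ in \eqref{contInt}, which must be handled by the contour passing to the right of the origin. Your sketch implicitly relies on the answer (you obtain the correct cubic exponent and mention the pole at $W=0$), but the stated reasoning does not get you there; the formal derivation would need to be redone around $w=-1$ for $\widetilde{\mathcal{K}}$, $\mathpzc{f}$ and $\mathpzc{g}$ alike, and in the case of $\mathpzc{g}$ the nearby pole $z=-\rho = -1 + t^{-1/3}\delta$ is at distance $\delta t^{-1/3}$ from the saddle (not $O(t^{1/3})$ as your rescaling would suggest), which is precisely the critical scaling producing the $\delta$-dependent residue $e^{\delta^3/3+r_2\delta^2-s_2\delta}$ in $g_{r_2}$.

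Once the expansion point is corrected, the remainder of your plan (split the deformed contour into a compact central piece plus exponentially damped tails, Taylor-remainder control uniform on $[-L,L]^2$, pick up the $-\rho$ residue for $\mathpzc{g}$, and treat the $\phi_{n_1,n_2}$ part by a local CLT/Stirling argument as in \cite{FSW13}) is a viable path to the proposition. But as written, the central computation is built on the wrong saddle and would not close.
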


\begin{cor}\label{corBound}
Consider $r_1,r_2$ fixed. For any $L$ there exists ${t}_0$ such that for ${t}>{t}_0$ the bound
\begin{equation}
	\left|\mathcal{K}^{\rm resc}(r_1,s_1;r_2,s_2)\right|\leq \const_L
\end{equation}
holds for all $(s_1,s_2)\in[-L,L]^2$.
\end{cor}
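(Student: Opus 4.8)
The plan is to obtain Corollary~\ref{corBound} as an immediate consequence of Proposition~\ref{propPointw}. That proposition already asserts convergence $\mathcal{K}^{\rm resc}(r_1,s_1;r_2,s_2)\to K^\delta(r_1,s_1;r_2,s_2)$ \emph{uniformly} on the square $[-L,L]^2$. So the first step is simply to choose $t_0$ so that $\sup_{(s_1,s_2)\in[-L,L]^2}|\mathcal{K}^{\rm resc}(r_1,s_1;r_2,s_2)-K^\delta(r_1,s_1;r_2,s_2)|\le 1$ for all $t>t_0$.

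The second step is to bound the limiting kernel on the compact set. With $r_1,r_2$ fixed, the formulas \eqref{eqKernelDef} express $K^\delta(r_1,\cdot\,;r_2,\cdot)$ through the Airy function and elementary exponentials, all smooth, so $(s_1,s_2)\mapsto K^\delta(r_1,s_1;r_2,s_2)$ is continuous on $\R^2$ and therefore bounded on $[-L,L]^2$, say by $\const'_L$. Combining the two steps gives $|\mathcal{K}^{\rm resc}(r_1,s_1;r_2,s_2)|\le \const'_L+1=:\const_L$ for $t>t_0$ and $(s_1,s_2)\in[-L,L]^2$, which is the claim. Along this route there is essentially no obstacle: all the work sits in Proposition~\ref{propPointw}.

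If one instead wanted a proof independent of Proposition~\ref{propPointw}, one would return to the contour-integral representation of $\mathcal{K}^{\rm resc}$ (the rescaled, conjugated versions of $\widetilde{\mathcal{K}}$, $\mathpzc{f}$, $\mathpzc{g}$ from Proposition~\ref{propKernel}) and perform the steepest-descent deformation: rescale the integration variables by powers of $t^{1/3}$, locate the double critical point of the exponent $tw^2/2+\xi w$ at the relevant value, and deform the $w$- and $z$-contours to the corresponding steepest-descent paths. On those paths the exponential factors are bounded by a $t$-independent constant once $s_1,s_2$ range over $[-L,L]$, the prefactors $(-w)^{n_1}/(-z)^{n_2}$, $1/(w+\lambda)$, $1/(z+\rho)$ are controlled, the factor $1/(w-z)$ is handled by keeping the two contours a fixed distance apart, and any residues collected at the poles $w=-\lambda$, $z=-\rho$, $z=0$ crossed during the deformation are of the same type and estimated the same way. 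The main obstacle in that approach is purely technical, namely making the deformation and the resulting bounds uniform in $(s_1,s_2)$ over $[-L,L]^2$ and in $t\ge t_0$ while tracking the poles crossed; since this bookkeeping is exactly what is already done for Proposition~\ref{propPointw}, the short deduction above is the one I would record.
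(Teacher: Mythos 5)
Your short deduction is exactly the intended one: the paper supplies no separate proof of Corollary~\ref{corBound}, labeling it a corollary precisely because uniform convergence of $\mathcal{K}^{\rm resc}$ to $K^\delta$ on $[-L,L]^2$ (Proposition~\ref{propPointw}) together with continuity, hence boundedness, of the limit kernel on the compact square immediately gives the uniform bound for $t>t_0$. Your argument is correct and essentially the same as what the paper implicitly uses; the sketched steepest-descent alternative is unnecessary here.
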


\begin{prop}\label{propK0Bound}
For fixed $r_1,r_2,L$ and $\delta>0$ there exists ${t}_0>0$ such that the estimate
\begin{equation}
	\left|\mathcal{K}_0^{\rm resc}(r_1,s_1;r_2,s_2)\right|\leq \const\cdot e^{-\min\{\delta,1\} s_2}
\end{equation}
holds for any ${t}>{t}_0$ and $s_1,s_2>0$.
\end{prop}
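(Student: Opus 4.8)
The plan is to run the same steepest-descent scheme that underlies Proposition~\ref{propPointw}, but to keep track of the decay in $s_2$. By Proposition~\ref{propKernel} the pre-scaling kernel splits as $\mathcal{K}=-\phi\,\Id_{(n_2>n_1)}+\widetilde{\mathcal{K}}+(\lambda-\rho)\mathpzc{f}\mathpzc{g}$; since $n_2>n_1$ corresponds to $r_2>r_1$, the $\phi$-part is exactly the term $-\phi^{\rm resc}\Id_{(r_1<r_2)}$ that has been subtracted off, so $\mathcal{K}_0^{\rm resc}=t^{1/3}e^{\xi_1-\xi_2}\big(\widetilde{\mathcal{K}}+(\lambda-\rho)\mathpzc{f}\mathpzc{g}\big)$, and it suffices to bound $\widetilde{\mathcal{K}}^{\rm resc}:=t^{1/3}e^{\xi_1-\xi_2}\widetilde{\mathcal{K}}$ and, using $(\lambda-\rho)t^{1/3}=\delta$, the rank-one part $\delta\,\widetilde{f}^{(t)}(r_1,s_1)\,\widetilde{g}^{(t)}(r_2,s_2)$ with $\widetilde{f}^{(t)}:=e^{-t/2}e^{\xi_1}\mathpzc{f}$ and $\widetilde{g}^{(t)}:=e^{t/2}e^{-\xi_2}\mathpzc{g}$, separately. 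In all three integrals I substitute $w=-1+t^{-1/3}W$, $z=-1+t^{-1/3}Z$; expanding $\tfrac{t}{2}w^2+\xi_i(w+1)+n_i\log(-w)$ around the double critical point $w=-1$ yields $\tfrac{t}{2}-\tfrac{1}{3}W^3-r_iW^2+s_iW+O(t^{-1/3})$ uniformly for $W$ bounded, the $e^{\pm t/2}$ cancelling in $\widetilde{\mathcal{K}}$ and being absorbed into $\widetilde{f}^{(t)},\widetilde{g}^{(t)}$. Using $w+\lambda=t^{-1/3}W$, $z+\rho=t^{-1/3}(Z-\delta)$ and $w-z=t^{-1/3}(W-Z)$, the prefactor $t^{1/3}$ disappears and one is left with contour integrals whose integrands carry the exponents $-\tfrac13 W^3-r_1W^2+s_1W$ and $\tfrac13 Z^3+r_2Z^2-s_2Z$ (both up to $O(t^{-1/3})$) together with the rational factors $1/W$, $1/(Z-\delta)$, $1/(W-Z)$; the relevant contours are deformations of the images of $\I\R-\e$ and of $\Gamma_0$ (resp.\ $\Gamma_{0,-\rho}$), which — since $\Gamma_0$ surrounds $z=0\leftrightarrow Z=t^{1/3}\to+\infty$ — open up to the steepest-descent paths near $Z\approx0$.

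For the rank-one term I extract residues. Opening the image of $\Gamma_{0,-\rho}$ past the simple pole at $Z=\delta$ writes $\widetilde{g}^{(t)}=\widetilde{g}^{(t)}_{\rm res}+\widetilde{g}^{(t)}_{\rm int}$; an elementary expansion of $\rho^{-n_2}e^{-t\rho^2/2}$ with $\rho=1-t^{-1/3}\delta$ and $n_2=\lfloor t+2t^{2/3}r_2\rfloor$ gives $\widetilde{g}^{(t)}_{\rm res}(r_2,s_2)=\big(e^{\delta^3/3+r_2\delta^2}+O(t^{-1/3})\big)e^{-\delta s_2}$, hence $|\widetilde{g}^{(t)}_{\rm res}|\leq\const\,e^{-\delta s_2}$ for $t>t_0$. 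For $\widetilde{g}^{(t)}_{\rm int}$ I deform the $Z$-contour so that it crosses the real axis at $\Re Z=c$ with a fixed $c\in(0,\min\{\delta,1\}]$ and opens at angles $\pm\pi/3$; on the bounded part of the contour the exponent is $-s_2\Re Z+O(1)+O(t^{-1/3})\leq -cs_2+O(1)$, while on the tail $|Z|\geq R_0$ I revert to the exact exponent $-\tfrac{t}{2}z^2-\xi_2(z+1)-n_2\log(-z)$, routing the tail toward the negative real $z$-axis where $\Re(z^2)\to+\infty$ makes $e^{-tz^2/2}$ super-exponentially small and the contribution uniformly negligible; this gives $|\widetilde{g}^{(t)}_{\rm int}|\leq\const\,e^{-cs_2}$, so altogether $|\widetilde{g}^{(t)}(r_2,s_2)|\leq\const\,e^{-\min\{\delta,1\}s_2}$. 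Symmetrically, opening the image of $\I\R-\e$ past the pole at $W=0$ yields $\widetilde{f}^{(t)}=\widetilde{f}^{(t)}_{\rm res}+\widetilde{f}^{(t)}_{\rm int}$ with $\widetilde{f}^{(t)}_{\rm res}=1+O(t^{-1/3})$, while the deformed contour for $\widetilde{f}^{(t)}_{\rm int}$ can be taken with $\Re W\leq -c_1<0$, so that $e^{s_1W}$ is bounded for $s_1>0$; hence $|\widetilde{f}^{(t)}(r_1,s_1)|\leq\const$ for $s_1>0$ and $t>t_0$. Multiplying, $|\delta\,\widetilde{f}^{(t)}\widetilde{g}^{(t)}|\leq\const\,e^{-\min\{\delta,1\}s_2}$ on $\{s_1,s_2>0\}$.

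For $\widetilde{\mathcal{K}}^{\rm resc}$ there is no pole at $Z=\delta$, so I only have to respect the constraint $\Re Z>\Re W$ inherited from the geometric-series step in the proof of Proposition~\ref{propKernel} together with the requirement that both contours escape to infinity in sectors where $e^{-tw^2/2}$ and $e^{-tz^2/2}$ decay. I deform the $W$-contour to pass through $\Re W=-c_1<0$ (opening at $\pm2\pi/3$), so $e^{s_1W}$ is bounded for $s_1>0$, and the $Z$-contour to pass through $\Re Z=c\in(0,\min\{\delta,1\}]$ to the right of the $W$-contour (opening at $\pm\pi/3$); on the bounded central part $1/|W-Z|$ is bounded and $|e^{-s_2Z}|\leq e^{-cs_2}$, while the tails, treated with the exact exponents as above, are uniformly negligible. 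This gives $|\widetilde{\mathcal{K}}^{\rm resc}|\leq\const\,e^{-cs_2}\leq\const\,e^{-\min\{\delta,1\}s_2}$ (in fact this piece decays super-exponentially in $s_2$). Adding the two estimates proves $|\mathcal{K}_0^{\rm resc}(r_1,s_1;r_2,s_2)|\leq\const\,e^{-\min\{\delta,1\}s_2}$ for all $s_1,s_2>0$ and $t>t_0$.

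The main obstacle is the \emph{uniform-in-$t$} control of the tails of the contour integrals, i.e.\ the parts where $|W|$ or $|Z|$ is large and the cubic Taylor expansion with $O(t^{-1/3})$ remainder is no longer valid. There one must argue with the exact exponent $\tfrac{t}{2}w^2+\xi_i(w+1)+n_i\log(-w)$ and route the contours so that the Gaussian factor (through $\Re(w^2)>0$) and the logarithmic term $-n_i\log(-z)$ together force exponential smallness, all while keeping the pole/branch point at $Z=t^{1/3}$ (which recedes to infinity) on the correct side, respecting $\Re Z>\Re W$, and routing the $W$-contour into $\Re W<0$ so that $e^{s_1W}$ stays bounded for $s_1>0$. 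The conceptual point producing the exponent $\min\{\delta,1\}$ is the residue extraction at $Z=\delta$, which yields the explicit $e^{-\delta s_2}$ tail matching the $t\to\infty$ limit $g_{r_2}$, combined with the freedom to push the remaining $Z$-contours only as far as real part $1$ while keeping the uniform estimates under control.
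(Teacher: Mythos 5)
Your proposal takes the same conceptual route as the paper --- steepest descent around the double critical point $w=z=-1$ with the residue at $z=-\rho$ extracted from the $\mathpzc{g}$-integral to produce the $e^{-\delta s_2}$ factor --- but it organizes the estimates differently. The paper does \emph{not} re-run a fresh contour deformation for this proposition: it takes the identity~\eqref{eq26} (derived via the Fubini trick $\tfrac{1}{w-z}=-\int_0^\infty \D x\, t^{1/3}e^{-t^{1/3}x(z-w)}$ inside the proof of Proposition~\ref{propPointw}), which rewrites $\mathcal{K}_0^{\rm resc}$ entirely as products of the one-dimensional auxiliary integrals $\alpha_t,\beta_t$ together with the residue ${\rm Res}_{\mathpzc{g},-\rho}$, and then simply plugs in the uniform bounds $|\alpha_t(r,s)|\le e^{-s}$, $|\beta_t(r,s)|\le e^{-s}$ of Lemma~\ref{lemAlphaBound} and the uniform bound on the residue from~\eqref{eqResLimit}. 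The whole content of the proposition is therefore concentrated in Lemma~\ref{lemAlphaBound}, whose proof builds the contour $\gamma_1\cup\gamma_2\cup\overline{\gamma_2}$ of Figure~2 with the $s$-dependent width $\omega=\min\{t^{-1/3}\sqrt{s},\e\}$ and extracts $e^{-\frac12 s^{3/2}}$ from the prefactor $e^{G(z_0)}$.

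You instead leave $\widetilde{\mathcal{K}}^{\rm resc}$ as a double contour integral with the factor $1/(W-Z)$ and deform both contours directly, and you deform the $\mathpzc{f}$- and $\mathpzc{g}$-contours separately. That can work, and your decomposition, your residue bookkeeping (the pole at $W=0$ giving the constant $1$, the pole at $Z=\delta$ giving the $e^{-\delta s_2}$ tail), and the contour geometry $\Re W\le -c_1<0<c\le\Re Z$ with $c<\delta$ are all sound. But the place where you write ``the tails\ldots are uniformly negligible'' is precisely where the proof must actually produce a bound that is simultaneously uniform in $t>t_0$ \emph{and} carries the required decay in $s_1,s_2$: along the tail of the $z$-contour the exponent still contains $-\xi_2(z+1)$ with $\xi_2\sim t^{1/3}s_2$, so one must verify that the tail contribution is $\le\const\cdot e^{-cs_2}$, not merely $o(1)$ in $t$. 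This is exactly what Lemma~\ref{lemAlphaBound} does via the $s$-dependent $\omega$, and it is the one step your sketch defers. To turn your plan into a complete proof you would need to reproduce that steepest-descent estimate (with the $s$-dependent contour width) for the $W$- and $Z$-integrals; at that point you would essentially have re-derived Lemma~\ref{lemAlphaBound}, and the paper's factorization through $\alpha_t,\beta_t$ is the more economical packaging of the same analysis. One small further note: in the central estimate you also need the constant multiplying $e^{-cs_2}$ to absorb the $1/|W-Z|$ and the measure of the bounded region, which your choice $\Re(Z-W)\ge c+c_1$ handles, but this should be stated since it is the reason the double integral can be bounded without the Fubini reduction.
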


\begin{prop}[Proposition 5.4 of~\cite{FSW13}]\label{propPhiBound}
For fixed $r_1<r_2$ there exists ${t}_0>0$ and $C>0$ such that
\begin{equation}
	\left|\phi_{r_1,r_2}^{\rm resc}(s_1,s_2)\right|\leq Ce^{-|s_1-s_2|}
\end{equation}
\end{prop}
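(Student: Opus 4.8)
The plan is to reduce the estimate to a single Stirling-type bound for a Gamma/Poisson weight. Substituting the scaling \eqref{eqScaling} into the expression for $\phi_{n_1,n_2}$ from Proposition~\ref{propKernel} together with the conjugation factor $t^{1/3}e^{\xi_1-\xi_2}$ that defines $\mathcal{K}^{\rm resc}$, and writing $N:=n_2-n_1-1$, $a:=\xi_2-\xi_1$, one gets
\begin{equation}
\phi_{r_1,r_2}^{\rm resc}(s_1,s_2)=t^{1/3}\,e^{-a}\,\frac{a^{N}}{N!}\,\Id_{a\geq0},
\end{equation}
so $\phi_{r_1,r_2}^{\rm resc}$ is $t^{1/3}$ times the density of a $\mathrm{Gamma}(N+1,1)$ law evaluated at $a$. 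From \eqref{eqScaling} one has $N=2(r_2-r_1)t^{2/3}+O(1)$ and $a-N=t^{1/3}(s_2-s_1)+O(1)$, the $O(1)$ errors coming from the integer parts and being harmless throughout; the constraint $a\geq0$ is exactly the indicator $\xi_1\leq\xi_2$, so for $s_2-s_1<-2(r_2-r_1)t^{1/3}$ the kernel vanishes and the bound is trivial, and we may assume $a\geq0$.

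First I would apply Stirling's formula $\log N!=N\log N-N+\tfrac12\log(2\pi N)+O(1/N)$ to obtain, uniformly in $a\geq0$,
\begin{equation}
\log\phi_{r_1,r_2}^{\rm resc}(s_1,s_2)=N\,g\!\left(\tfrac{a}{N}\right)-\tfrac12\log\!\big(4\pi(r_2-r_1)\big)+o(1),\qquad g(v):=\log v-v+1\leq0,
\end{equation}
using $\tfrac13\log t-\tfrac12\log(2\pi N)\to-\tfrac12\log(4\pi(r_2-r_1))$. As a byproduct this already gives $\phi_{r_1,r_2}^{\rm resc}\to V_{r_1,r_2}$ of \eqref{DefV}, since $Ng(a/N)\to-(s_2-s_1)^2/(4(r_2-r_1))$ for bounded $s_1,s_2$, in accordance with Proposition~\ref{propPointw}. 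Hence it remains to show
\begin{equation}\label{eqNgbound}
N\big({-}g(a/N)\big)\ \geq\ |s_1-s_2|-\const
\end{equation}
with a constant uniform in $s_1,s_2$ and $t\geq t_0$.

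To prove \eqref{eqNgbound} I would split on $v-1=(a-N)/N$. For $|v-1|\leq\tfrac34$ a Taylor expansion of $v\mapsto v-1-\log v$ at $v=1$ gives $-g(v)\geq\tfrac14(v-1)^2$, so $N({-}g(a/N))\geq(a-N)^2/(4N)\geq c\,(s_1-s_2)^2-\const$ with $c=c(r_1,r_2)>0$; since $\sup_{u}\big(|u|-c\,u^2\big)<\infty$, this yields $\phi_{r_1,r_2}^{\rm resc}\leq C\,e^{-|s_1-s_2|}$ in this region (with $C,c$ depending on $r_2-r_1$). For $|v-1|>\tfrac34$ one has $v$ bounded away from $1$, and there $-g(v)\geq\alpha|v-1|$ for some $\alpha=\alpha(r_1,r_2)>0$: indeed $(-g(v))/|v-1|$ is continuous and strictly positive on $\{v>0:|v-1|\geq\tfrac34\}$, increases to $1$ as $v\to\infty$, and tends to $+\infty$ as $v\downarrow0$, so its infimum over that set is positive. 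Then, using $|a-N|=|t^{1/3}(s_2-s_1)+O(1)|\geq t^{1/3}|s_1-s_2|-\const$,
\begin{equation}
N\big({-}g(a/N)\big)\ \geq\ \alpha\,N|v-1|\ =\ \alpha\,|a-N|\ \geq\ \alpha\,t^{1/3}|s_1-s_2|-\const\ \geq\ |s_1-s_2|-\const
\end{equation}
as soon as $t$ is large enough that $\alpha t^{1/3}\geq1$. Combining the two regions proves the proposition.

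A variant, in the spirit of the steepest-descent arguments used elsewhere in the paper, is to write $\phi_{r_1,r_2}^{\rm resc}(s_1,s_2)=\tfrac{t^{1/3}}{2\pi\I}\oint_{\Gamma_0}\frac{e^{(z-1)a}}{z^{N+1}}\,\D z$, deform $\Gamma_0$ to the steepest-descent contour through the saddle at $z=(N+1)/a\to1$, read off the local Gaussian behaviour, and obtain uniformity from the monotone decay of $\Re\big((z-1)a-(N+1)\log z\big)$ along the contour. In either route I expect the only real difficulty to be the \emph{uniform} bookkeeping over all $s_1,s_2$ — the case split near $v=1$, and the fact that the Gaussian-versus-exponential comparison constant blows up as $r_2-r_1\downarrow0$, which is precisely why $t_0$ and $C$ are allowed to depend on $r_1,r_2$ (as in the statement).
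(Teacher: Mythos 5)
Your argument is correct. A small remark first: the paper itself does not prove this statement; it is quoted as Proposition 5.4 of the companion paper~\cite{FSW13}, so there is no in-text proof to compare against. Your reduction to the exact identity
$\phi_{r_1,r_2}^{\rm resc}(s_1,s_2)=t^{1/3}e^{-a}a^{N}/N!\,\Id_{a\geq0}$
with $N=n_2-n_1-1$, $a=\xi_2-\xi_1$, and the resulting representation
$\log\phi_{r_1,r_2}^{\rm resc}=Ng(a/N)-\tfrac12\log\!\big(4\pi(r_2-r_1)\big)+o(1)$,
$g(v)=\log v-v+1$, is exactly right (the $\tfrac13\log t-\tfrac12\log(2\pi N)\to-\tfrac12\log(4\pi(r_2-r_1))$ cancellation checks out because $N=2(r_2-r_1)t^{2/3}+O(1)$ and the Stirling remainder is $O(1/N)=O(t^{-2/3})$, both independent of $a$). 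The case split on $|a/N-1|$ is sound: in the inner region the quadratic lower bound on $-g$ together with $|u|\leq cu^2+\const$ gives the exponential-in-$|s_1-s_2|$ decay, and it is worth noting (you don't say it, though it is needed) that there $|a-N|\leq\tfrac34 N$ forces $|s_1-s_2|\lesssim t^{1/3}$, which is what makes the $O(t^{1/3}|s_1-s_2|)$ cross term after squaring bounded once divided by $N\asymp t^{2/3}$; in the outer region the linear lower bound $-g(v)\geq\alpha|v-1|$ with $\alpha>0$ (which, incidentally, is a universal constant, not one depending on $r_1,r_2$ as you write) gives $N(-g)\geq\alpha|a-N|\geq\alpha t^{1/3}|s_1-s_2|-\const$, which dominates $|s_1-s_2|$ for $t\geq t_0$. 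Given that~\cite{FSW13} is by the same authors and the present paper consistently uses steepest-descent contour arguments for all of its kernel bounds (cf.\ Lemma~\ref{lemAlphaBound}), your alternative saddle-point route $\phi^{\rm resc}_{r_1,r_2}=\tfrac{t^{1/3}}{2\pi\I}\oint_{\Gamma_0}e^{(z-1)a}z^{-N-1}\D z$ is very likely the one used there; the Stirling/local-CLT route you take is somewhat more elementary but yields the same uniform bound, and either is acceptable.
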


Now we can prove the asymptotic theorem:
\begin{proof}[Proof of Theorem~\ref{thmAsymp}]The joint distributions of the rescaled process $X^{(\delta)}_{t}(r)$ under the measure $\Pb_+$ are given by the Fredholm determinant with series expansion
\begin{equation}\label{eqFredExp}\begin{aligned}
	\Pb_+&\bigg(\bigcap_{k=1}^m\big\{X_{t}^{(\delta)}(r_k)\leq s_k\big\}\bigg)\\
	&\quad=\sum_{N\geq0}\frac{(-1)^N}{N!}\sum_{i_1,\dots,i_N=1}^m\int\prod_{k=1}^N\D \zeta_k\,\Id_{(\zeta_k>\xi_{i_k})}\det_{1\leq k,l\leq N}\left[\mathcal{K}(n_{i_k},\zeta_k;n_{i_l},\zeta_l)\right],
\end{aligned}\end{equation}
where $n_i$ and $\xi_i$ are given in (\ref{eqScaling}). Using the change of variables \mbox{$\sigma_k={t}^{-1/3}(\zeta_k-2{t}-2{t}^{2/3}r_{i_k})$} and a conjugation we obtain
\begin{equation}\label{eqFredExp2}\begin{aligned}
	\eqref{eqFredExp}&=\sum_{N\geq0}\frac{(-1)^N}{N!}\sum_{i_1,\dots,i_N=1}^m\int\prod_{k=1}^N\D \sigma_k\,\Id_{(\sigma_k>s_{i_k})}\\
	&\quad\times\det_{1\leq k,l\leq N}\left[\mathcal{K}^{\rm resc}(r_k,\sigma_k;r_l,\sigma_l)\frac{(1+\sigma_l^2)^{m+1-i_l}}{(1+\sigma_k^2)^{m+1-i_k}}\right],
\end{aligned}\end{equation}
where the fraction inside the determinant is the new conjugation, which does not change the value of the determinant.
Using Corollary~\ref{corBound} and Propositions~\ref{propK0Bound},~\ref{propPhiBound}, we can bound the $(k,l)$-coefficient inside the determinant by
\begin{equation}\label{eqCoeff}
	\const_1\left(e^{-|\sigma_k-\sigma_l|}\Id_{(i_k<i_l)}+e^{-\min\{\delta,1\}\sigma_l}\right)\frac{(1+\sigma_l^2)^{m+1-i_l}}{(1+\sigma_k^2)^{m+1-i_k}},
\end{equation}
assuming the $r_k$ are ordered. The bounds
\begin{equation}\begin{aligned}
	\frac{(1+x^2)^i}{(1+y^2)^j}e^{-|x-y|}&\leq \const_2\frac{1}{1+y^2},& \text{for } i&<j,\\
	\frac{(1+x^2)^i}{(1+y^2)^j}e^{-\min\{\delta,1\} x}&\leq \const_3\frac{1}{1+y^2},& \text{for } j&\geq1,
\end{aligned}\end{equation}
lead to
\begin{equation}
	\eqref{eqCoeff}\leq\const_4\frac{1}{1+\sigma_k^2}.
\end{equation}
Using the Hadamard bound on the determinant, the integrand of \eqref{eqFredExp2} is therefore bounded by
\begin{equation}
	\const_4^NN^{N/2}\prod_{k=1}^N\Id_{(\sigma_k>s_{i_k})}\frac{\D\sigma_k}{1+\sigma_k^2},
\end{equation}
which is integrable. Furthermore,
\begin{equation}
	|\eqref{eqFredExp}|\leq\sum_{N\geq0}\frac{\const_5^NN^{N/2}}{N!},
\end{equation}
which is summable, since the factorial grows like $(N/e)^N$, i.e., much faster than the nominator. Dominated convergence thus allows to interchange the limit ${t}\to\infty$ with the integral and the infinite sum. The pointwise convergence comes from Proposition~\ref{propPointw}, thus
\begin{equation}
	\lim_{{t}\to\infty} \Pb_+\bigg(\bigcap_{k=1}^m\big\{X_{t}^{(\delta)}(r_k)\leq s_k\big\}\bigg)=\det\left(\Id-\chi_s K^\delta\chi_s\right)_{L^2(\{r_1,\dots,r_m\}\times\R)}.
\end{equation}

It remains to show that the convergence carries over to the measure $\Pb$. The identity
\begin{equation}
	\frac{\D s_i}{\D\xi_i}={t}^{-1/3}=\delta^{-1}(1-\rho)
\end{equation}
leads to
\begin{equation}\label{eqShift}
\Pb\bigg(\bigcap_{k=1}^m\big\{X_{t}^{(\delta)}(r_k)\leq s_k\big\}\bigg)=\bigg(1+\frac{1}{\delta}\sum_{i=1}^m\frac{\D}{\D s_i}\bigg)\Pb_+\bigg(\bigcap_{k=1}^m\big\{X_{t}^{(\delta)}(r_k)\leq s_k\big\}\bigg).
\end{equation}
Notice that in \eqref{eqFredExp2}, $s_i$ appears only in the indicator function, so differentiation just results in one of the $\sigma_k$ not being integrated but instead being set to $s_i$. Using the same bounds as before we can again show interchangeability of the limit ${t}\to\infty$ with the remaining integrals and the infinite sum.
\end{proof}

Before showing Propositions~\ref{propPointw} and~\ref{propK0Bound}, we introduce some auxiliary functions and establish asymptotic results for them.
\begin{defin}
Define
\begin{equation}\begin{aligned}
	\alpha_{t}(r,s)&:=\frac{{t}^{1/3}}{2\pi\I}\int_{\I\R}dw\,e^{{t}(w^2-1)/2+\xi(w+1)}(-w)^{n}\\
	&=\frac{{t}^{1/3}}{2\pi\I}\int_{\I\R}dw\,e^{{t}(w^2-1)/2+(2{t}+2{t}^{2/3}r+{t}^{1/3}s)(w+1)}(-w)^{{t}+2{t}^{2/3}r}\\
	\beta_{t}(r,s)&:=\frac{{t}^{1/3}}{2\pi\I}\oint_{\Gamma_0}dz\,e^{-{t}(z^2-1)/2-\xi(z+1)}(-z)^{-n}\\
	&=\frac{{t}^{1/3}}{2\pi\I}\oint_{\Gamma_0}dz\,e^{-{t}(z^2-1)/2-(2{t}+2{t}^{2/3}r+{t}^{1/3}s)(z+1)}(-z)^{-{t}-2{t}^{2/3}r}.	
\end{aligned}\end{equation}
\end{defin}

\begin{lem}\label{lemAlphaLimit}
For fixed $r$ and $L$ the limits
\begin{equation}\begin{aligned}
	\alpha(r,s)&:=\lim_{{t}\to\infty}\alpha_{t}(r,s)=\Ai(r^2+s)e^{-\frac{2}{3}r^3-rs}\\
	\beta(r,s)&:=\lim_{{t}\to\infty}\beta_{t}(r,s)=-\Ai(r^2+s)e^{\frac{2}{3}r^3+rs}
\end{aligned}\end{equation}
hold uniformly for $s\in[-L,L]$.
\end{lem}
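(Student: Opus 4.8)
The plan is a standard steepest-descent (saddle-point) analysis of the two contour integrals defining $\alpha_t$ and $\beta_t$. I will treat $\alpha_t$ in detail; $\beta_t$ is entirely analogous with the roles of the $w$-contour and $z$-contour swapped (it is in fact obtained from $\alpha_t$ by $w\mapsto -z$ together with an overall sign change and complex conjugation of the exponent), so once the first limit is established the second follows by the same computation. First I would write the integrand of $\alpha_t$ in the exponential form $e^{t f_t(w)}$, collecting the large-$t$ part of the exponent. Using $(-w)^{t+2t^{2/3}r} = \exp\big((t+2t^{2/3}r)\log(-w)\big)$, the exponent is
\begin{equation}
t f_t(w) = \tfrac{t}{2}(w^2-1) + (2t+2t^{2/3}r+t^{1/3}s)(w+1) + (t+2t^{2/3}r)\log(-w).
\end{equation}
The leading term as $t\to\infty$ is $t\big(\tfrac12(w^2-1) + 2(w+1) + \log(-w)\big) =: t f_0(w)$, and one checks that $f_0'(w) = w + 2 + 1/w = (w+1)^2/w$, so $w=-1$ is a double critical point of $f_0$ with $f_0(-1)=0$ and $f_0''(-1)=0$, $f_0'''(-1)=2$. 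This cubic degeneracy is exactly what produces the Airy function in the limit, and it is the structural heart of the argument.

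Next I would carry out the local analysis near $w=-1$ by the substitution $w = -1 + t^{-1/3}W$ (with $W$ ranging over a suitably tilted vertical contour so that the steepest-descent directions $e^{\pm\I\pi/3}$ emanating from the saddle are matched). Expanding the exponent, the cubic term from $\log(-w)$ gives $\tfrac{t}{3}(t^{-1/3}W)^3 f_0'''(-1)/3! = W^3/3$; the quadratic terms from $f_0$ cancel by degeneracy, but the $r$-dependent part $2t^{2/3}r(w+1) + 2t^{2/3}r\log(-w)$ expands to $2r t^{2/3}\big(t^{-1/3}W - t^{-1/3}W + \tfrac12 t^{-2/3}W^2 + O(t^{-1})\big) = r W^2 + O(t^{-1/3})$, and the $s$-dependent part $t^{1/3}s(w+1)$ gives $sW + O(t^{-1/3})$. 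The prefactor $t^{1/3}$ is precisely the Jacobian $\D w = t^{-1/3}\D W$ that makes the measure $O(1)$. Hence the integrand converges pointwise (in $W$) to $e^{W^3/3 + rW^2 + sW}$, and after completing the cube via $W\mapsto W - r$ one recognizes
\begin{equation}
\frac{1}{2\pi\I}\int e^{W^3/3 + (r^2+s)W}\,\D W\, e^{-\frac{2}{3}r^3 - rs} = \Ai(r^2+s)\,e^{-\frac{2}{3}r^3 - rs},
\end{equation}
using the standard contour-integral representation of the Airy function (the contour running from $e^{-2\pi\I/3}\infty$ to $e^{2\pi\I/3}\infty$, which is what the tilted vertical line becomes after the shift).

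To turn this pointwise-in-$W$ convergence into an honest limit — uniform in $s\in[-L,L]$ — I would split the $w$-contour into a central piece $|w+1|\le t^{-1/3}\log t$ (say) and a tail. On the central piece dominated convergence applies: the real part of the exponent is controlled by $\Re(W^3/3) + r\Re(W^2) + O(t^{-1/3}|W|^4)$ along the chosen steepest-descent path, giving an integrable Gaussian-type bound uniform in $t$ large and in $s\in[-L,L]$ (the $s$-dependence sits only in the harmless linear term $sW$). On the tail, one must choose the global contour for $w$ — still $\I\R$, or a deformation of it that does not cross the branch cut of $\log(-w)$ along $\R_+$ and on which $\Re f_0(w)$ is bounded strictly below $\Re f_0(-1)=0$ away from the saddle — and show the contribution is exponentially small in $t$; here one uses that $\Re f_0 \to -\infty$ as $|w|\to\infty$ along the vertical direction (the $w^2$ term dominates) and that $\Re f_0 < 0$ on the rest of the contour by a direct estimate, while the $r,s$-dependent correction terms are at most $O(t^{2/3}|w+1|)$ and thus dominated by the $O(t)$ decay. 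The main obstacle is precisely this last point: producing a clean, $t$-uniform bound on the full (non-local) contour, including verifying the descent property of $\Re f_0$ globally and checking that the polynomial-in-$t$ prefactors from the $r$- and $s$-terms do not spoil the exponential decay. This is routine but somewhat tedious; one can shortcut part of it by noting that the very same contour estimates are needed — and presumably established — for the kernel bounds in Corollary~\ref{corBound} and Proposition~\ref{propK0Bound}, and $\alpha_t,\beta_t$ are building blocks of that kernel.
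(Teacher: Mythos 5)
Your route --- direct steepest descent at the degenerate cubic saddle $w=-1$ --- is genuinely different from the paper's. The paper rewrites $\alpha_t$ and $\beta_t$ exactly as (explicit prefactors times) Hermite polynomials via the two contour-integral representations of $\He_n$, and then imports the Plancherel--Rotach asymptotics $n^{1/12}e^{-x^2/4}\He_n(2\sqrt n + n^{-1/6}u)=\Ai(u)+\Or(n^{-2/3})$, uniform for bounded $u$, citing Szeg\H{o}; after that only Stirling's formula and Taylor expansion of elementary prefactors remain. That approach outsources precisely the local-plus-tail contour analysis that your version still has to carry out.

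There are two concrete errors in your local computation. First, the signs are wrong: with $w=-1+t^{-1/3}W$ the surviving terms combine to $-\tfrac13 W^3 - rW^2 + sW$, not $\tfrac13 W^3 + rW^2 + sW$. Already at $r=s=0$ the three contributions give $-t^{2/3}W+\tfrac12 t^{1/3}W^2+2t^{2/3}W+t\log(1-t^{-1/3}W)=-\tfrac13 W^3+\Or(t^{-1/3})$, and the $r$-part $2t^{2/3}r\bigl(w+1+\log(-w)\bigr)$ contributes $-rW^2$, not $+rW^2$. Second, the completion of the cube as displayed is incorrect: $W\mapsto W-r$ applied to $\tfrac13 W^3 + rW^2 + sW$ gives $\tfrac13 W^3 + (s-r^2)W + \tfrac23 r^3 - rs$, not $\tfrac13 W^3 + (r^2+s)W -\tfrac23 r^3 - rs$; and in any case $\tfrac{1}{2\pi\I}\int e^{W^3/3+(r^2+s)W}\,\D W=\Ai\bigl(-(r^2+s)\bigr)$, since the Airy representation requires $e^{W^3/3 - xW}$. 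You land on the correct final answer only through compensating slips. Starting instead from the correct $-\tfrac13 W^3 - rW^2 + sW$, the shift $W=V-r$ followed by $V\mapsto -V$ yields $\tfrac13 V^3 - (r^2+s)V - \tfrac23 r^3 - rs$, which is the desired Airy integrand with the claimed prefactor. Finally, the non-local tail estimate you set aside is a genuine gap: the paper's proof of this lemma does not carry out such a bound (the Hermite route sidesteps it entirely), and the contour deformation in Lemma~\ref{lemAlphaBound} is engineered for the distinct purpose of an exponential bound at large $s$, so it cannot simply be cited here without a separate argument for the present limit.
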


\begin{proof}
Let $\He_n(x)$ be the normalized $n$-th order Hermite polynomial with respect to the weight $e^{-x^2/2}$, i.e., satisfying
\begin{equation}
	\int_\R\He_m(x)\He_n(x)e^{-x^2/2}\D x =\delta_{mn}.
\end{equation}
There exist two explicit integral representations for these polynomials:
\begin{equation}\begin{aligned}
	\He_n(x)&=\frac{1}{\I(2\pi)^{3/4}\sqrt{n!}}e^{x^2/2}\int_{\I\R+\e}\D w\,e^{w^2/2-xw}w^{n}\\
	\He_n(x)&=\frac{\sqrt{n!}}{\I(2\pi)^{5/4}}\oint_{\Gamma_0}\D z\,e^{-(z^2/2-xz)}z^{-(n+1)}.
\end{aligned}\end{equation}
From \cite{Sze67}, p. 201, one obtains the asymptotic behaviour
\begin{equation}
	n^{1/12}e^{-x^2/4}\He_n\left(x\right)=\Ai(u)+\Or(n^{-2/3}),\quad x=2\sqrt{n}+n^{-1/6}u
\end{equation}
uniformly for any bounded $u$.

Introducing the shorthands $x=2{t}^{1/2}+2{t}^{1/6}r+{t}^{-1/6}s$, $n={t}+2{t}^{2/3}r$ and applying the change of variables $w\mapsto-w{t}^{-1/2}$, we can write
\begin{equation}\label{eqAlphaRep}
	\alpha_{t}(r,s)=\frac{{t}^{1/3}}{(2\pi)^{1/4}}e^{-{t}/2+{t}^{1/2}x}{t}^{-(n+1)/2}\sqrt{n!}e^{-x^2/2}\He_n(x).
\end{equation}
Using Stirling's approximation and Taylor expansion in the exponents one can further analyze this as
\begin{equation}
	\alpha_{t}(r,s)=e^{-\frac{2}{3}r^3-rs+\Or({t}^{-1/3})}\left(1+\Or({t}^{-1/3})\right)n^{1/12}e^{-x^2/4}\He_n(x),
\end{equation}
with with the error terms being uniform for $s\in[-L,L]$. The observation
\begin{equation}
	u=n^{1/6}(x-2\sqrt{n})\to r^2+s,
\end{equation}
as ${t}\to\infty$, settles the convergence of $\alpha_{t}$.

Using the second integral representation of the Hermite polynomials one can rewrite $\beta_{t}$, too:
\begin{equation}\label{eqBetaRep}
	\beta_{t}(r,s)=-{t}^{1/3}(2\pi)^{1/4}e^{{t}/2-{t}^{1/2}x}{t}^{n/2}(n!)^{-1/2}\He_n(x).
\end{equation}
Analyzing the prefactor as done before finishes the proof.
\end{proof}

\begin{lem}\label{lemAlphaBound}
For fixed $r$, there exist ${t}_0,L$ such that for all ${t}>{t}_0$ and $s>L$ the following bounds hold
\begin{equation}\begin{aligned}
	|\alpha_{t}(r,s)|&\leq e^{-s}\\
	|\beta_{t}(r,s)|&\leq e^{-s}
\end{aligned}\end{equation}
\end{lem}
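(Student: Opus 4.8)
The plan is to reduce both inequalities to a single estimate on the rescaled Hermite function $\Psi_n(x):=n^{1/12}e^{-x^2/4}\He_n(x)$, where $x=2\sqrt t+2t^{1/6}r+t^{-1/6}s$ and $n=\lfloor t+2t^{2/3}r\rfloor$. Starting from the representations \eqref{eqAlphaRep} and \eqref{eqBetaRep} together with Stirling's formula, I would first pull out the explicit $s$-dependent prefactors: a (tedious but routine) expansion of the logarithms gives
\[
\alpha_t(r,s)=e^{-\frac23 r^3-rs-\frac14 t^{-1/3}s^2+E_t}\,\Psi_n(x),\qquad
\beta_t(r,s)=-\,e^{\frac23 r^3+rs+\frac14 t^{-1/3}s^2+\widetilde E_t}\,\Psi_n(x),
\]
where the errors $E_t,\widetilde E_t$ depend only on $r$ (not on $s$), are bounded for $t\ge 1$, and tend to $0$. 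Everything then rests on a bound for $|\Psi_n(x)|$ strong enough to beat a prefactor growing like $e^{\pm rs\pm\frac14 t^{-1/3}s^2}$.

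To bound $\Psi_n(x)$ I would insert the contour-integral representation of $\He_n$ from the excerpt and choose the circle $|z|=\rho_-$, with $\rho_\pm:=\tfrac12\big(x\pm\sqrt{x^2-4n}\big)$ (legitimate because $x>2\sqrt n$ once $s>L>0$ and $t$ is large). Writing $\psi:=\mathrm{arccosh}\big(x/2\sqrt n\big)>0$, so that $\rho_\pm=\sqrt n\,e^{\pm\psi}$, the modulus of the integrand on this circle is maximal at the real point $z=\rho_-$; estimating the angular integral not by its maximum times the circumference but by a Laplace/Gaussian estimate around $z=\rho_-$ — this refinement is essential, since a crude bound loses a power of $n$ that would wreck the estimate precisely when $s$ stays moderate while $t\to\infty$ — and simplifying via $\rho_-^2-x\rho_-+n=0$ yields
\[
|\Psi_n(x)|\le C\,e^{\,nF(\psi)},\qquad F(\psi):=\psi-\tfrac12\sinh 2\psi,\qquad nF(\psi)=n\psi-\tfrac14 x\sqrt{x^2-4n},
\]
with a genuine constant $C$, since the curvature $\rho_-(x-2\rho_-)=n(1-e^{-2\psi})$ diverges in the regime considered. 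From $\sinh 2\psi\ge 2\psi+\tfrac43\psi^3$ one also gets the useful $F(\psi)\le-\tfrac23\psi^3\le 0$.

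It then remains to combine these ingredients over three ranges of $s$ (all thresholds depending only on $r$). On a fixed compact range $s\in[L,M]$ with $M$ large, Lemma~\ref{lemAlphaLimit} gives $\alpha_t\to\Ai(r^2+s)e^{-\frac23 r^3-rs}$ (and the analogue for $\beta_t$) uniformly, and $|\Ai(y)|\le Ce^{-\frac23 y^{3/2}}$ makes the limit — hence $\alpha_t,\beta_t$ for $t\ge t_0(M)$ — at most $e^{-s}$ there. In the ``edge'' range $L<s\le\kappa t^{2/3}$, expanding $\psi$ via $x-2\sqrt n=t^{-1/6}(s+r^2)+O(t^{-1/2})$ gives $n\psi^3\ge c(s+r^2)^{3/2}$, so $nF(\psi)\le-\tfrac23 c(s+r^2)^{3/2}$; since in this range $\tfrac14 t^{-1/3}s^2\le\tfrac14\sqrt\kappa\,s^{3/2}$, the cubic decay dominates once $\kappa$ is small and $L$ large. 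In the ``far tail'' $s>\kappa t^{2/3}$ one keeps $\sqrt{x^2-4n}$ exact, using $x^2-4n=t^{-1/3}s^2+4rs+4t^{1/3}(s+r^2)$ and the crude $\psi\le\log\!\big(x/\sqrt n\big)=O(\log(t^{-2/3}s))$: the leading $\tfrac14 t^{-1/3}s^2$ extracted from $\tfrac14 x\sqrt{x^2-4n}$ cancels the like term in the prefactor, the surviving $-t^{1/3}s$ hidden inside $x^2/4$ beats $s$ because $t^{1/3}\to\infty$, and $n\psi=O(t\log(t^{-2/3}s))$ is absorbed once $\kappa$ is large. The main obstacle is exactly this gluing: the edge argument wants $\kappa$ small while the far-tail argument wants $\kappa$ large, so both must be run sharply enough (using $F(\psi)\le-\tfrac23\psi^3$ rather than merely $F\le0$, and not linearizing $\sqrt{x^2-4n}$) for the two ranges to overlap — and, since none of these asymptotic estimates is uniform on compacts, the compact range has to be peeled off and handled separately via Lemma~\ref{lemAlphaLimit}, which is what lets all the thresholds be collapsed into a single $t_0$.
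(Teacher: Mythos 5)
Your approach is genuinely different from the paper's. The paper runs a direct steepest-descent argument on the original contour integral defining $\beta_t$, moving the contour to pass through $z_0=-1+\omega$ with $\omega=\min\{t^{-1/3}\sqrt{s},\e\}$ (which implicitly interpolates between the edge and far-tail regimes), bounds the saddle-point contribution and the descent arcs, and then obtains the $\alpha_t$ estimate essentially for free from the exact algebraic identity $\alpha_t(r,s)/\beta_t(r,s)=f_t(r)\,e^{-2rs-s^2t^{-1/3}}$, whose extra factor is dominated by the $e^{-\frac12 s^{3/2}}$ decay already proved for $\beta_t$. You instead factor both $\alpha_t$ and $\beta_t$ through the Plancherel--Rotach form $\Psi_n(x)=n^{1/12}e^{-x^2/4}\He_n(x)$, reduce to a classical saddle-point bound on a circle of radius $\rho_-$ for the Hermite integral, and match the explicit $s$-dependent prefactors $e^{\mp rs\mp\frac14 t^{-1/3}s^2}$ against the decay $e^{nF(\psi)}$ with $F(\psi)\le-\tfrac23\psi^3$, splitting into three explicit $s$-ranges. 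Your prefactor bookkeeping is correct (in particular isolating the $-\frac14 t^{-1/3}s^2$ term, which the paper's Lemma~\ref{lemAlphaLimit} absorbs into an error only valid for bounded $s$, is the right move), the Laplace step can indeed be made into a rigorous upper bound rather than an asymptotic (e.g.\ via $1-\cos\theta\ge 2\theta^2/\pi^2$, which gives $\int e^{h(\theta)}\,\D\theta\le e^{h(0)}\sqrt{\pi^3/(2(n-\rho_-^2))}$ with a genuine constant), and the worry about gluing the edge and far-tail ranges is in fact overcautious: once one verifies that $rs+\tfrac14 t^{-1/3}s^2+nF(\psi)$ equals $n\bigl(\psi-e^{-\psi}\sinh\psi\bigr)-t^{1/3}(s+r^2)+\Or(1)$, one can check that this is $\le -s-\mathrm{const}$ for all $s\ge L$ and $t\ge t_0$ without any tension between the two regimes, so the intermediate threshold $\kappa$ can be chosen freely. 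What your route buys is a more conceptual connection to orthogonal-polynomial asymptotics and a cleaner cancellation of the $\tfrac14 t^{-1/3}s^2$ term; what the paper's route buys is self-containment (no appeal to Hermite-function bounds beyond the pointwise Szeg\H{o} limit) and the economy of bounding only $\beta_t$ directly. Neither your sketch nor the paper addresses the minor point that the representation via $\He_n$ requires integer $n$ while the scaling uses $n=t+2t^{2/3}r$, but this is easily repaired by taking floors throughout.
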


\begin{proof}
We start by analyzing $\beta_{t}$. Defining functions as
\begin{equation}\begin{aligned}
	f_3(z)&=-(z^2-1)/2-2(z+1)-\ln(-z)\\
	f_2(z)&=-2r(z+1+\ln(-z))\\
	f_1(z)&=-s(z+1),
\end{aligned}\end{equation}
we can write $G(z)={t} f_3(z)+{t}^{2/3}f_2(z)+{t}^{1/3}f_1(z)$, leading to
\begin{equation}
	\beta_{t}(r,s)=\frac{{t}^{1/3}}{2\pi\I}\oint_{\Gamma_0}\D z\, e^{G(z)}.
\end{equation}
\begin{figure}
 \centering
 \psfrag{-1}[lb]{$-1$}
 \psfrag{omega}[lb]{$\omega$}
 \psfrag{Gamma}[lb]{$\Gamma$}
 \psfrag{theta}[lb]{$\theta$}
 \psfrag{R}[lb]{$R$}
 \includegraphics[height=5cm]{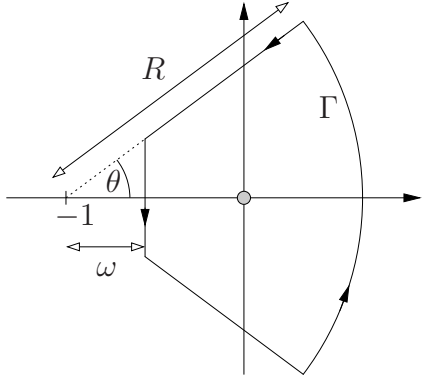}
 \caption{The contour $\Gamma=\gamma_1\cup\gamma_2(R)\cup\overline{\gamma_2(R)}\cup\gamma_3(R)$ used for obtaining the uniform bounds.}
 \label{figboundContour}
\end{figure}

Define a new parameter $\omega$ given by
\begin{equation}\label{eqOmega}
	\omega=\min\left\{{t}^{-1/3}\sqrt{s},\e\right\},
\end{equation}
for some small, positive $\e$ chosen in the following, and let $\theta\in(\pi/6,\pi/4)$. As shown in Figure~\ref{figboundContour}, we change the contour $\Gamma_0$ to \mbox{$\gamma_1\cup\gamma_2(R)\cup\overline{\gamma_2(R)}\cup\gamma_3(R)$}, with
\begin{equation}\begin{aligned}
	\gamma_1&=\{-1+\omega(1+\I u\tan\theta), u\in[-1,1] \}\\
	\gamma_2(R)&=\{-1+ue^{\I\theta}, u\in[\omega/\cos\theta,R]\}\\
	\gamma_3(R)&=\{-1+Re^{\I u}, u\in[-\theta,\theta]\}.
\end{aligned}\end{equation}
Since we will only estimate the absolute value of the integrals, the direction of integration does not matter.

If ${t}$ and $s$ are fixed, the integrand is dominated by the $\exp(-z^2)$ term for large $|z|$. Thus the contribution coming from $\gamma_3(R)$ converges to $0$ as $R\to\infty$. With $\gamma_2=\lim_{R\to\infty}\gamma_2(R)$ our choice for the contour of integration is now \mbox{$\gamma_1\cup\gamma_2\cup\overline{\gamma_2}$}.

We start by analyzing
\begin{equation}\label{eqAlphFac}
	\frac{{t}^{1/3}}{2\pi\I}\int_{\gamma_1}\D z\, e^{G(z)}=e^{G(z_0)}\frac{{t}^{1/3}}{2\pi}\int_{[-\omega\tan\theta,\omega\tan\theta]}\D u\, e^{G(z_0+\I u)-G(z_0)},
\end{equation}
where $z_0=-1+\omega$.

Let us consider the prefactor $e^{G(z_0)}$ at first. Since $\omega$ is small we can use Taylor expansion, as well as \eqref{eqOmega}, to obtain the bounds
\begin{equation}\begin{aligned}
	{t} f_3(z_0)&={t}\left(\omega^3/3+\Or(\omega^4)\right)\leq\frac{1}{3}\omega s{t}^{1/3}\left(1+\Or(\e)\right)\\
	{t}^{2/3}f_2(z_0)&={t}^{2/3}r\left(\omega^2+\Or(\omega^3)\right)\leq\omega \sqrt{s}{t}^{1/3}|r|\left(1+\Or(\e)\right)\\
	{t}^{1/3}f_1(z_0)&=-\omega s{t}^{1/3}.
\end{aligned}\end{equation}
All error terms are to be understood uniformly in $s,{t},r$. The $f_1$ term dominates both $f_2$, if $L$ is chosen large enough, and $f_3$, for $\e$ being small. This results in
\begin{equation}\label{eqG0bound}
	e^{G(z_0)}\leq e^{-\frac{1}{2}\omega s{t}^{1/3}}\leq e^{-\frac{1}{2}s^{3/2}}.
\end{equation}

To show convergence of the integral part of \eqref{eqAlphFac} we first bound the real part of the exponent:
\begin{equation}\begin{aligned}
&\Re\left(G(z_0+\I u)-G(z_0)\right)\\
=&\Re\bigg[{t}\left(\frac{u^2-2z_0\I u}{2}-2\I u-\ln\frac{z_0+\I u}{z_0}\right)+{t}^{2/3}\cdot 2r\left(\I u+\ln\frac{z_0+\I u}{z_0}\right)-{t}^{1/3}s\I u\bigg]\\
=&{t}\left(\frac{u^2}{2}-\frac{1}{2}\ln\left(1+\frac{u^2}{z_0^2}\right)\right)+{t}^{2/3}r\ln\left(1+\frac{u^2}{z_0^2}\right)\\
\leq&{t}\frac{u^2}{2}\left(1-\frac{1}{z_0^2}+\frac{u^2}{2z_0^4}\right)+{t}^{2/3}r\frac{u^2}{z_0^2}=:-\eta{t}^{2/3}u^2.
\end{aligned}\end{equation}
$\eta$ satisfies:
\begin{equation}\begin{aligned}
\eta&=\frac{{t}^{1/3}}{2}\left(\frac{1}{(1-\omega)^2}-1-\frac{u^2}{2(1-\omega)^4}\right)-\frac{r}{(1-\omega)^2}\\
&={t}^{1/3}\omega\left(1+\Or(\omega)\right)-r\left(1+\Or(\omega)\right),
\end{aligned}\end{equation}
where we used $|u|<\omega$. Given any $\e$ we can now choose both $L$ and ${t}_0$ large, such that the first term dominates. Consequently $\eta$ will be bounded from below by some positive constant $\eta_0$. The integral contribution coming from $\gamma_1$ can thus be bounded as
\begin{equation}\begin{aligned}
	|\eqref{eqAlphFac}|&= e^{G(z_0)}\frac{{t}^{1/3}}{2\pi}\left|\int_{[-\omega\tan\theta,\omega\tan\theta]}\D u\,e^{G(z_0+iu)-G(z_0)}\right|\\
	&\leq e^{-\frac{1}{2}s^{3/2}}\frac{{t}^{1/3}}{2\pi}\int_\R \D u\,e^{-\eta_0{t}^{2/3}u^2}=e^{-\frac{1}{2}s^{3/2}}\frac{1}{2\pi}\int_\R \D u\,e^{-\eta_0u^2}\\
	&=e^{-\frac{1}{2}s^{3/2}}\frac{1}{2\sqrt{\pi\eta_0}}\leq e^{-s}.
\end{aligned}\end{equation}

Finally we need a corresponding bound on the $\gamma_2$ contribution to the integral. By symmetry this case covers also the contour $\overline{\gamma_2}$. Write
\begin{equation}\label{eqAlphFac2}
	\frac{{t}^{1/3}}{2\pi\I}\int_{\gamma_2}\D z\, e^{G(z)}=e^{G(z_1)}\frac{{t}^{1/3}e^{\I\theta}}{2\pi\I}\int_{\R_+}\D u\, e^{G(z_1+ue^{\I\theta})-G(z_1)},
\end{equation}
with $z_1=-1+\omega(1+\I\tan\theta)$. From the previous estimates one easily gets
\begin{equation}
	\left|e^{G(z_1)}\right|\leq e^{G(z_0)}\leq e^{-\frac{1}{2}s^{3/2}},
\end{equation}
so the remaining task is to show boundedness of the integral part of \eqref{eqAlphFac2}.

At first notice that the real part of the $f_1$ contribution in the exponent is negative, so we can omit it, avoiding the problem of large $s$. By elementary calculus, we have for all $u\geq\omega/\cos\theta$,
\begin{equation}
	\frac{\D}{\D u}\Re\left(f_3(-1+ue^{\I\theta})\right)<0,
\end{equation}
that is, $\gamma_2$ is a steep descent curve for $f_3$. We can therefore restrict the contour to a neighbourhood of the critical point $z_1$, which we choose of magnitude $\delta$, at the expense of an error of order $\Or(e^{-\const_\delta {t}})$:
\begin{equation}\label{eq3.43}\begin{aligned}
	\left|\frac{{t}^{1/3}}{2\pi}\int_{\R_+}\D u\, e^{G(z_1+ue^{\I\theta})-G(z_1)}\right|&\leq \left(1+\Or(e^{-\const_\e {t}})\right)\int_0^\delta\D u\, \left|e^{{t}\hat{f}_3(ue^{\I\theta})+{t}^{2/3}\hat{f}_2(ue^{\I\theta})}\right|
\end{aligned}\end{equation}
where $\hat{f}_i(z)=f_i(z_1+z)-f_i(z_1)$. Taylor expanding these functions leads to
\begin{equation}\begin{aligned}
	\Re({t}\hat{f}_3(ue^{\I\theta}))&={t} \Re(e^{3\I\theta})u\frac{\omega^2}{\cos^2\theta}\left(1+\Or(\delta)\right)\left(1+\Or(\e)\right)\\
&\leq-\chi_3{t}^{1/3}\omega \cdot{t}^{2/3}u\omega\\	\Re({t}^{2/3}\hat{f}_2(ue^{\I\theta}))&=2\Re(e^{2\I\theta}){t}^{2/3}ru\frac{\omega}{\cos\theta}\left(1+\Or(\delta)\right)\left(1+\Or(\e)\right)\\
&\leq\chi_2|r|\cdot{t}^{2/3}u\omega,
\end{aligned}\end{equation}
for some positive constants $\chi_2$, $\chi_3$, by choosing $\delta$ and $\e$ small enough. For large $L$ and ${t}_0$, $-\chi_3{t}^{1/3}\omega$ dominates over $\chi_2|r|$, so we can further estimate:
\begin{equation}
	\int_0^\delta\D u\, \left|e^{{t}\hat{f}_3(ue^{\I\theta})+{t}^{2/3}\hat{f}_2(ue^{\I\theta})+\hat{f}_0(ue^{\I\theta})}\right|\leq\int_0^\infty\D u\, e^{-\chi_3{t}\omega^2 u/2}\leq\frac{2}{\chi_3{t}\omega^2}.
\end{equation}
This settles the uniform exponential bound on $\beta_{t}$.

Combining \eqref{eqAlphaRep} and \eqref{eqBetaRep} one obtains
\begin{equation}
	\frac{\alpha_{t}(r,s)}{\beta_{t}(r,s)}=f_{t}(r)e^{-2rs-s^2{t}^{-1/3}}
\end{equation}
for some function $f_{t}(r)$. From the convergence of $\alpha_{t}$ and $\beta_{t}$ it is clear that $f_{t}$ converges, too. Since we already know that $\beta_{t}$ is uniformly bounded by a constant times $e^{-s^{3/2}/2}$, the exponential bound on $\alpha_{t}$ follows.
\end{proof}

\begin{proof}[Proof of Proposition~\ref{propPointw}]
Regarding the first part of the kernel, we notice, that $n_i=0$ does not appear in our scaling, so we can use the formula (for $n_2>n_1$)
\begin{equation}
\phi_{n_1,n_2}(\xi_1,\xi_2)=(\phi_{n_1}*\dots\phi_{n_2-1})(\xi_1,\xi_2)=\frac{(\xi_2-\xi_1)^{n_2-n_1-1}}{(n_2-n_1-1)!}\Id_{\xi_1\leq \xi_2}.
\end{equation}
This is the same function as in \cite{FSW13}, proof of Proposition 5.1, so the limit
\begin{equation}\begin{aligned}
	\lim_{t\to\infty}\phi^{\rm resc}_{r_1,r_2}(s_1,s_2)=
		\frac{1}{\sqrt{4\pi (r_2-r_1)}}e^{-(s_2-s_1)^2/4(r_2-r_1)}\Id(r_1<r_2),
\end{aligned}\end{equation} is not proven here.

The different parts of the remaining kernel can be rewritten as integrals over the previously defined functions $\alpha$ and $\beta$. For $\widetilde{\mathcal{K}}$, choose the contours in such a way that $\Re(z-w)>0$ is ensured.
\begin{equation}\begin{aligned}
&{t}^{1/3}e^{\xi_1-\xi_2}\widetilde{\mathcal{K}}(n_1,\xi_1;n_2,\xi_2)\\
=&\frac{{t}^{1/3}}{(2\pi\I)^2}\int_{\I\R-\e}\D w\oint_{\Gamma_0}\D z\frac{e^{{t} w^2/2+\xi_1(w+1)}}{e^{{t} z^2/2+\xi_2 (z+1)}}\frac{(-w)^{n_1}}{(-z)^{n_2}}\frac{1}{w-z}\\
	=&\frac{-{t}^{1/3}}{(2\pi\I)^2}\int_{\I\R-\e}\D w\oint_{\Gamma_0}\D z\frac{e^{{t} (w^2-1)/2+\xi_1(w+1)}}{e^{{t} (z^2-1)/2+\xi_2 (z+1)}}\frac{(-w)^{n_1}}{(-z)^{n_2}}\int_0^\infty \D x\,{t}^{1/3}e^{-{t}^{1/3}x(z-w)}\\
	=&-\int_0^\infty \D x\,\alpha_{t}(r_1,s_1+x)\beta_{t}(r_2,s_2+x)
\end{aligned}\end{equation}
Also rewrite $\mathpzc{f}$ as follows:
\begin{equation}\begin{aligned}
&e^{-{t}/2+\xi_1}\mathpzc{f}(n_1,\xi_1)=\frac{1}{2\pi\I}\int_{\I\R-\e}\D w\,\frac{e^{{t} (w^2-1)/2+\xi_1(w+1)}(-w)^{n_1}}{w+1}\\
	&=1+\frac{1}{2\pi\I}\int_{\I\R-\e-1}\D w\,\frac{e^{{t} (w^2-1)/2+\xi_1(w+1)}(-w)^{n_1}}{w+1}\\
	&=1-\frac{1}{2\pi\I}\int_{\I\R-\e-1}\D w\,e^{{t} (w^2-1)/2+\xi_1(w+1)}(-w)^{n_1}\int_0^\infty \D x\, {t}^{1/3}e^{{t}^{1/3}x(w+1)}\\&=1-\int_0^\infty \D x\, \alpha_{{t}}(r_1,s_1+x).
\end{aligned}\end{equation}
Similarly,
\begin{equation}\begin{aligned}
	e^{{t}/2-\xi_2}\mathpzc{g}(n_2,\xi_2)&={\rm Res}_{\mathpzc{g},-\rho}+\int_0^\infty \D x\, \beta_{{t}}(r_2,s_2+x)e^{\delta x},
\end{aligned}\end{equation}
with
\begin{equation}
	{\rm Res}_{\mathpzc{g},-\rho}=e^{{t}^{2/3}\delta-{t}^{1/3}\delta^2/2-\xi_2{t}^{-1/3}\delta}(1-{t}^{-1/3}\delta)^{-n_2}.
\end{equation}
The residuum satisfies the limit
\begin{equation}\label{eqResLimit}
	\lim_{{t}\to\infty}{\rm Res}_{\mathpzc{g},-\rho}=e^{\delta^3/3+r_2\delta^2-s_2\delta}
\end{equation}
uniformly in $s_2$.
The prefactor of the last part of the kernel is simply \mbox{${t}^{1/3}(1-\rho)=\delta$}.
Combining all these equations gives
\begin{equation}\begin{aligned}\label{eq26}
&\mathcal{K}_0^{\rm resc}(r_1,s_1;r_2,s_2)=-\int_0^\infty \D x\,\alpha_{t}(r_1,s_1+x)\beta_{t}(r_2,s_2+x)\\
	&+\delta\left(1-\int_0^\infty \D x\, \alpha_{{t}}(r_1,s_1+x)\right)\left({\rm Res}_{\mathpzc{g},-\rho}+\int_0^\infty \D x\, \beta_{{t}}(r_2,s_2+x)e^{\delta x}\right).
\end{aligned}\end{equation}
Using the previous lemmas we can deduce compact convergence of the kernel. Indeed (omitting the $r$-dependence for greater clarity) we can write:
\begin{equation}\begin{aligned}\label{eqSupK0}
	\sup_{s_1,s_2\in[-L,L]}&\left|\int_0^\infty \D x\,\alpha_{t}(s_1+x)\beta_{t}(s_2+x)-\int_0^\infty \D x\,\alpha(s_1+x)\beta(s_2+x)\right|\\
	\leq&\int_0^\infty \D x\,\sup_{s_1,s_2\in[-L,L]}\left|\alpha_{t}(s_1+x)\beta_{t}(s_2+x)-\alpha(s_1+x)\beta(s_2+x)\right|.
\end{aligned}\end{equation}
By Lemma~\ref{lemAlphaLimit} the integrand converges to zero for every $x>0$. Using Lemma~\ref{lemAlphaBound} we can bound it by $\const\cdot e^{-2x}$, thus ensuring that \eqref{eqSupK0} goes to zero, i.e., $\widetilde{\mathcal{K}}$ converges compactly. In the same way we can show the convergence of $\mathpzc{f}$ and $\mathpzc{g}$. Applying the limit in \eqref{eq26} and inserting the expressions for $\alpha$ and $\beta$ finishes the proof.
\end{proof}

\begin{proof}[Proof of Proposition~\ref{propK0Bound}]
Since the convergence \eqref{eqResLimit} is uniform in $s_2$ we can deduce
\begin{equation}
	\left|{\rm Res}_{\mathpzc{g},-\rho}\right|\leq \const_1\cdot e^{-s_2\delta}.
\end{equation}
Inserting this as well as the bounds from Lemma~\ref{lemAlphaBound} into \eqref{eq26} results in
\begin{equation}\begin{aligned}
		\left|\mathcal{K}_0^{\rm resc}(r_1,s_1;r_2,s_2)\right|&\leq\int_0^\infty dx\,e^{-(s_1+x)}e^{-(s_2+x)}+\delta\left(1+\int_0^\infty dx\, e^{-(s_1+x)}\right)\\&\quad\times\left(\const_1\cdot e^{-s_2\delta}+\int_0^\infty dx\, e^{-(s_2+x)}e^{\delta x}\right)\\
		&=\frac{1}{2}e^{-(s_1+s_2)}+\delta\left(1+e^{-s_1}\right)\left(\const_1\cdot e^{-s_2\delta}+\frac{e^{-s_2}}{1-\delta}\right)\\
		&\leq\const\cdot e^{-\min\{\delta,1\} s_2}.
\end{aligned}\end{equation}
\end{proof}

\newpage
\section{Path-integral style formula}\label{SectPathIntegrals}
Using the results from \cite{BCR13} we can transform the formula for the multidimensional probability distribution of the finite-step Airy$_{\rm stat}$ process from the current form involving a Fredholm determinant over the space \mbox{$L^2(\{r_1,\dots,r_m\}\times\R)$} into a path-integral style form, where the Fredholm determinant is over the simpler space $L^2(\R)$. The result of \cite{BCR13} can not be applied at the stage of finite time as one of the assumption is not satisfied.
\begin{prop}\label{propPathInt}
For any parameters $\chi_k\in\R$, $1\leq k\leq m$, satisfying
\begin{equation}
  0<\chi_m<\dots<\chi_2<\chi_1<\max_{i<j}\left\{r_j-r_i,\delta\right\},
\end{equation}
define the multiplication operator $(M_{r_i}f)(x)=m_{r_i}(x)f(x)$, with
\begin{equation}
 m_{r_i}(x)=\begin{cases}e^{-\chi_ix} &\mbox{for } x\geq0 \\ e^{x^2} &\mbox{for } x<0.\end{cases}
\end{equation}
Writing $K^\delta_{r_i}(x,y):=K^\delta(r_i,x;r_i,y)$, the finite-dimensional distributions of the finite-step Airy$_{\rm stat}$ process are given by
\begin{equation}\label{PathIntForm}\begin{aligned}
	\Pb &\bigg(\bigcap_{k=1}^m\big\{\mathcal{A}_{\rm stat}^{(\delta)}(r_k)\leq s_k\big\}\bigg)=\bigg(1+\frac{1}{\delta}\sum_{i=1}^m\frac{\D}{\D s_i}\bigg)
\det\Big(\Id+M_{r_1} Q M_{r_1}^{-1}\Big)_{L^2(\R)},
\end{aligned}\end{equation}
with
\begin{equation}
Q=-K^{\delta}_{r_1}+\bar{P}_{s_1}V_{r_1,r_2}\bar{P}_{s_2}\cdots V_{r_{m-1},r_m}\bar{P}_{s_m}V_{r_m,r_1}K^\delta_{r_1},
\end{equation}
where $\bar{P}_s=\Id-P_s$ denotes the projection operator on $(-\infty,s)$.
\end{prop}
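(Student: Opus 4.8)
The plan is to obtain Proposition~\ref{propPathInt} from the definition~\eqref{eq3.6} of $\mathcal{A}_{\rm stat}^{(\delta)}$ by feeding the extended kernel $K^\delta$ into the path-integral reduction of~\cite{BCR13}. The operator $1+\tfrac{1}{\delta}\sum_{i}\D/\D s_i$ acts outside the Fredholm determinant on both sides, so it suffices to prove the operator identity
\begin{equation*}
	\det\big(\Id-\chi_sK^\delta\chi_s\big)_{L^2(\{r_1,\dots,r_m\}\times\R)}=\det\big(\Id+M_{r_1}QM_{r_1}^{-1}\big)_{L^2(\R)}.
\end{equation*}
By Definition~\ref{DefAiryStat}, $K^\delta(r_i,x;r_j,y)=-V_{r_i,r_j}(x,y)\Id_{r_i<r_j}+\mathcal{K}_{r_i,r_j}(x,y)$ with smooth part $\mathcal{K}_{r_i,r_j}(x,y):=K_{r_i,r_j}(x,y)+\delta f_{r_i}(x)g_{r_j}(y)$, which is exactly the structure treated in~\cite{BCR13}: a propagator $V$ running along the $r$-direction plus a remainder factoring through one reference index.

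Two algebraic inputs must be verified. First, the Chapman--Kolmogorov relation $V_{r_i,r_j}V_{r_j,r_k}=V_{r_i,r_k}$, valid for arbitrary (not only ordered) arguments by the Gaussian form~\eqref{DefV}; this makes $\{V_{r,r'}\}$ a group rather than a mere semigroup, and it is precisely this hypothesis of~\cite{BCR13} that is unavailable at finite $t$, where the propagator is $\phi_{n_1,n_2}$ together with the boundary term $\phi_{0,n_2}$, a one-sided object that does not extend to negative ``times''. Second, the remainder factors through $r_1$,
\begin{equation*}
	\mathcal{K}_{r_i,r_j}=V_{r_i,r_1}\,K^\delta_{r_1}\,V_{r_1,r_j},\qquad K^\delta_{r_1}(x,y):=K^\delta(r_1,x;r_1,y),
\end{equation*}
understood as an operator identity on suitable weighted spaces; I would prove it from the contour representations~\eqref{contInt} of $K_{r_1,r_2}$, $f_{r_1}$, $g_{r_2}$, in each of which a parameter $r$ enters only through factors $e^{\pm(W^3/3+rW^2-sW)}$ on contours where $\Re(W^3)\to-\infty$, and convolution with the heat kernel shifts that parameter via $\int_\R V_{r,r'}(x,y)e^{yW}\,\D y=e^{(r'-r)W^2+xW}$, the contour decay legitimizing Fubini and the analytic continuation needed for $V_{r_i,r_1}$ (a backward heat operator when $r_i>r_1$). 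Since setting $\delta=0$ in $K^\delta_{r_1}$ reproduces exactly the kernel $K$ of Definition~\ref{DefAstat}, one has $Q=-\mathcal{P}K^\delta_{r_1}$ with $\mathcal{P}$ as there. Granting these inputs, the reduction theorem of~\cite{BCR13} collapses the $r$-direction of the block Fredholm determinant, producing $\det(\Id-\chi_sK^\delta\chi_s)=\det(\Id+Q)_{L^2(\R)}$ with $Q$ as in the statement; since conjugation by $M_{r_1}$ leaves a trace-class determinant invariant, this is~\eqref{PathIntForm}.

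The crux, and the reason for the explicit weight $m_{r_i}$, is the trace-class bookkeeping that upgrades the above from a formal manipulation to an identity of genuine trace-class operators. Inserting $M_{r_i}M_{r_i}^{-1}=\Id$ between the factors of $M_{r_1}\mathcal{P}K^\delta_{r_1}M_{r_1}^{-1}$, it remains to show that each conjugated block $M_{r_i}V_{r_i,r_{i+1}}M_{r_{i+1}}^{-1}$ for $i<m$, the backward-heat block $M_{r_m}V_{r_m,r_1}M_{r_1}^{-1}$, and $M_{r_1}K^\delta_{r_1}M_{r_1}^{-1}$ are Hilbert--Schmidt, the projections $\bar P_{s_i}$ passing harmlessly through the multiplication operators. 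The choice $m_{r_i}(x)=e^{x^2}$ for $x<0$ is forced by the need to tame the backward heat operator $V_{r_m,r_1}$ on the left half-line, whereas $m_{r_i}(x)=e^{-\chi_ix}$ for $x\geq0$ with $0<\chi_m<\dots<\chi_1<\max_{i<j}\{r_j-r_i,\delta\}$ is calibrated so that, using the Airy decay of the kernels, the exponential weight on the right half-line is dominated either by a gap $r_j-r_i$ sitting inside a heat kernel or by the factor $e^{(\delta+r_2)x}$ in $g_{r_2}$ of~\eqref{eqKernelDef}, the ordering $\chi_m<\dots<\chi_1$ allowing these weights to be redistributed consistently along the telescoping product. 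I expect locating the correct weights and proving the Hilbert--Schmidt bounds uniformly to be the main obstacle; the subsequent differentiation by the prefactor is harmless, since, exactly as in the proof of Theorem~\ref{thmAsymp}, it only freezes one integration variable at $s_i$ and the same bounds apply.
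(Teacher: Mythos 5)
Your proposal follows essentially the same route as the paper's proof: reduce to the determinant identity, feed $K^\delta$ into the reduction theorem of~\cite{BCR13} after conjugating by the weight $M_{r_i}$, verify the semigroup/reversibility structure from the contour representations~\eqref{contInt} via the Gaussian convolution identity, and close the argument with Hilbert--Schmidt bounds on each conjugated block. Your ``factorization through $r_1$'', $\mathcal{K}_{r_i,r_j}=V_{r_i,r_1}K^\delta_{r_1}V_{r_1,r_j}$, is just a repackaging of the paper's Assumption~2 items (reversibility together with right-invertibility, the latter following from Remark~\ref{remV}), and your identification of the trace-class bookkeeping as the delicate part matches the proportion of effort the paper spends verifying Assumption~1 by splitting each operator into a product of two Hilbert--Schmidt factors.
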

\begin{remark}\label{remV}
 The operator $V_{r_j,r_i}$ for $r_i<r_j$ is defined only on the range of $K^\delta_{r_i}$ and acts on it in the following way:
 \begin{equation}
   V_{r_j,r_i}K_{r_i,r_k}=K_{r_j,r_k},\qquad
   V_{r_j,r_i}f_{r_i}=f_{r_j}.
 \end{equation}
In particular, we have also $V_{r_j,r_i}\mathbf{1}=\mathbf{1}$.
\end{remark}

\begin{proof}
We will denote conjugations by the operator $M$ by a hat in the following way:
\begin{equation}
 \begin{aligned}
  \widehat{V}_{r_i,r_j}&=M_{r_i}V_{r_i,r_j}M^{-1}_{r_j}, &\quad \widehat{f}_{r_i}&=M_{r_i}f_{r_i},\\
  \widehat{K}^\delta_{r_i}&=M_{r_i}K^\delta_{r_i}M^{-1}_{r_i}, &\quad \widehat{g}_{r_i}&=g_{r_i}M_{r_i}^{-1},\\
  \widehat{K}_{r_i,r_j}&=M_{r_i}K_{r_i,r_j}M^{-1}_{r_j}.
 \end{aligned}
\end{equation}
Applying the conjugation also in the determinant in \eqref{eq3.6}, the identity we have to show is:
\begin{equation}
\begin{aligned}
  \det&\left(\Id-\chi_s \widehat{K}^\delta\chi_s\right)_{L^2(\{r_1,\dots,r_m\}\times\R)}\\&= \det\Big(\Id-\widehat{K}^{\delta}_{r_1}+\bar{P}_{s_1}\widehat{V}_{r_1,r_2}\bar{P}_{s_2}\cdots \widehat{V}_{r_{m-1},r_m}\bar{P}_{s_m}\widehat{V}_{r_m,r_1}\widehat{K}^\delta_{r_1}\Big)_{L^2(\R)}
\end{aligned}
\end{equation}
This is done by applying Theorem 1.1 \cite{BCR13}.

It has three groups of assumptions we have to prove. We merged them into two by choosing the multiplication operators of Assumption 3 to be the identity.
\paragraph{Assumption 1}
\renewcommand{\theenumi}{\roman{enumi}}
\renewcommand{\labelenumi}{(\theenumi)}
\begin{enumerate}
	\item The operators $P_{s_i}\widehat{V}_{r_i,r_j}$, $P_{s_i}\widehat{K}^\delta_{r_i}$, $P_{s_i}\widehat{V}_{r_i,r_j}\widehat{K}^\delta_{r_j}$ and $P_{s_j}\widehat{V}_{r_j,r_i}\widehat{K}^\delta_{r_i}$ for $r_i<r_j$ preserve $L^2(\R)$ and are trace class in $L^2(\R)$.
	\item The operator $\widehat{V}_{r_i,r_1}\widehat{K}^{\delta}_{r_1}-\bar{P}_{s_i}\widehat{V}_{r_i,r_{i+1}}\bar{P}_{s_{i+1}}\cdots \widehat{V}_{r_{m-1},r_m}\bar{P}_{s_m}\widehat{V}_{r_m,r_1}\widehat{K}^\delta_{r_1}$ preserves $L^2(\R)$ and is trace class in $L^2(\R)$.
\end{enumerate}
\paragraph{Assumption 2}
\begin{enumerate}
	\item Right-invertibility: $\widehat{V}_{r_i,r_j}\widehat{V}_{r_j,r_i}\widehat{K}^\delta_{r_i}=\widehat{K}^\delta_{r_i}$
	\item Semigroup property: $\widehat{V}_{r_i,r_j}\widehat{V}_{r_j,r_k}=\widehat{V}_{r_i,r_k}$
	\item Reversibility relation: $\widehat{V}_{r_i,r_j}\widehat{K}^\delta_{r_j}=\widehat{K}^\delta_{r_i}\widehat{V}_{r_i,r_j}$
\end{enumerate}

The semigroup property is clear. To see the reversibility relation, start from the contour integral representation \eqref{contInt} of $K_{r_j,r_j}$ and $f_{r_j}$ and use the Gaussian identity:
\begin{equation}\begin{aligned}
	\int_\R dz \frac{1}{\sqrt{4\pi (r_j-r_i)}}e^{-(z-x)^2/4(r_j-r_i)}e^{-r_jW^2+zW}=e^{-r_iW^2+xW}.
\end{aligned}\end{equation}
This results in $\widehat{V}_{r_i,r_j}\widehat{K}^\delta_{r_j}=\widehat{K}_{r_i,r_j}+\delta \widehat{f}_{r_i}\otimes \widehat{g}_{r_j}$.
On the other hand we have
\begin{equation}\begin{aligned}
	\int_\R dz \frac{1}{\sqrt{4\pi (r_j-r_i)}}e^{-(z-y)^2/4(r_j-r_i)}e^{r_iZ^2-zZ}=e^{r_jZ^2-yZ},
\end{aligned}\end{equation}
so $\widehat{K}^\delta_{r_i}\widehat{V}_{r_i,r_j}=\widehat{K}_{r_i,r_j}+\delta \widehat{f}_{r_i}\otimes \widehat{g}_{r_j}$, which proves Assumption 2 (iii). Noticing Remark~\ref{remV}, the right-invertibility follows immediately.

Assumption 1 (ii) can be deduced from Assumption 1 (i) as shown in Remark~3.2, \cite{BCR13}. Using the previous identities we thus are left to show that the three operators $P_{s_i}\widehat{V}_{r_i,r_j}$, for $r_i<r_j$, as well as $P_{s_i}\widehat{K}_{r_i,r_j}$ and $P_{s_i}\widehat{f}_{r_i}\otimes \widehat{g}_{r_j}$, for arbitrary $r_i$, $r_j\in\R$, are all $L^2$-bounded and trace class.

First notice that $V_{r_i,r_j}(x,y)=V_{0,r_j-r_i}(-x,-y)$. Using the shorthand $r=r_j-r_i$ and inserting this into the integral representation \eqref{eqVRep} of $V$ we have
\begin{equation}
 V_{r_i,r_j}(x,y)=e^{\frac{2}{3}r^3}\int_\R\D \lambda\, \Ai(-x+\lambda)e^{r(-y+\lambda)}\Ai(r^2-y+\lambda)=\left(V^{(1)}V^{(2)}_r\right)(x,y),
\end{equation}
with the new operators
\begin{equation}\begin{aligned}
 V^{(1)}(x,y)&=\Ai(-x+y)\\
 V^{(2)}_r(x,y)&=e^{\frac{2}{3}r^3}e^{r(x-y)}\Ai(r^2+x-y).
\end{aligned}\end{equation}
Introducing yet another operator, $(Nf)(x)=\exp\left(-(\chi_i+\chi_j)x/2\right)f(x)$, we can write
\begin{equation}
 P_{s_i}\widehat{V}_{r_i,r_j}=(P_{s_i}M_{r_i}V^{(1)}N^{-1})(NV^{(2)}_rM_{r_j}^{-1}).
\end{equation}
The Hilbert-Schmidt norm of the first factor is given by
\begin{equation}\begin{aligned}
	\int_{\R^2}\D x\,\D y\, &\left|(P_{s_i}M_{r_i}V^{(1)}N^{-1})(x,y)\right|^2\\&=\int_{s_1}^{\infty}\D x\,\int_\R\D y\,m^2_{r_i}(x)\Ai^2(-x+y)e^{(\chi_i+\chi_j)y}\\
	&=\int_{s_1}^{\infty}\D x\,m^2_{r_i}(x)e^{(\chi_i+\chi_j)x}\int_\R\D z\,\Ai^2(z)e^{(\chi_i+\chi_j)z}.
\end{aligned}\end{equation}
The asymptotic behaviour of the Airy function and the inequalities \mbox{$\chi_i>\chi_j>0$} imply that both integrals are finite. Similarly,
\begin{equation}\begin{aligned}
	\int_{\R^2}\D x\,\D y\, &\left|(NV^{(2)}_rM_{r_j}^{-1})(x,y)\right|^2\\
	&=e^{\frac{4}{3}r^3}\int_{\R^2}\D x\,\D y\, e^{-(\chi_i+\chi_j)x}e^{2r(x-y)}\Ai^2(r^2+x-y)m^{-2}_{r_j}(y)\\
	&=e^{\frac{4}{3}r^3}\int_\R\D z\,e^{-(\chi_i+\chi_j)z}e^{2rz}\Ai^2(r^2+z)\int_\R\D y\,m^{-2}_{r_j}(y)e^{-(\chi_i+\chi_j)y}<\infty,
\end{aligned}\end{equation}
where we used $2r>\chi_i+\chi_j$ as well.
As a product of two Hilbert-Schmidt operators, $P_{s_i}\widehat{V}_{r_i,r_j}$ is thus $L^2$-bounded and trace class.

We decompose the operator $\widehat{K}_{r_i,r_j}$ as
\begin{equation}
 P_{s_i}\widehat{K}_{r_i,r_j}=(P_{s_i}M_{r_i}K'_{-r_i}P_0)(P_0K'_{r_j}M_{r_j}^{-1})
\end{equation}
where
\begin{equation}
 K'_r(x,y)=e^{\frac{2}{3}r^3}e^{r(x+y)}\Ai(r^2+x+y).
\end{equation}
Again, we bound the Hilbert-Schmidt norms,
\begin{equation}\begin{aligned}
	\int_{\R^2}\D x\,\D y\, &\left|(P_{s_i}M_{r_i}K'_{-r_i}P_0)(x,y)\right|^2\\
	&=e^{-\frac{4}{3}r_i^3}\int_{s_i}^\infty\D x\,\int_0^\infty\D y\,m^2_{r_i}(x)e^{-2r_i(x+y)}\Ai^2(r_i^2+x+y)\\
	&\leq e^{-\frac{4}{3}r_i^3}\int_{s_i}^\infty\D x\,m^2_{r_i}(x)\int_{s_i}^\infty\D z\,e^{-2r_iz}\Ai^2(r_i^2+z)<\infty,
\end{aligned}\end{equation}
as well as
\begin{equation}\begin{aligned}
	\int_{\R^2}\D x\,\D y\, &\left|(P_0K'_{r_j}M_{r_j}^{-1})(x,y)\right|^2\\
	&=e^{\frac{4}{3}r_j^3}\int_0^\infty\D x\,\int_\R\D y\,e^{2r_j(x+y)}\Ai^2(r_j^2+x+y)m^{-2}_{r_j}(y)\\
	&= e^{\frac{4}{3}r_j^3}\int_\R\D y\,m^{-2}_{r_j}(y)\int_y^\infty\D z\,e^{2r_jz}\Ai^2(r_j^2+z).
\end{aligned}\end{equation}
The superexponential decay of the Airy function implies that for every \mbox{$c_1>|r_j|$} we can find $c_2$ such that $e^{2r_jz}\Ai^2(r_j^2+z)\leq c_2e^{-c_1z}$. This proves finiteness of the integrals.

Regarding the last operator, start by decomposing it as
\begin{equation}
 P_{s_i}\widehat{f}_{r_i}\otimes \widehat{g}_{r_j}=(P_{s_i}\widehat{f}_{r_i}\otimes\phi)(\phi \otimes \widehat{g}_{r_j})
\end{equation}
for some function $\phi$ with $L^2$-norm $1$. Next, notice that
\begin{equation}\begin{aligned}
	\int_{\R^2}\D x\,\D y\, &\left|(P_{s_i}M_{r_i}f_{r_i}\otimes\phi)(x,y)\right|^2=\int_{s_i}^\infty\D x\,m^2_{r_i}(x)f^2_{r_i}(x)
\end{aligned}\end{equation}
It is easy to see that $\lim_{s\to\infty}f_{r_i}(s)=1$, so $f_{r_i}$ is bounded on the area of integration. But then the $m^2_{r_i}$ term ensures the decay, implying that the integral is finite. Furthermore,
\begin{equation}\begin{aligned}
	\int_{\R^2}\D x\,\D y\, &\left|(\phi \otimes g_{r_j}M_{r_j}^{-1})(x,y)\right|^2=\int_\R\D y\,m^{-2}_{r_j}(y)g^2_{r_j}(y).
\end{aligned}\end{equation}
Analyzing the asymptotic behaviour of $g_{r_j}$ we see that for large positive arguments, the first part decays exponentially with rate $-\delta$ and the second part even superexponentially. $\delta>\chi_j$ thus gives convergence on the positive half-line. For negative arguments, it is sufficient to see that $g_{r_j}$ does not grow faster than exponentially.
\end{proof}

\section{Analytic continuation - Proof of Theorem~\ref{thmAsymp0}}\label{SectAnCont}
First of all let us show that the choice of $x_0(0)=0$ is asymptotically irrelevant. Denote by $X_{t}^{(0)}(r)$ the rescaled process as in (\ref{eqScaledProcess}), where $x_0(0)=0$, and $X_{t}(r)$ the rescaled process as in (\ref{eqScaledProcessOriginal}), where $-x_0(0)\sim \exp(1)$. This corresponds to a finite shift of the system, which is therefore irrelevant in the large time limit.
\begin{lem} \label{lemma7.1} For any $m\in\N$, $r_1<r_2<\ldots<r_m$ and $s_1,\ldots,s_m$, it holds
\begin{equation}
\lim_{{t}\to\infty}\Pb\bigg(\bigcap_{k=1}^m\big\{X_{t}(r_k)\leq s_k\big\}\bigg)
=\lim_{{t}\to\infty}\Pb\bigg(\bigcap_{k=1}^m\big\{X_{t}^{(0)}(r_k)\leq s_k\big\}\bigg)
\end{equation}
\end{lem}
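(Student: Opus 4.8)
The two processes run the \emph{same} reflected-Brownian dynamics; only the initial configuration differs ($\zeta_0=0$ versus $-x_0(0)\sim\exp(1)$). The strategy is to build both initial conditions on one probability space, drive both systems by the \emph{same} Brownian motions $B_n$, $n\in\Z$, and then transfer the comparison through the variational representation $x_n(t)=\max_{k\le n}\{Y_{k,n}(t)+\zeta_k\}$ of Proposition~\ref{propMconv}. Under such a coupling the difference of the two systems will be bounded by a single $\exp(1)$ random variable, hence $O(t^{-1/3})$ after rescaling, which forces the two families of finite-dimensional distributions to have the same limit.

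First I would construct the coupling of the initial data. If $\{x_n(0)\}_{n\in\Z}$ is an intensity-$1$ Poisson point process labelled so that $x_0(0)\le 0<x_1(0)$, then by elementary properties of the Poisson process the rightward gaps $x_1(0),\,x_2(0)-x_1(0),\dots$ and the leftward gaps $-x_0(0),\,x_0(0)-x_{-1}(0),\dots$ are i.i.d.\ $\exp(1)$ and jointly independent. Consequently one can realize $\{x_n(0)\}$ together with the data $\{\zeta_n\}$ of \eqref{statModel} (with $\lambda=\rho=1$) so that
\begin{equation}
x_n(0)=\zeta_n\ \text{ for } n\ge 1,\qquad x_n(0)=\zeta_n-E\ \text{ for } n\le 0,
\end{equation}
where $E:=-x_0(0)\sim\exp(1)$ is independent of $\{\zeta_n\}$: this matches the two one-sided exponential gap sequences together with the value of $-x_0(0)$, hence the full law of the point process. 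I take the same family $B_n$ for both dynamics; the well-definedness of the stationary system (the analogue of Proposition~\ref{propMconv}) follows from the identical Borel--Cantelli argument, the extra fixed shift $-E$ on $\{n\le 0\}$ not affecting the tail estimates.

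Next I would compare the two systems. Writing $\hat x_n(t)=\max_{k\le n}\{Y_{k,n}(t)+\zeta_k\}$ for the system driving $X_t^{(0)}$ and $x_n(t)=\max_{k\le n}\{Y_{k,n}(t)+\zeta_k-E\Id_{k\le0}\}$ for the one driving $X_t$ (same $Y_{k,n}$, since the $B_n$ are shared), subtracting $E$ from some of the elements of the maximum gives, for every $n\ge1$ and $t\ge0$,
\begin{equation}
\hat x_n(t)-E\ \le\ x_n(t)\ \le\ \hat x_n(t),
\end{equation}
so $|x_n(t)-\hat x_n(t)|\le E$ uniformly, with $E$ a.s.\ finite and independent of $t$. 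For fixed $r_1<\dots<r_m$ and $t$ large enough that $\lfloor t+2r_kt^{2/3}\rfloor\ge1$, the definitions \eqref{eqScaledProcessOriginal} and \eqref{eqScaledProcess} then yield $|X_t(r_k)-X_t^{(0)}(r_k)|\le t^{-1/3}E\to0$ almost surely, hence $X_t-X_t^{(0)}\to0$ in probability as a random vector. A routine sandwiching argument (for any $\e>0$, $\Pb(\bigcap_k\{X_t(r_k)\le s_k\})\le\Pb(\bigcap_k\{X_t^{(0)}(r_k)\le s_k+\e\})+\Pb(t^{-1/3}E>\e)$, and symmetrically), combined with the continuity in $\vec s$ of the limiting joint distribution function, shows that the two sequences of probabilities converge to the same limit. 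I expect no genuine obstacle here; the only point needing a little care is the construction of the coupling of the initial configurations (matching the one-sided exponential gap sequences) and, for full rigour, re-running the well-definedness argument of Proposition~\ref{propMconv} on the shifted left half-line.
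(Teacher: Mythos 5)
Your proof is correct, and the route is the same as the paper's in spirit (couple the two initial configurations on one probability space, bound the difference by an $\exp(1)$ random variable, note the $t^{-1/3}$ rescaling kills it, then sandwich). However, your version is actually more careful than the paper's. The paper asserts a coupling in which $x_n^{(0)}(t)=x_n(t)-x_0(0)$ holds \emph{exactly} for every $n\in\Z$, with $x_0(0)$ independent of $x_n(t)-x_0(0)$. Taken literally this cannot hold: recentring the stationary Poisson configuration at its point $x_0(0)$ makes the gap to the right of the origin equal to $x_1(0)-x_0(0)\sim\mathrm{Gamma}(2,1)$ (the size-biased gap of the inspection paradox), whereas in \eqref{statModel} with $\lambda=\rho=1$ the first gap $\zeta_1$ is $\exp(1)$; so the claimed coupling does not produce the correct marginal for $x_n^{(0)}$, and $x_0(0)$ is not independent of the recentred configuration either. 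Your one-sided shift ($x_n(0)=\zeta_n$ for $n\ge1$, $x_n(0)=\zeta_n-E$ for $n\le0$, $E\sim\exp(1)$ independent of $\{\zeta_n\}$) is the correct construction: it reproduces both marginals, and via the variational representation it yields the two-sided bound $\hat x_n(t)-E\le x_n(t)\le\hat x_n(t)$ in place of an exact translate. That inequality is all one needs: after rescaling the discrepancy is at most $t^{-1/3}E\to0$, and your sandwich together with continuity of the limiting joint distribution in $\vec s$ (which is used anyway at the end of the proof of Theorem~\ref{thmAsymp0}) closes the argument. In effect you carry out the ``finite shift is $O(t^{-1/3})$'' idea the paper intends, but with a coupling that actually exists.
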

\begin{proof}
We can construct the processes $x_n^{(0)}$ and $x_n$ on the same probability space so that, for any $n\in\Z$, $x_n^{(0)}(t)=x_n(t)-x_0(0)$ and with $x_0(0)$ being independent of $x_n(t)-x_0(0)$. After scaling we have $X_{t}^{(0)}(r)=X_{t}(r)-x_0(0)t^{-1/3}$. As $x_0(0)t^{-1/3}$ converges to $0$ is distribution, the result follows.
\end{proof}

We know from Theorem~\ref{thmAsymp} and Proposition~\ref{propPathInt} that:
\begin{equation}
\lim_{{t}\to\infty}\Pb\bigg(\bigcap_{k=1}^m\big\{X_{t}^{(\delta)}(r_k)\leq s_k\big\}\bigg)=\bigg(1+\frac{1}{\delta}\sum_{i=1}^m\frac{\D}{\D s_i}\bigg)\det(\Id-\widehat{\mathcal{P}}\widehat{K}^\delta_{r_1}).
\end{equation}
In this section we prove the main Theorem~\ref{thmAsymp0} by extending this equation to $\delta=0$. The right hand side can actually be analytically continued for all $\delta\in\R$ (see Proposition~\ref{propAnalyt}). Additionally we have to show that the left hand side is continuous at $\delta=0$. This proof relies mainly on Proposition~\ref{propExPoint}, which gives a bound on the exit point of the maximizing path from the lower boundary in the last passage percolation model.

\begin{proof}[Proof of Theorem~\ref{thmAsymp0}]
We adopt the point of view of last passage percolation discussed in Section~\ref{SectLPP}. The superscripts of $x$, $L$ and $w$ indicate the choice of $\rho$, while $\lambda$ is always fixed at $1$. It is clear that for any path $\vec{\pi}$ the weight $w^{(\rho)}(\vec{\pi})$ is non-decreasing in $\rho$. But then the supremum is non-decreasing, too, and:
\begin{equation}\label{eq5.6}
 x_n^{(\rho)}(t)\leq x_n^{(1)}(t),
\end{equation}
for $\rho<1$.
We know that there exists a unique maximizing path \mbox{$\vec{\pi}^*\in\Pi(0,0;t;n)$}. We can therefore define $Z_n(t):=s_0^*$, the exit point from the lower boundary specifically with $\rho=1$. We want to derive the inequality
\begin{equation}\label{eq5.7}
   x_n^{(1)}(t)\leq x_n^{(\rho)}(t)+(1-\rho)Z_n(t).
\end{equation}
This can be seen as follows:
\begin{equation}\begin{aligned}
 L^{(1)}_{(0,0)\to(t,n)}-(1-\rho)Z_n(t)&=\sup_{\vec{\pi}\in\Pi(0,0;t,n)}w^{(1)}(\vec{\pi})-(1-\rho)Z_n(t)\\
 &=w^{(1)}(\vec{\pi}^*)-(1-\rho)s_0^*=w^{(\rho)}(\vec{\pi}^*).
\end{aligned}\end{equation}
Note that $\vec{\pi}^*$ maximizes $w^{(1)}(\vec\pi)$ and not necessarily $w^{(\rho)}(\vec{\pi})$. In particular we have
\begin{equation}
 w^{(\rho)}(\vec{\pi}^*)\leq\sup_{\vec{\pi}\in\Pi(0,0;t,n)}w^{(\rho)}(\vec{\pi})= L^{(\rho)}_{(0,0)\to(t,n)}.
\end{equation}
Combining the last two equations results in \eqref{eq5.7}.

\eqref{eq5.6} and \eqref{eq5.7} imply that for the rescaled processes $X_{t}^{(\delta)}$, see (\ref{eqScaledProcess}), we have
\begin{equation}
 X^{(\delta)}_{t}(r)\leq X^{(0)}_{t}(r)\leq X^{(\delta)}_{t}(r)+\delta{t}^{-2/3}Z_{{t}+2{t}^{2/3}r}({t}).
\end{equation}
For any $\e>0$ it holds
\begin{equation}\begin{aligned}
  \Pb &\bigg(\bigcap_{k=1}^m\{X^{(\delta)}_{t}(r_k)\leq s_k\}\bigg)\geq\Pb\bigg(\bigcap_{k=1}^m\{X^{(0)}_{t}(r_k)\leq s_k\}\bigg)\\
  &\geq\Pb\bigg(\bigcap_{k=1}^m\{X^{(\delta)}_{t}(r_k)+\delta{t}^{-2/3}Z_{{t}+2{t}^{2/3}r}({t})\leq s_k\}\bigg)\\
  &\geq\Pb\bigg(\bigcap_{k=1}^m\{X^{(\delta)}_{t}(r_k)\leq s_k-\e\}\bigg)-\sum_{k=1}^m\Pb\left(\delta{t}^{-2/3}Z_{{t}+2{t}^{2/3}r}({t})>\e\right).
\end{aligned}\end{equation}
Then, taking ${t}\to\infty$, we obtain
\begin{equation}\begin{aligned}
  \Pb &\bigg(\bigcap_{k=1}^m\{\mathcal{A}_{\rm stat}^{(\delta)}(r_k)\leq s_k\}\bigg)
  \geq\limsup_{{t}\to\infty}\Pb\bigg(\bigcap_{k=1}^m\{X^{(0)}_{t}(r_k)\leq s_k\}\bigg)\\
  &\geq\liminf_{{t}\to\infty}\Pb\bigg(\bigcap_{k=1}^m\{X^{(0)}_{t}(r_k)\leq s_k\}\bigg)\\
  &\geq\Pb\bigg(\bigcap_{k=1}^m\{\mathcal{A}_{\rm stat}^{(\delta)}(r_k)\leq s_k-\e\}\bigg) -\sum_{k=1}^m\limsup_{{t}\to\infty}\Pb\left(Z_{{t}+2{t}^{2/3}r}({t})>{t}^{2/3}\e/\delta\right).
\end{aligned}\end{equation}
Using Proposition~\ref{propExPoint} on the last term and Proposition~\ref{propAnalyt} on the other terms, we can now take the limit $\delta\to0$, resulting in
\begin{equation}\begin{aligned}
  \Pb &\bigg(\bigcap_{k=1}^m\{\mathcal{A}_{\rm stat}(r_k)\leq s_k\}\bigg)
  \geq\limsup_{{t}\to\infty}\Pb\bigg(\bigcap_{k=1}^m\{X^{(0)}_{t}(r_k)\leq s_k\}\bigg)\\
  &\geq\liminf_{{t}\to\infty}\Pb\bigg(\bigcap_{k=1}^m\{X^{(0)}_{t}(r_k)\leq s_k\}\bigg)
  \geq\Pb\bigg(\bigcap_{k=1}^m\{\mathcal{A}_{\rm stat}(r_k)\leq s_k-\e\}\bigg).
\end{aligned}\end{equation}
Continuity of \eqref{eqAiryDef} in the $s_k$ finishes the proof.
\end{proof}

\begin{prop}\label{propExPoint}
For any $r\in\R$,
 \begin{equation}\label{eq5.14}
  \lim_{\beta\to\infty}\limsup_{{t}\to\infty}\Pb\left(Z_{{t}+2{t}^{2/3}r}({t})>\beta{t}^{2/3}\right)=0.
 \end{equation}
\end{prop}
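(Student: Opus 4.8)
The plan is to argue entirely within the last passage percolation picture of Section~\ref{SectLPP}. With $\lambda=1$, boundary drift $\rho=1$, and $n=t+2t^{2/3}r$, one has $x_n^{(1)}(t)=L_{(0,0)\to(t,n)}$, and a path leaving the bottom line $\R_+\times\{0\}$ at time $u=s_0>0$ has weight at most $\mathcal{G}_t(u):=\widetilde{B}_0(u)+u+L^{\rm step}_{(u,1)\to(t,n)}$ (a path leaving at time $0$ may collect the point masses $\zeta_k-\zeta_{k-1}$, but such a path has exit point $0$ and is irrelevant on $\{Z_n(t)>\beta t^{2/3}\}$). Hence on $\{Z_n(t)>\beta t^{2/3}\}$ the maximizer satisfies $x_n^{(1)}(t)=\sup_{\beta t^{2/3}<u\le t}\mathcal{G}_t(u)$, whereas always $x_n^{(1)}(t)\ge x_n^{(1-\delta_0 t^{-1/3})}(t)$ for any fixed $\delta_0>0$ by \eqref{eq5.6} (one could equally use $x_n^{(1)}(t)\ge\mathcal{G}_t(0)=L^{\rm step}_{(0,1)\to(t,n)}$). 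Fixing a level $\ell_t:=2t+2t^{2/3}r-\tfrac14 c\,\beta^2 t^{1/3}$ with $c$ a small constant chosen below, I would bound
\[
\Pb\big(Z_n(t)>\beta t^{2/3}\big)\le\Pb\big(x_n^{(1-\delta_0 t^{-1/3})}(t)<\ell_t\big)+\Pb\Big(\sup_{\beta t^{2/3}<u\le t}\mathcal{G}_t(u)\ge\ell_t\Big).
\]
The first term is handled by Theorem~\ref{thmAsymp}: since $t^{-1/3}(x_n^{(1-\delta_0 t^{-1/3})}(t)-2t-2t^{2/3}r)\Rightarrow\mathcal{A}^{(\delta_0)}_{\rm stat}(r)$, one gets $\limsup_{t\to\infty}\Pb(x_n^{(1-\delta_0 t^{-1/3})}(t)<\ell_t)\le\Pb(\mathcal{A}^{(\delta_0)}_{\rm stat}(r)\le-\tfrac14 c\beta^2)\to0$ as $\beta\to\infty$ (alternatively, invoke the classical GUE Tracy--Widom convergence and cubic lower-tail bound for the semi-discrete last-passage time $L^{\rm step}_{(0,1)\to(t,n)}$).

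For the second term — the heart of the matter — the deterministic input is that the mean landscape $\tilde h(u):=u+2\sqrt{(n-1)(t-u)}=u+\E[L^{\rm step}_{(u,1)\to(t,n)}]+O(t^{1/3})$ is strictly concave, with maximum at $u^\ast=t-(n-1)\approx-2t^{2/3}r$ and value $\tilde h(u^\ast)=t+n-1\approx 2t+2t^{2/3}r$, so that $\tilde h(u^\ast)-\tilde h(u)\asymp(u-u^\ast)^2/t$ for $u\le t/2$ and is $\gtrsim t$ for $u\ge t/2$. I would split the exit range dyadically: $B_k=(2^k\beta t^{2/3},2^{k+1}\beta t^{2/3}]$ for $0\le k\le K_t$ with $2^{K_t}\beta t^{2/3}\asymp t$. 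Since $u\mapsto\E[\mathcal{G}_t(u)]=u+\E[L^{\rm step}_{(u,1)\to(t,n)}]$ is decreasing on each $B_k$ (here $\beta\ge\beta_0(r)$), its maximum over $B_k$ is at most $\tilde h(2^k\beta t^{2/3})+O(t^{1/3})\le 2t+2t^{2/3}r-c\,4^k\beta^2 t^{1/3}$. It then remains to control the fluctuation $\sup_{u\in B_k}(\mathcal{G}_t(u)-\E[\mathcal{G}_t(u)])$, that is: the contribution of $\widetilde{B}_0$ restricted to $B_k$ (a Gaussian process of variance $\lesssim|B_k|=2^k\beta t^{2/3}$, hence $O_{\Pb}(\sqrt{2^k\beta}\,t^{1/3})$ with Gaussian tails) and the oscillation over $B_k$ of the centered exit-time process $u\mapsto L^{\rm step}_{(u,1)\to(t,n)}-\E[L^{\rm step}_{(u,1)\to(t,n)}]$, for which one combines Proposition~\ref{concIn} with regularity (transversal-fluctuation) estimates for the last-passage functional. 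The upshot is a per-block bound $\Pb(\sup_{u\in B_k}(\mathcal{G}_t(u)-\E[\mathcal{G}_t(u)])\ge\tfrac12 c\,4^k\beta^2 t^{1/3})\le C\exp(-c''2^{3k}\beta^3)$, whence $\Pb(\sup_{u\in B_k}\mathcal{G}_t(u)\ge\ell_t)\le C\exp(-c''2^{3k}\beta^3)$. Summing, $\sum_{k\ge0}C\exp(-c''2^{3k}\beta^3)\le C'''\exp(-c''\beta^3)$, which — crucially — is uniform in $t$ even though $K_t\to\infty$, because of the super-geometric decay in $k$.

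Combining the two estimates gives $\limsup_{t\to\infty}\Pb(Z_n(t)>\beta t^{2/3})\le\Pb(\mathcal{A}^{(\delta_0)}_{\rm stat}(r)\le-\tfrac14 c\beta^2)+C'''\exp(-c''\beta^3)$ for every $\beta\ge\beta_0(r)$, and letting $\beta\to\infty$ proves \eqref{eq5.14}. I expect the main obstacle to be precisely the per-block fluctuation estimate of the previous paragraph: one needs uniform-in-$t$ oscillation/regularity bounds for the exit-time last-passage process $u\mapsto L^{\rm step}_{(u,1)\to(t,n)}$ (the boundary motion $\widetilde{B}_0$ is harmless), and, at the level of organizing the union bound, one must use the dyadic block decomposition — a naive union bound over a uniform $t^{2/3}$-mesh of exit points fails, since the $\sim t^{1/3}$ mesh points lying near the optimum each carry probability $\sim e^{-c\beta^3}$ and would produce a spurious factor $t^{1/3}$.
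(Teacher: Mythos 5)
Your framework (last-passage picture, split of $\{Z_n(t)>\beta t^{2/3}\}$ into a ``LPP value is low'' event and a ``supremum over distant exit points is high'' event) matches the paper's, and the handling of the first term is sound. The gap sits in the second term, and you flag it yourself: the per-block bound $\Pb\big(\sup_{u\in B_k}(\mathcal{G}_t(u)-\E\mathcal{G}_t(u))\geq \tfrac12 c\,4^k\beta^2 t^{1/3}\big)\leq Ce^{-c''2^{3k}\beta^3}$, uniform in $t$, does not follow from Proposition~\ref{concIn}, which controls $Y_{k,m}$ at a single exit point rather than the oscillation of $u\mapsto L^{\rm step}_{(u,1)\to(t,n)}$ over an interval. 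Such transversal-fluctuation or modulus-of-continuity estimates for the semi-discrete Brownian LPP are not part of the paper's toolbox, and supplying them would be a substantial detour; your dyadic machinery, while correctly designed to beat the number of blocks, is only as good as that input.

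What the paper does---and what your own first observation puts within reach---is to push one step further: on $\{Z_n(t)>\beta t^{2/3}\}$ the maximizer passes through $(\beta t^{2/3},0)$, so $\sup_{\beta t^{2/3}<u\le t}\mathcal{G}_t(u)$ is \emph{exactly} $L_{(0,0)\to(\beta t^{2/3},0)}+L_{(\beta t^{2/3},0)\to(t,n)}$, not a supremum that needs to be controlled by a union bound over exit points. The first summand is $\beta t^{2/3}$ plus a Gaussian of variance $\beta t^{2/3}$; the second is an LPP time with a rank-one boundary drift, whose centered and rescaled limit is the spiked GUE Tracy--Widom distribution (Lemma~\ref{lemSpike}, via Warren's process with drifts \cite{FF13} and the spiked random-matrix result \cite{BW10}), with the tail $\Pb(\xi_{\rm spiked}(\beta)\le\beta^2/16)\to1$ as $\beta\to\infty$. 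Substituting this for your fluctuation estimate, your argument closes with no dyadic decomposition at all. (For the first term, the paper uses the lighter route $L_{(0,0)\to(\cdot)}\geq L^{\rm step}_{(0,1)\to(\cdot)}$ together with the GUE Tracy--Widom limit from \cite{Wei11}; your appeal to Theorem~\ref{thmAsymp} is valid but heavier.)
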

\begin{proof}
 By scaling of ${t}$ and $\beta$, \eqref{eq5.14} is equivalent to
  \begin{equation}\label{eq5.15}
  \lim_{\beta\to\infty}\limsup_{{t}\to\infty}\Pb\left(Z_{{t}}({t}+2{t}^{2/3}r)>\beta{t}^{2/3}\right)=0,
 \end{equation}
for any $r\in\R$, which is the limit we are showing. We introduce some new events:
\begin{equation}
 \begin{aligned}
  M_\beta&:=\{Z_{{t}}({t}+2{t}^{2/3}r)>\beta{t}^{2/3}\}\\
  E_\beta&:=\{L_{(0,0)\to(\beta{t}^{2/3},0)}+L_{(\beta{t}^{2/3},0)\to({t}+2{t}^{2/3}r,{t})}\leq2{t}+2{t}^{2/3}r+s{t}^{1/3}\}\\
  N_\beta&:=\{L_{(0,0)\to({t}+2{t}^{2/3}r,{t})}\leq2{t}+2{t}^{2/3}r+{t}^{1/3}s\}.
 \end{aligned}
\end{equation}
Notice that if $M_\beta$ occurs, then
\begin{equation}
L_{(0,0)\to({t}+2{t}^{2/3}r,{t})}=L_{(0,0)\to(\beta{t}^{2/3},0)}+L_{(\beta{t}^{2/3},0)\to({t}+2{t}^{2/3}r,{t})},
\end{equation} resulting in $M_\beta\cap E_\beta\subseteq N_\beta$. We arrive at the inequality:
\begin{equation}\label{eqMbeta}
 \Pb(M_\beta)=\Pb(M_\beta\cap E_\beta)+\Pb(M_\beta\cap E_\beta^c)\leq\Pb(N_\beta)+\Pb(E_\beta^c).
\end{equation}
We further define new random variables
\begin{equation}\label{eq5.18}\begin{aligned}
 \xi_{\rm spiked}^{({t})}& =\frac{L_{(\beta{t}^{2/3},0)\to({t}+2{t}^{2/3}r,{t})}-2{t}-2{t}^{2/3}(r-\beta/2)}{{t}^{1/3}}+(r-\beta/2)^2,\\
 \xi_{\rm GUE}^{({t})}&=\frac{L^{\rm step}_{(0,1)\to({t}+2{t}^{2/3}r,{t})}-2{t}-2{t}^{2/3}r}{{t}^{1/3}}+r^2,\\
 \xi_{\rm N}^{({t})}&=\frac{L_{(0,0)\to(\beta{t}^{2/3},0)}-\beta{t}^{2/3}}{\sqrt{\beta}{t}^{1/3}}.
\end{aligned}\end{equation}
By Theorem 7 \cite{Wei11}, for any fixed $r\in\R$,
\begin{equation}\label{eqGUE}
 \xi_{\rm GUE}^{({t})}\stackrel{d}{\to} \xi_{\rm GUE},
\end{equation}
where $\xi_{\rm GUE}$ has the GUE Tracy-Widom distribution. $\xi_{\rm spiked}^{({t})}$ follows the distribution of the largest eigenvalue of a critically 	spiked GUE matrix, as will be shown in Lemma~\ref{lemSpike}. $\xi_{\rm N}^{({t})}$ has the distribution of a standard normal random variable $\xi_{\rm N}$ for any $\beta>0$, ${t}>0$.

Combining these definitions, we have:
\begin{equation}
 \begin{aligned}
  \Pb(E_\beta)=\Pb\big(\sqrt{\beta}\xi_{\rm N}^{({t})}+\xi_{\rm spiked}^{({t})}\leq  (r-\beta/2)^2+s\big).
 \end{aligned}
\end{equation}
Fix $s=3r^2-\beta^2/16$, such that:
\begin{equation}\begin{aligned}
  \left(r-\frac{\beta}{2}\right)^2+s&=4r^2-r\beta+\frac{\beta^2}{16}+\frac{\beta^2}{8}\geq \frac{\beta^2}{8}.
 \end{aligned}\end{equation}
Using the independence of $\xi_{\rm N}^{({t})}$ and $\xi_{\rm spiked}^{({t})}$, we obtain
\begin{equation}\label{eqEbeta}\begin{aligned}
  \Pb(E_\beta)&\geq\Pb\left(\sqrt{\beta}\xi_{\rm N}^{({t})}+\xi_{\rm spiked}^{({t})}\leq  \frac{\beta^2}{16}+\frac{\beta^2}{16}\right)\geq\Pb\left(\xi_{\rm N}^{({t})}\leq  \frac{\beta^{3/2}}{16}\text{ and }\xi_{\rm spiked}^{({t})}\leq\frac{\beta^2}{16}\right)\\
  &=\Pb\left(\xi_{\rm N}^{({t})}\leq  \frac{\beta^{3/2}}{16}\right)\Pb\left(\xi_{\rm spiked}^{({t})}\leq\frac{\beta^2}{16}\right)
 \end{aligned}\end{equation}
Further, the inequality
\begin{equation}
 L^{\rm step}_{(0,1)\to({t}+2{t}^{2/3}r,{t})}\leq L_{(0,0)\to({t}+2{t}^{2/3}r,{t})}
\end{equation}
leads to
\begin{equation}\label{eqNbeta}
 \Pb(N_\beta)\leq\Pb\big(\xi_{\rm GUE}^{({t})}\leq 4r^2-\beta^2/16\big).
\end{equation}
Inserting \eqref{eqEbeta} and \eqref{eqNbeta} into \eqref{eqMbeta}, we arrive at
\begin{equation}\begin{aligned}
 \Pb(M_\beta)\leq\Pb\left(\xi_{\rm GUE}^{({t})}\leq 4r^2-\frac{\beta^2}{16}\right)+1-\Pb\left(\xi_{\rm N}^{({t})}\leq  \frac{\beta^{3/2}}{16}\right)\Pb\left(\xi_{\rm spiked}^{({t})}\leq\frac{\beta^2}{16}\right)
\end{aligned}\end{equation}
By \eqref{eqGUE} and Lemma~\ref{lemSpike} we can take limits:
\begin{equation}\begin{aligned}
0&\leq\limsup_{\beta\to\infty}\limsup_{{t}\to\infty}\Pb\left(M_\beta\right)\\
&\leq\lim_{\beta\to\infty}\left[\Pb\left(\xi_{\rm GUE}\leq 4r^2-\frac{\beta^2}{16}\right)+1-\Pb\left(\xi_{\rm N}\leq  \frac{\beta^{3/2}}{16}\right)\Pb\left(\xi_{\rm spiked}(\beta)\leq\frac{\beta^2}{16}\right)\right]\\&=0.
\end{aligned}\end{equation}
\end{proof}

\begin{lem}\label{lemSpike}
Let $r\in\R$ be fixed. For any $\beta>2(r+1)$, as ${t}\to\infty$, the random variable
 \begin{equation}
   \xi_{\rm spiked}^{({t})}=\frac{L_{(\beta{t}^{2/3},0)\to({t}+2{t}^{2/3}r,{t})}-2{t}-2{t}^{2/3}(r-\beta/2)}{{t}^{1/3}}+(r-\beta/2)^2
 \end{equation}
converges in distribution,
\begin{equation}
 \xi_{\rm spiked}^{({t})}\stackrel{d}{\to} \xi_{\rm spiked}(\beta).
\end{equation}
In addition, $\xi_{\rm spiked}(\beta)$ satisfies
\begin{equation}
 \lim_{\beta\to\infty}\Pb\left(\xi_{\rm spiked}(\beta)\leq\beta^2/16\right)=1.
\end{equation}

\end{lem}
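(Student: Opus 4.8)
\emph{Reduction.} Since the driving Brownian motions $\widetilde B_0,B_1,B_2,\dots$ are translation invariant, translating all background weights by $\beta t^{2/3}$ in the horizontal direction and subtracting the overall constant $\widetilde B_0(\beta t^{2/3})+\beta t^{2/3}$ (which shifts the weight of every path equally) gives, in distribution,
\[
  L_{(\beta t^{2/3},0)\to(t+2t^{2/3}r,\,t)}\ \stackrel{d}{=}\ L^{(\rho=1)}_{(0,0)\to(\tau',\,t)},\qquad \tau':=t+2t^{2/3}(r-\beta/2),
\]
with the boundary line $\{0\}$ still carrying Lebesgue density $1$ together with its own white noise. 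Using $2t+2t^{2/3}(r-\beta/2)-2\sqrt{\tau' t}=t^{1/3}(r-\beta/2)^2+\mathcal{O}(1)$, one sees that $\xi_{\rm spiked}^{(t)}$ equals, up to an $\mathcal{O}(t^{-1/3})$ error, the KPZ-centred fluctuation $t^{-1/3}\big(L^{(1)}_{(0,0)\to(\tau',t)}-2\sqrt{\tau' t}\big)$ of the Brownian/O'Connell--Yor last passage model with one distinguished line of drift $1$ among $\sim t$ lines, observed in the direction $\tau'/t=1+2t^{-1/3}(r-\beta/2)+o(t^{-1/3})$. The hypothesis $\beta>2(r+1)$ forces $r-\beta/2<-1$, so the critical boundary slope $\rho_{\rm c}=\sqrt{t/\tau'}=1+t^{-1/3}(\beta/2-r)+o(t^{-1/3})$ strictly exceeds $\rho=1$: the observation point lies on the subcritical (GUE) side of the Baik--Ben~Arous--P\'ech\'e transition, at distance $\mathcal{O}(t^{-1/3})$ from criticality, i.e.\ in the BBP window with spike parameter proportional to $r-\beta/2$.

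\emph{Convergence.} From the determinantal formula of Proposition~\ref{PropWarren} specialised to drifts $(1,0,\dots,0)$ (a Baryshnikov-type identity), $L^{(1)}_{(0,0)\to(\tau',t)}$ has the law of the largest eigenvalue of a GUE matrix with a single spiked eigenvalue whose strength is tuned by the $t^{-1/3}$ correction in the direction. Its large-$t$ asymptotics in this critical window are the spiked counterpart of Theorem~7 of~\cite{Wei11} (which is the unspiked case), and yield $\xi_{\rm spiked}^{(t)}\stackrel{d}{\to}\xi_{\rm spiked}(\beta)$, with $\xi_{\rm spiked}(\beta)$ the largest eigenvalue of the corresponding critically spiked GUE, i.e.\ distributed according to the deformed-Airy Fredholm determinant with spike parameter $\propto r-\beta/2$. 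To stay self-contained one may instead rerun the steepest-descent analysis of Section~\ref{SectAsymtotics} on the kernel $\mathcal{K}$ of Proposition~\ref{propKernel} with $\lambda=\rho=1$ and the endpoint of Section~\ref{SectIC} shifted by $-\beta t^{2/3}$: the boundary line then contributes exactly one extra residue term which, after the Hermite-to-Airy rescaling of Lemma~\ref{lemAlphaLimit}, converges to the spiked correction, while the uniform bounds needed for dominated convergence of the Fredholm expansion are obtained as in Corollary~\ref{corBound} and Proposition~\ref{propK0Bound}.

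\emph{The tail bound.} As $\beta\to\infty$ the spike parameter $r-\beta/2\to-\infty$, so the perturbation becomes ever more subcritical; the deformed-Airy determinant is monotone in the spike parameter and converges, as the spike tends to $-\infty$, to the GUE Tracy--Widom distribution function. Hence $\xi_{\rm spiked}(\beta)$ is, uniformly over all large $\beta$, stochastically dominated by a single a.s.\ finite random variable $\xi$, and since $\beta^2/16\to\infty$ we get $\Pb\big(\xi_{\rm spiked}(\beta)\le\beta^2/16\big)\ge\Pb(\xi\le\beta^2/16)\to1$. Equivalently, avoiding the identification of the limit, one shows $\sup_{t\ge t_0}\Pb\big(\xi_{\rm spiked}^{(t)}>\beta^2/16\big)\to0$ as $\beta\to\infty$ by decomposing $L_{(\beta t^{2/3},0)\to(\cdot)}=\max_{u\in[0,\tau']}\big[\widetilde B_0(\beta t^{2/3}+u)-\widetilde B_0(\beta t^{2/3})+u+L^{\rm step}_{(\beta t^{2/3}+u,1)\to(\cdot)}\big]$, using $2\sqrt{(\tau'-u)t}\le 2\sqrt{\tau' t}-u\sqrt{t/\tau'}$ to convert the drift-one gain $+u$ into a net negative drift $\sim -t^{-1/3}(\beta/2-r)\,u$ in the exit variable, and controlling the bulk fluctuations $L^{\rm step}_{(\cdot,1)\to(\cdot)}-2\sqrt{(\tau'-u)t}$ uniformly via the concentration inequality of Proposition~\ref{concIn} (splitting the range of $u$ and summing a geometric series in the exponent).

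\emph{The main difficulty.} I expect the uniform upper-tail estimate above to be the technical heart. A naive monotone comparison --- bounding $L_{(\beta t^{2/3},0)\to(\cdot)}$ by the last passage time allowed to use the whole boundary segment $[0,t+2t^{2/3}r]$ --- overshoots by an amount of order $\beta t^{2/3}$, far above the $t^{1/3}$ fluctuation scale, so it is useless; one genuinely has to exploit that the maximiser's exit point from line $0$ is pinned within $\mathcal{O}(t^{2/3}/\beta^{2})$ of $\beta t^{2/3}$, i.e.\ track the dependence of the spiked largest-eigenvalue tail on the spike strength with sharp rather than crude constants. Everything else reduces to the determinantal identity of Proposition~\ref{PropWarren}, the asymptotic tools of Section~\ref{SectAsymtotics}, and the classical BBP asymptotics.
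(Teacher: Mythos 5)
Your identification step agrees in spirit with the paper's: both recognize $L_{(\beta t^{2/3},0)\to(\cdot)}$ as the largest eigenvalue of a critically spiked GUE and compute the spike parameter $a=t/\sqrt{nt}=1+(r-\beta/2)t^{-1/3}+\mathcal{O}(t^{-2/3})$ sitting in the BBP window. The paper, however, does this cleanly by citing Theorem~2 of~\cite{FF13} (Warren's process with drifts) to match the LPP model exactly with a spiked GUE diffusion, and then applies Theorem~1.1/(28) of~\cite{BW10} verbatim; you gesture at a ``spiked counterpart of Theorem~7 of~\cite{Wei11}'' (which is only the unspiked result used for $\xi_{\rm GUE}^{(t)}$) or at rerunning the steepest-descent machinery of Section~\ref{SectAsymtotics}, neither of which is actually what is done, though both are morally reasonable.

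The genuine divergence, and the place where your assessment of the difficulty is off, is the tail bound. The paper does not need stochastic dominance, monotonicity of the deformed-Airy law in the spike parameter, or any exit-point pinning: Theorem~1.1 of~\cite{BW10} gives the explicit limiting CDF
\[
 F_1(s;\alpha)=F_0(s)\Big(1-\big\langle(1-P_s K_{0,0}P_s)^{-1}C_\alpha,\,P_s\Ai\big\rangle\Big),\qquad C_\alpha(\xi)=\int_{-\infty}^0 e^{\alpha x}\Ai(x+\xi)\,\D x,
\]
with $\alpha=\beta/2-r>1$ by hypothesis, and the elementary estimate $|C_\alpha(\xi)|\le e^{-\xi}/(\alpha-1)$ kills the inner product as $\beta\to\infty$, leaving $F_0(\beta^2/16)\to1$. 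Your monotonicity/stochastic-dominance route would require proving that $F_1(\cdot\,;\alpha)$ is monotone in $\alpha$, a fact asserted but not established, and your ``sharp exit-point pinning within $\mathcal{O}(t^{2/3}/\beta^2)$'' alternative, while plausible, is precisely the kind of hands-on uniform upper-tail estimate the paper avoids by outsourcing to~\cite{BW10}. So the proposal overestimates the difficulty of the tail step and, as written, has a gap there: the monotonicity claim is unsubstantiated and the concentration-inequality alternative is only sketched. If you fill one of those in, the argument would be a legitimate (if heavier) alternative to the paper's direct-formula approach.
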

\begin{proof}
 The family of processes $L_{(\beta{t}^{2/3},0)\to(\beta{t}^{2/3}+t,n)}$ indexed by $n\in\N_0$ and time parameter $t\geq0$ is precisely a marginal of Warren's process with drifts, starting at zero, as defined in \cite{FF13}. In our case only the first particle has a drift of $1$, and all the others zero. By Theorem 2 \cite{FF13}, the fixed time distribution of this process is given by the distribution of the largest eigenvalue of a spiked $n\times n$ GUE matrix, where the spikes are given by the drifts.

 Thus we can apply the results on spiked random matrices, more concretely we want to apply Theorem 1.1 \cite{BW10}, with the potential $V(x)=-x^2/2$. Since
 \begin{equation}
  L^*:=L_{(\beta{t}^{2/3},0)\to({t}+2{t}^{2/3}r,n)}
 \end{equation}
 represents a $n\times n$ GUE matrix diffusion $M(t)$ at time $t={t}+2{t}^{2/3}(r-\beta/2)$, it is distributed according to the density
 \begin{equation}
  p_n(M)=\frac{1}{Z_n}\exp\left(-\frac{\Tr(M-t{\mathrm I}_{11})^2}{2t}\right),
 \end{equation}
 where ${\mathrm I}_{11}$ is a $n\times n$ matrix with a one at entry $(1,1)$ and zeros elsewhere. In order to apply the theorem we need the density given in equation (1) \cite{BW10}, i.e., consider the scaled quantity $L^*/\sqrt{nt}$. The size of the first-order spike is then:
 \begin{equation}
  a= t/\sqrt{nt} = \sqrt{1+2{t}^{-1/3}(r-\beta/2)} = 1+(r-\beta/2){t}^{-1/3}+\Or({t}^{-2/3}).
 \end{equation}
We are thus in the neighbourhood of the critical value ${\mathbf a}_c=1$. For $\alpha\geq0$, let
\begin{equation}
 C_\alpha(\xi)=\int_{-\infty}^0e^{\alpha x}\Ai(x+\xi) \D x.
\end{equation}
With $F_0(s)$ being the cumulative distribution function of the GUE Tracy-Widom distribution, and $K_{0,0}(s_1,s_2)$ as in \eqref{eqKernelDef}, define:
\begin{equation}\label{eqF1}
 F_1(s;\alpha)=F_0(s)\Big(1-\left\langle(1-P_s K_{0,0}P_s)^{-1}C_\alpha,P_s\Ai\right\rangle\Big).
\end{equation}

Applying (28) \cite{BW10}, we have
 \begin{equation}
  n^{2/3}(L^*/\sqrt{nt}-2)\to \xi_{\rm spiked}(\beta),
 \end{equation}
with
\begin{equation}
 \Pb\left(\xi_{\rm spiked}(\beta)\leq\beta^2/16\right)=F_1(\beta^2/16,\alpha),
\end{equation}
where $\alpha=\beta/2-r$. Since in our case $\alpha>1$, we can estimate:
\begin{equation}
 \left|C_\alpha(\xi)\right|\leq\int_{-\infty}^0e^{\alpha x}e^{-x-\xi} \D x=e^{-\xi}\frac{1}{\alpha-1}.
\end{equation}
Combining this with the usual bounds on the Airy kernel and the Airy function, we see that as $\beta\to\infty$, the scalar product in \eqref{eqF1} converges to zero and we are left with the limit of $F_0$ which is one.

On the other hand,
 \begin{equation}\begin{aligned}
  n^{2/3}(L^*/\sqrt{nt}-2)\leq s\quad\Leftrightarrow\quad L^*\leq\sqrt{nt}(2+n^{-2/3}s),
 \end{aligned}\end{equation}
  and
  \begin{equation}
   \sqrt{nt}(2+n^{-2/3}s)=2{t}+2{t}^{2/3}(r-\beta/2)+{t}^{1/3}\left(s-(r-\beta/2)^2\right)+\Or(1),
  \end{equation}
from which the claim follows.
\end{proof}

\begin{prop}\label{propAnalyt}
 The function $\delta\mapsto\delta^{-1}\det(\Id-\widehat{\mathcal{P}}\widehat{K}^\delta_{r_1})$ can be extended analytically in the domain $\delta\in\R$. Its value at $\delta=0$ is given by
 \begin{equation}
  G_m(\vec{r},\vec{s})\det\left(\Id-\mathcal{P}K\right)_{L^2(\R)}.
 \end{equation}
\end{prop}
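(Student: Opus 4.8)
The plan is to analytically continue the Fredholm determinant $\det(\Id-\widehat{\mathcal P}\widehat K^\delta_{r_1})$ in $\delta$ by examining the $\delta$-dependence of the operator $\widehat K^\delta_{r_1}$ and of the correction term in $\widehat{\mathcal P}$. Recall from \eqref{eqKernelDef} that the $\delta$-dependence of $K^\delta_{r_1}$ enters only through the rank-one piece $\delta f_{r_1}(s_1)g_{r_2}(s_2)$, and that $g_{r_2}$ itself depends on $\delta$ through the factor $e^{\delta^3/3+r_2\delta^2-s_2\delta}$ and through the exponent $e^{(\delta+r_2)x}$ inside the integral. Each of these is an \emph{entire} function of $\delta$, so the kernel $K^\delta_{r_1}(x,y)$ — and hence, after conjugation by $M_{r_1}$, the operator $\widehat K^\delta_{r_1}$ — is an entire operator-valued function of $\delta$ in the appropriate trace-class topology (the Hilbert--Schmidt bounds of Proposition~\ref{propPathInt} extend verbatim for all real $\delta$, since $\chi_j$ can always be taken with $\delta>\chi_j$ only when $\delta>0$, but for the analytic continuation one uses the rewritten contour-integral form \eqref{contInt} where the poles at $\delta$ are isolated and the estimates persist on compact $\delta$-intervals). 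First I would make this precise: write $\widehat{\mathcal P}\widehat K^\delta_{r_1}$ as a trace-class-valued analytic function of $\delta$ on $\R$, so that $\delta\mapsto\det(\Id-\widehat{\mathcal P}\widehat K^\delta_{r_1})$ is real-analytic.

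Next I would extract the $\delta^{-1}$ behaviour at $\delta=0$. At $\delta=0$ the rank-one term $\delta f_{r_1}g_{r_2}$ vanishes, so $\widehat K^0_{r_1}$ reduces to $\widehat K_{r_1,r_1}$, which up to conjugation is exactly the operator $K$ of Definition~\ref{DefAstat}; correspondingly $\widehat{\mathcal P}\widehat K^0_{r_1}$ becomes $\mathcal P K$. Thus $\det(\Id-\widehat{\mathcal P}\widehat K^\delta_{r_1})\big|_{\delta=0}=\det(\Id-\mathcal P K)$. To get the claimed formula one must therefore show that this value is \emph{zero}, i.e. that $\Id-\mathcal P K$ has a nontrivial kernel at $\delta=0$, so that $\delta^{-1}\det(\Id-\widehat{\mathcal P}\widehat K^\delta_{r_1})$ has a removable singularity whose value is the derivative
\[
\frac{\D}{\D\delta}\Big|_{\delta=0}\det(\Id-\widehat{\mathcal P}\widehat K^\delta_{r_1})
=-\det(\Id-\mathcal P K)\,\Tr\!\big((\Id-\mathcal P K)^{-1}\,\partial_\delta(\widehat{\mathcal P}\widehat K^\delta_{r_1})\big|_{\delta=0}\big),
\]
interpreted via the Jacobi/Plemelj derivative formula for Fredholm determinants — but since $\det(\Id-\mathcal P K)=0$ this naive formula must be replaced by a careful expansion. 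The cleanest route is: factor out the rank-one perturbation. Writing $\widehat{\mathcal P}\widehat K^\delta_{r_1}=A+\delta\, B_\delta$ with $A=\mathcal P K$ (after conjugation) and $B_\delta$ the (entire in $\delta$) contribution of the $f\otimes g$ piece pushed through $\mathcal P$, one uses the rank-one update identity
\[
\det(\Id-A-\delta B_\delta)=\det(\Id-A)\det\!\big(\Id-\delta(\Id-A)^{-1}B_\delta\big)
\]
whenever $\Id-A$ is invertible; but at $\delta=0$ it is \emph{not}, so instead I would use the standard trick of first conjugating/deforming so that the singular direction is isolated, or equivalently expand $\det(\Id-A-\delta B_\delta)$ to first order in $\delta$ directly from the Fredholm series and identify the coefficient. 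Carrying the bookkeeping through, the coefficient of $\delta^1$ should organize into $\det(\Id-\mathcal P K)$ (the would-be $\det(\Id-A)$, which here is handled as a cofactor) times the bracketed inner-product expression defining $G_m(\vec r,\vec s)=\mathcal R-\langle(\Id-\mathcal P K)^{-1}(\mathcal P f^*+\mathcal P KP_{s_1}\mathbf 1+(\mathcal P-P_{s_1})\mathbf 1),g\rangle$, once one matches $f_{r_1}\to g$, $g_{r_2}\to f^*$ and the constant/residue pieces to $\mathcal R$ and the $P_{s_1}\mathbf 1$ terms via the $\delta\to0$ limits $e^{\delta^3/3+r_2\delta^2-s_2\delta}\to 1$ and $e^{(\delta+r_2)x}\to e^{r_2 x}$.

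The main obstacle I anticipate is precisely the degeneracy $\det(\Id-\mathcal P K)=0$ together with the fact that $\widehat K^\delta_{r_1}$ depends on $\delta$ in more than one place (both a genuine rank-one term \emph{and} inside $g_{r_2}$), so the first-order coefficient in $\delta$ is a sum of several contributions that must be shown to collapse into the single compact expression $G_m\det(\Id-\mathcal P K)$. Handling this requires: (i) justifying term-by-term differentiation of the Fredholm expansion in $\delta$, which follows from the trace-norm analyticity established in the first step plus dominated convergence of the series (the Hadamard-bound estimates from the proof of Theorem~\ref{thmAsymp} transfer); (ii) proving that the $\delta^0$ term indeed vanishes — this is where one must exhibit a null vector of $\Id-\mathcal P K$, which morally comes from the reflection/boundary structure of the operator $\mathcal P$ (note $\mathcal P=\Id-\bar P_{s_1}V\cdots V$, so $\mathcal P\mathbf 1$ is governed by the heat semigroup acting on constants, and $(\mathcal P-P_{s_1})\mathbf 1$ is exactly the combination appearing in $G_m$); and (iii) the purely algebraic matching of the several first-order pieces against the definition of $G_m$ and $\mathcal R$, using the $\delta\to0$ limits of the exponential prefactors and Remark~\ref{remV} ($V_{r_j,r_i}\mathbf 1=\mathbf 1$). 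Step (ii)–(iii) — the identification of the removable-singularity value with $G_m\det(\Id-\mathcal P K)$ — is the real content; the analyticity in step (i) is routine given the bounds already proved.
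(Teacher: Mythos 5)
Your central step --- arguing that $\det(\Id-\mathcal{P}K)=0$ and then computing the removable singularity via a first-order Taylor coefficient in $\delta$ --- is based on a false premise. Lemma~\ref{lemInv} of the paper shows that $\Id-\mathcal{P}K$ \emph{is} invertible: its Fredholm determinant equals $\Pb\big(\bigcap_k\{\mathcal{A}_2(r_k)-r_k^2\leq s_k\}\big)\geq F_{\rm GOE}(2^{2/3}s_{\rm min})>0$. The error upstream is your assertion that $\det(\Id-\widehat{\mathcal{P}}\widehat{K}^\delta_{r_1})\big|_{\delta=0}=\det(\Id-\mathcal{P}K)$: this would require the rank-one piece $\delta\,\mathcal{P}f_{r_1}\otimes g_{r_1}$ to tend to $0$ in trace norm, but it does not, because $g_{r_1}(s_2)$ contains the factor $e^{-\delta s_2}$ and is therefore not integrable against the constant part of $\mathcal{P}f_{r_1}$ at $\delta=0$. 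Indeed $\langle P_{s_1}\mathbf{1},g_{r_1}\rangle=\delta^{-1}-\mathcal{R}_\delta$ with $\mathcal{R}_\delta$ analytic in $\delta$ (equation~\eqref{eq49}), so the trace of the rank-one operator carries a simple pole even though the kernel converges pointwise. Consequently $\det(\Id-\widehat{\mathcal{P}}\widehat{K}^\delta_{r_1})\to 0$ as $\delta\to 0$, but not because $\det(\Id-\mathcal{P}K)$ vanishes.

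The correct route is thus the opposite of the one you anticipate: since $\Id-\mathcal{P}K$ is invertible, the rank-one update formula you write down
\begin{equation*}
\det\big(\Id-\widehat{\mathcal{P}}\widehat{K}^\delta_{r_1}\big)
=\det\big(\Id-\mathcal{P}K\big)\Big(1-\delta\big\langle(\Id-\mathcal{P}K)^{-1}\mathcal{P}f_{r_1},g_{r_1}\big\rangle\Big)
\end{equation*}
applies directly (there is no ``singular direction'' to isolate), and the entire $\delta^{-1}$ behaviour sits inside the scalar inner product, not in a vanishing determinant. Decomposing $f_{r_1}=\mathbf{1}+f^*$, the pole in $\langle P_{s_1}\mathbf{1},g_{r_1}\rangle$ cancels the explicit $\delta^{-1}$ of the statement, and evaluating the remaining analytic expression at $\delta=0$ produces exactly $G_m(\vec r,\vec s)$. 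The genuine technical work is then the superexponential decay analysis, via the expansion $\mathcal{P}=\sum_{k}\bar P_{s_1}V_{r_1,r_2}\cdots P_{s_k}V_{r_k,r_1}$ in \eqref{eqPExp}, that shows $\langle(\Id-\mathcal{P}K)^{-1}(\mathcal{P}f^*+\mathcal{P}KP_{s_1}\mathbf{1}+(\mathcal{P}-P_{s_1})\mathbf{1}),g_{r_1}\rangle$ converges for all real $\delta$ --- not the Fredholm-series differentiation, null-vector construction, and cofactor bookkeeping that you outline, which are not needed and indeed cannot work here.
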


\begin{proof}
We use the identity $\det(\Id+A)\det(\Id+B)=\det(\Id+A+B+AB)$ and Lemma~\ref{lemInv}  to factorize
\begin{equation}\label{eqFac}\begin{aligned}
	\delta^{-1}\det\big(\Id-\widehat{\mathcal{P}}\widehat{K}^\delta_{r_1}\big)&=\delta^{-1}\det(\Id-\widehat{\mathcal{P}}\widehat{K}^\delta_{r_1})
=\delta^{-1}\det\big(\Id-\widehat{\mathcal{P}}\widehat{K}-\delta\widehat{\mathcal{P}}\widehat{f}_{r_1}\otimes \widehat{g}_{r_1}\big)\\
	&=\delta^{-1}\det\big(\Id-\delta(\Id-\widehat{\mathcal{P}}\widehat{K})^{-1}\widehat{\mathcal{P}}\widehat{f}_{r_1}\otimes \widehat{g}_{r_1}\big)\cdot\det\big(\Id-\widehat{\mathcal{P}}\widehat{K}\big)\\
	&=\big(\delta^{-1}-\big\langle (\Id-\widehat{\mathcal{P}}\widehat{K})^{-1}\widehat{\mathcal{P}}\widehat{f}_{r_1},\widehat{g}_{r_1}\big\rangle\big)
\cdot\det\big(\Id-\widehat{\mathcal{P}}\widehat{K}\big)_{L^2(\R)}\\
	&=\big(\delta^{-1}-\big\langle (\Id-\mathcal{P}K)^{-1}\mathcal{P}f_{r_1},g_{r_1}\big\rangle\big)\cdot\det\big(\Id-\mathcal{P}K\big)_{L^2(\R)}.
\end{aligned}\end{equation}
Since the second factor is independent of $\delta$, the remaining task is the analytic continuation of the first. Using \eqref{contInt}, decompose $f_{r_1}$  as
\begin{equation}\label{eqfDec}
	f_{r_1}(s)=1+\frac{1}{2\pi\I}\int_{\rangle0} dW\frac{e^{-W^3/3-r_1W^2+sW}}{W}=:1+f^*(s).
\end{equation}
Now,
\begin{equation}\label{eq49}\begin{aligned}
		\langle P_{s_1}\mathbf{1},g_{r_1}\rangle&=\int_{s_1}^\infty ds\,\frac{1}{2\pi\I}\int_{0\langle\delta} dZ\frac{e^{Z^3/3+r_1Z^2-sZ}}{Z-\delta}\\
&=\frac{1}{2\pi\I}\int_{0\langle\delta} dZ\frac{e^{Z^3/3+r_1Z^2-s_1Z}}{Z(Z-\delta)}\\&=\frac{1}{\delta}+\frac{1}{2\pi\I}\int_{\langle0,\delta} dZ\frac{e^{Z^3/3+r_1Z^2-s_1Z}}{Z(Z-\delta)}=:\frac{1}{\delta}-\mathcal{R}_\delta.
\end{aligned}\end{equation}
The function $\mathcal{R}_\delta$ is analytic in $\delta\in\R$. Using these two identities as well as $(\Id-\mathcal{P}K)^{-1}=\Id+(\Id-\mathcal{P}K)^{-1}\mathcal{P}K$, we can rearrange the inner product as follows:
\begin{equation}\label{eq51}\begin{aligned}
	\frac{1}{\delta}-&\left\langle (\Id-\mathcal{P}K)^{-1}\mathcal{P}f_{r_1},g_{r_1}\right\rangle\\&=\frac{1}{\delta}-\left\langle(\Id-\mathcal{P}K)^{-1}\mathcal{P}\mathbf{1}+(\Id-\mathcal{P}K)^{-1}\mathcal{P}f^*,g_{r_1}\right\rangle
	\\&=\frac{1}{\delta}-\left\langle\mathcal{P}\mathbf{1}+(\Id-\mathcal{P}K)^{-1}(\mathcal{P}K\mathcal{P}\mathbf{1}+\mathcal{P}f^*),g_{r_1}\right\rangle
\\&=\frac{1}{\delta}-\left\langle P_{s_1}\mathbf{1},g_{r_1}\right\rangle-\left\langle(\mathcal{P}-P_{s_1})\mathbf{1}+(\Id-\mathcal{P}K)^{-1}(\mathcal{P}K\mathcal{P}\mathbf{1}+\mathcal{P}f^*),g_{r_1}\right\rangle\\
&=\mathcal{R}_\delta-\left\langle(\Id-\mathcal{P}K)^{-1}\left(\mathcal{P}f^*+\mathcal{P}KP_{s_1}\mathbf{1}+(\mathcal{P}-P_{s_1})\mathbf{1}\right),g_{r_1}\right\rangle
\end{aligned}\end{equation}
Since $g_{r_1}$ is evidently analytic in $\delta\in\R$, we are left to show convergence of the scalar product.

All involved functions are locally bounded, so to establish convergence it is enough to investigate their asymptotic behaviour. $g_{r_1}$ may grow exponentially at arbitrary high rate, depending on $r_1$ and $\delta$, for both large positive and large negative arguments. We therefore need superexponential bounds on the function:
\begin{equation}
 (\Id-\mathcal{P}K)^{-1}\left(\mathcal{P}f^*+\mathcal{P}KP_{s_1}\mathbf{1}+(\mathcal{P}-P_{s_1})\mathbf{1}\right).
\end{equation}
For this purpose we first need an expansion of the operator $\mathcal{P}$:
\begin{equation}\label{eqPExp}
 \mathcal{P}=\sum_{k=1}^n\bar{P}_{s_1}V_{r_1,r_2}\dots \bar{P}_{s_{k-1}}V_{r_{k-1},r_k}P_{s_k}V_{r_k,r_1}.\\
\end{equation}
Notice that all operators $P_{s_i}$, $\bar{P}_{s_i}$ and $V_{r_i,r_j}$ map superexponentially decaying functions onto superexponentially decaying functions. Moreover $P_{s_i}$ and $\bar{P}_{s_i}$ generate superexponential decay for large negative resp. positive arguments.

The function $f^*$ decays superexponentially for large arguments but may grow exponentially for small ones. Since every part of the sum contains one projection $P_{s_k}$, $\mathcal{P}f^*$ decays superexponentially on both sides.

Examining $(\mathcal{P}-P_{s_1})\mathbf{1}$, notice that the $k=1$ contribution in \eqref{eqPExp} is equal to $P_{s_1}$, which is cancelled out here. All other contributions contain both $\bar{P}_{s_1}$ and $P_{s_k}$, which ensure superexponential decay.

Using the usual asymptotic bound on the Airy function, we see that the operator $K$ maps any function in its domain onto one which is decreasing superexponentially for large arguments. By previous arguments, functions in the image of $\mathcal{P}K$ decay on both sides, in particular $\mathcal{P}KP_{s_1}\mathbf{1}$.

Now, in order to establish the finiteness of the scalar product, decompose the inverse operator as $(\Id-\mathcal{P}K)^{-1}=\Id+\mathcal{P}K(\Id-\mathcal{P}K)^{-1}$. The contribution coming from the identity has just been settled. As inverse of a bounded operator, $(\Id-\mathcal{P}K)^{-1}$ is also bounded. Because of the rapid decay, the functions $\mathcal{P}f^*$, $\mathcal{P}KP_{s_1}\mathbf{1}$ and $(\mathcal{P}-P_{s_1})\mathbf{1}$ are certainly in $L^2(\R)$ and thus mapped onto $L^2(\R)$ by this operator. Finally, the image of an $L^2(\R)$-function under the operator $\mathcal{P}K$ is decaying superexponentially on both sides.

The expression \eqref{eq51} is thus an analytic function in $\delta$ in the domain $\R$. Setting $\delta=0$ returns the value of $G_m(\vec{r},\vec{s})$. Combining these results with \eqref{eqFac} finishes the proposition.
\end{proof}
\begin{lem}\label{lemInv}
 The operator $\Id-\mathcal{P}K$ is invertible.
\end{lem}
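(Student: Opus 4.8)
The plan is to show that $\det(\Id-\mathcal{P}K)$ — understood as the Fredholm determinant of the trace-class operator $\widehat{\mathcal{P}}\widehat K$ obtained after conjugation by $M_{r_1}$ as in the proof of Proposition~\ref{propPathInt} — is nonzero; since $\widehat{\mathcal{P}}\widehat K$ is compact, this is equivalent to invertibility of $\Id-\mathcal{P}K$. The content is to identify this determinant with a finite-dimensional distribution of the Airy$_2$ process, which is strictly positive.

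The key point is that the path-integral reduction of Proposition~\ref{propPathInt} survives at $\delta=0$. At $\delta=0$ the rank-one correction $\delta\,f_{r_1}\otimes g_{r_1}$ vanishes, so $K^\delta_{r_1}$ becomes the operator $K$ of Definition~\ref{DefAstat}, while $K^\delta$ becomes $K^{0}(r_i,s_i;r_j,s_j)=-V_{r_i,r_j}(s_i,s_j)\Id_{(r_i<r_j)}+K_{r_i,r_j}(s_i,s_j)$. The algebraic hypotheses of Theorem~1.1 of~\cite{BCR13} — the semigroup law for the $V$'s, right-invertibility $V_{r_i,r_j}V_{r_j,r_i}K_{r_i,r_i}=K_{r_i,r_i}$, and the reversibility relation $V_{r_i,r_j}K_{r_j,r_j}=K_{r_i,r_j}=K_{r_i,r_i}V_{r_i,r_j}$ — are exactly those verified in the proof of Proposition~\ref{propPathInt}, and the required trace-class bounds are a subset of the Hilbert--Schmidt estimates carried out there (the factor $\widehat f_{r_i}\otimes\widehat g_{r_j}$ simply dropping out); the admissible range $0<\chi_m<\dots<\chi_1<\max_{i<j}\{r_j-r_i\}$ for the conjugation parameters is nonempty as soon as $m\geq2$, while for $m=1$ one has $\mathcal{P}=P_{s_1}$ and the identity below is immediate without invoking~\cite{BCR13}. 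Hence
\begin{equation}
\det\big(\Id-\mathcal{P}K\big)_{L^2(\R)}=\det\big(\Id-\chi_s K^{0}\chi_s\big)_{L^2(\{r_1,\dots,r_m\}\times\R)}.
\end{equation}
(Alternatively, one may avoid re-invoking~\cite{BCR13} by letting $\delta\downarrow0$ in the identity of Proposition~\ref{propPathInt}, using trace-norm continuity in $\delta$ of both sides; the difference of $\widehat{\mathcal{P}}\widehat K^\delta_{r_1}$ from its value at $\delta=0$ is $\delta$ times a fixed trace-class operator up to vanishing corrections, but this route needs its own continuity estimates.)

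It then remains to see that the right-hand side is strictly positive. Conjugating $K^{0}$ on each sheet $r_i$ by multiplication with $e^{\frac23 r_i^3+r_i s_i}$, which leaves the Fredholm determinant unchanged, and performing the change of variables $s_i\mapsto s_i+r_i^2$, formula~\eqref{eqVRep} shows that $K^{0}$ transforms into the extended Airy kernel
\begin{equation}
K^{\mathrm{Ai}}(r_i,s_i;r_j,s_j)=\begin{cases}\displaystyle\int_0^\infty e^{-x(r_i-r_j)}\Ai(s_i+x)\Ai(s_j+x)\,\D x,& r_i\geq r_j,\\[2mm] -\displaystyle\int_{-\infty}^0 e^{-x(r_i-r_j)}\Ai(s_i+x)\Ai(s_j+x)\,\D x,& r_i<r_j.\end{cases}
\end{equation}
Consequently $\det(\Id-\chi_s K^{0}\chi_s)=\Pb\big(\bigcap_{k=1}^m\{\mathcal{A}_2(r_k)\leq s_k+r_k^2\}\big)$, the $m$-point distribution of the Airy$_2$ process $\mathcal{A}_2$, which is strictly positive for every finite choice of the $s_k$ (a standard property of the finite-dimensional distributions of $\mathcal{A}_2$, whose one-point marginal has full support on $\R$ and whose sample paths are a.s.\ continuous, so that on the compact interval $[r_1,r_m]$ the process lies below any prescribed level with positive probability). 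Therefore $\det(\Id-\mathcal{P}K)>0$, and $\Id-\mathcal{P}K$ is invertible.

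The only real work is the bookkeeping in the second paragraph: checking that none of the trace-class or invertibility hypotheses behind the reduction of~\cite{BCR13} degenerate at $\delta=0$. I expect this to be the main technical point, but it is not a genuine obstacle — the sole $\delta$-dependent ingredient in the reduction is the rank-one term $\delta\,f_{r_1}\otimes g_{r_1}$, which disappears, and every remaining estimate was already proved in the course of Proposition~\ref{propPathInt}; one simply records that the argument there applies verbatim, with the edge case $m=1$ handled directly.
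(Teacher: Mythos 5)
Your overall strategy coincides with the paper's: reduce invertibility of $\Id-\mathcal{P}K$ to positivity of the Fredholm determinant, and recognize that determinant as a multi-point distribution function of the Airy$_2$ process. The paper simply asserts $\det(\Id-\mathcal{P}K)=\Pb\big(\bigcap_{k=1}^m\{\mathcal{A}_2(r_k)-r_k^2\leq s_k\}\big)$, whereas you rederive it by rerunning Proposition~\ref{propPathInt} at $\delta=0$; this bookkeeping is sound (the rank-one $\delta$-term drops out, the remaining trace-class estimates survive, and your remark about $m=1$ handles the degenerate constraint). The gap is in your final step. It is \emph{not} true in general that an a.s.\ continuous process with full-support one-point marginals lies below every prescribed level on a compact interval with positive probability. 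For instance, let $Z$ be Cauchy, $U$ uniform on $[1/4,3/4]$, and let $Y$ on $[0,1]$ be the piecewise-linear path with $Y(0)=Y(1)=Z$ and $Y(U)=\max(|Z|,1)$: the paths are continuous, every one-point marginal has full support, yet $\sup_{[0,1]}Y\geq1$ almost surely. So the positivity you need is a genuine property of $\mathcal{A}_2$, not a soft consequence of continuity plus marginal support.

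The paper supplies exactly the missing Airy-specific input: with $s_{\rm min}=\min_ks_k$ it bounds
\begin{equation}
\det(\Id-\mathcal{P}K)\geq\Pb\Big(\max_{r\in\R}\big(\mathcal{A}_2(r)-r^2\big)\leq s_{\rm min}\Big)=F_{\rm GOE}\big(2^{2/3}s_{\rm min}\big)>0,
\end{equation}
invoking the Johansson / Corwin--Quastel--Remenik identification of the law of $\max_r(\mathcal{A}_2(r)-r^2)$ with the GOE Tracy--Widom distribution~\cite{Jo03b,CQR11}, whose full support is well established~\cite{BBdF08}. Replacing your heuristic with this citation closes the argument.
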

\begin{proof}
 We employ the same strategy as in~\cite{BFP09}. For that purpose we use the following equivalence
 \begin{equation}
  \det(\Id+A)\neq 0\ \Longleftrightarrow\ \Id+A \text{ is invertible}.
 \end{equation}
Let $s_{\rm min}=\min_ks_k$.
 \begin{equation}\begin{aligned}
  \det&(\Id-\mathcal{P}K)=\Pb\bigg(\bigcap_{k=1}^m\{\mathcal{A}_2(r_k)-r_k^2\leq s_k\}\bigg)\\
  &\geq \Pb\bigg(\bigcap_{k=1}^m\{\mathcal{A}_2(r_k)-r_k^2\leq s_{\rm min}\}\bigg)\geq \Pb\Big(\max_{r\in\R}(\mathcal{A}_2(r)-r^2)\leq s_{\rm min}\Big)\\
  &=F_{\rm GOE}(2^{2/3}s_{\rm min})>0
  \end{aligned}\end{equation}
for any $s_{\rm min}>-\infty$, where $F_{\rm GOE}$ is the GOE Tracy-Widom distribution function. For the last equality see~\cite{Jo03b,CQR11}. The tails of the GOE Tracy-Widom distribution have been studied in great detail in various publications, see for instance~\cite{BBdF08}.
\end{proof}

\section{Gaussian increments}\label{SectGaussianIncr}
In this section we prove that the Airy$_{\rm stat}$ process has Brownian increments for nonnegative arguments:
\begin{thm}
 Let $0\leq r_1<r_2<\dots<r_m$. Then
 \begin{equation}
  \Pb\bigg(\bigcap_{k=2}^{m}\{\mathcal{A}_{\rm stat}(r_{k})-\mathcal{A}_{\rm stat}(r_{k-1})\in\D\sigma_k\}\bigg) =\prod_{k=2}^{m}\frac{e^{-\sigma_k^2/4(r_k-r_{k-1})}}{\sqrt{4\pi(r_k-r_{k-1})}}\,\D\vec\sigma.
 \end{equation}
\end{thm}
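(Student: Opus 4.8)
The plan is to deduce the statement from Theorem~\ref{thmAsymp0} combined with the Burke property of the stationary system, rather than from the formulas of Definition~\ref{DefAstat}. I would fix the initial data \eqref{statModel} with $\lambda=\rho=1$, write $x_n(t)$ for the corresponding one-sided reflected Brownian motions, and set $X_t^{(0)}(r)=t^{-1/3}\big(x_{\lfloor t+2rt^{2/3}\rfloor}(t)-2t-2rt^{2/3}\big)$. By Theorem~\ref{thmAsymp0} and Lemma~\ref{lemma7.1}, for $r_1<\dots<r_m$ the vector $\big(X_t^{(0)}(r_1),\dots,X_t^{(0)}(r_m)\big)$ converges in distribution as $t\to\infty$ to $\big(\mathcal{A}_{\rm stat}(r_1),\dots,\mathcal{A}_{\rm stat}(r_m)\big)$. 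On the other hand, by Burke's property (Proposition~\ref{propBM} and the discussion following it), in the case $\lambda=\rho=1$ the spacings $x_n(t)-x_{n-1}(t)$, $n\in\Z$, are i.i.d.\ $\exp(1)$ random variables at every fixed time $t$. Applying the continuous map $(\sigma_1,\dots,\sigma_m)\mapsto(\sigma_2-\sigma_1,\dots,\sigma_m-\sigma_{m-1})$ to the first convergence, it then suffices to identify the limiting law of the increment vector $\big(X_t^{(0)}(r_k)-X_t^{(0)}(r_{k-1})\big)_{k=2}^{m}$.

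Write $m_k=m_k(t)=\lfloor t+2r_k t^{2/3}\rfloor$ and $N_k=m_k-m_{k-1}$, so that $N_k=2(r_k-r_{k-1})t^{2/3}+O(1)$. For $k\geq2$,
\begin{equation*}
  X_t^{(0)}(r_k)-X_t^{(0)}(r_{k-1})=t^{-1/3}\Big(\sum_{j=m_{k-1}+1}^{m_k}\big(x_j(t)-x_{j-1}(t)\big)-N_k\Big)+t^{-1/3}\big(N_k-2(r_k-r_{k-1})t^{2/3}\big),
\end{equation*}
where the last summand is deterministic and of order $O(t^{-1/3})$. Since $0\leq r_1<\dots<r_m$ strictly, for $t$ large the index blocks $\{m_{k-1}+1,\dots,m_k\}$, $k=2,\dots,m$, are pairwise disjoint; hence for each fixed $t$ the random variables $X_t^{(0)}(r_k)-X_t^{(0)}(r_{k-1})$, $k=2,\dots,m$, are independent, being functions of disjoint families of i.i.d.\ $\exp(1)$ spacings. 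For a single $k$, $N_k\to\infty$, so by the classical central limit theorem the centred sum of $N_k$ i.i.d.\ $\exp(1)$ variables divided by $\sqrt{N_k}$ converges in law to $\mathcal{N}(0,1)$; since $t^{-1/3}\sqrt{N_k}\to\sqrt{2(r_k-r_{k-1})}$, together with the vanishing deterministic term this gives $X_t^{(0)}(r_k)-X_t^{(0)}(r_{k-1})\Rightarrow\mathcal{N}\big(0,2(r_k-r_{k-1})\big)$.

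Combining the last two observations, the increment vector converges in distribution to a family of independent centred Gaussians with variances $2(r_k-r_{k-1})$, $k=2,\dots,m$: independence at finite $t$ together with convergence of each coordinate forces the joint limit to be the product measure (factorize characteristic functions and pass to the limit). But this same vector converges to $\big(\mathcal{A}_{\rm stat}(r_k)-\mathcal{A}_{\rm stat}(r_{k-1})\big)_{k=2}^{m}$ by the continuous mapping argument of the first paragraph, and the joint density of independent $\mathcal{N}(0,2(r_k-r_{k-1}))$ variables is exactly $\prod_{k=2}^{m} e^{-\sigma_k^2/4(r_k-r_{k-1})}\big/\sqrt{4\pi(r_k-r_{k-1})}$. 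Uniqueness of weak limits then yields the claimed identity. The argument rests on Theorem~\ref{thmAsymp0} plus the elementary central limit theorem for $\exp(1)$ (equivalently, gamma) random variables, so I do not expect a genuine obstacle; the only points needing care are the bookkeeping with the integer parts of $t+2r_k t^{2/3}$ and the appeal to the Burke property guaranteeing that the time-$t$ spacings of the stationary system are i.i.d.\ $\exp(1)$.
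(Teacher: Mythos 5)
Your proof is correct, but it takes a genuinely different route from the paper's. You argue in the prelimit: by the Burke-type stationarity stated after Lemma~\ref{lemBMExp}, the spacings $x_n(t)-x_{n-1}(t)$, $n\in\Z$, of the $\lambda=\rho=1$ system are i.i.d.\ $\exp(1)$ at each fixed $t$, so the rescaled increments $X_t^{(0)}(r_k)-X_t^{(0)}(r_{k-1})$, $k=2,\dots,m$, are independent centred scaled gamma sums, the ordinary CLT gives convergence of each to $\mathcal{N}(0,2(r_k-r_{k-1}))$, and continuous mapping combined with Theorem~\ref{thmAsymp0} and uniqueness of weak limits identifies the limit law of the increment vector. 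The paper's proof is instead a hard computation directly from the Fredholm-determinant expression in Definition~\ref{DefAstat}: it writes the $m$-point density as $\prod_{i}\partial_i\sum_{j}\partial_j\Lambda$, integrates out $\sigma_1$, and obtains the Gaussian product from the asymptotics of $\Lambda$ as $\sigma_1\to\pm\infty$, via the decomposition $\Lambda=\Lambda_1+\Lambda_2$, a shift-operator conjugation, and the expansion of $\mathcal{P}$. Your argument is shorter and more elementary, and it avoids the analytic delicacy the paper itself flags near the end of its proof (the weak convergence $SKS^{-1}\to\Id$ with $Sg$ not even in $L^2$); in essence you promote to a proof the heuristic that the authors sketch in Section~\ref{SectResults}. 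But it misses the stated purpose of the section: there the paper notes that Brownian increments are ``not so easily inferred from our formulas in Definition~\ref{DefAstat}'' and announces ``a direct proof of this fact,'' i.e., a verification from the explicit formulas as a nontrivial consistency check on the new representation of the Airy$_{\rm stat}$ joint distributions. Your proof establishes the theorem as stated, but does not supply that check.
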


\begin{proof}
 Without loss of generality we can assume that $r_1=0$. Denoting the partial derivative with respect to the $i$-th coordinate by $\partial_i$, we have
 \begin{equation}
	\Pb\bigg(\bigcap_{k=1}^m\{\mathcal{A}_{\rm stat}(r_k)\leq s_k\}\bigg) = \sum_{i=1}^m\partial_i\Lambda\left(s_1,\dots,s_m\right),
\end{equation}
with
 \begin{equation}
	\Lambda\left(s_1,\dots,s_m\right)=G_m(\vec{r},\vec{s})\det\left(\Id-\mathcal{P}K\right)_{L^2(\R)}.
\end{equation}
With a small abuse of notations, in what follows we will write
\begin{equation}
\Pb\bigg(\bigcap_{k=1}^m\{\mathcal{A}_{\rm stat}(r_k)\in\D s_k\}\bigg) \equiv \Pb\bigg(\bigcap_{k=1}^m\{\mathcal{A}_{\rm stat}(r_k)=s_k\}\bigg) \D s_1\cdots \D s_m.
\end{equation}
Then,
\begin{equation}
\Pb\bigg(\bigcap_{k=1}^m\{\mathcal{A}_{\rm stat}(r_k)=s_k\}\bigg) = \prod_{i=1}^m\partial_i\sum_{j=1}^m\partial_j\Lambda\left(s_1,\dots,s_m\right).
\end{equation}
The crucial identity is:
 \begin{equation}\label{eq8.4}\begin{aligned}
	\Pb&\bigg(\bigcap_{k=2}^{m}\{\mathcal{A}_{\rm stat}(r_{k})-\mathcal{A}_{\rm stat}(r_{k-1})= \sigma_k\}\bigg) \\
	&=\int_\R\D\sigma_1  \Pb\bigg(\bigcap_{k=1}^m\{\mathcal{A}_{\rm stat}(r_k)= \sigma_1+\dots+\sigma_k\}\bigg)\\ &=\int_\R\D\sigma_1\bigg(\prod_{i=1}^m\partial_i\sum_{j=1}^m\partial_j\bigg)\Lambda\left(\sigma_1,\sigma_1+\sigma_2,\dots,\sigma_1+\dots+\sigma_m\right)\\ &=\int_\R\D\sigma_1\frac{\D}{\D\sigma_1}\bigg(\prod_{i=1}^m\partial_i\bigg)\Lambda\left(\sigma_1,\sigma_1+\sigma_2,\dots,\sigma_1+\dots+\sigma_m\right)\\
&=\bigg(\prod_{i=1}^m\partial_i\bigg)\Lambda\left(\sigma_1,\sigma_1+\sigma_2,\dots,\sigma_1+\dots+\sigma_m\right)\bigg|_{\sigma_1=-\infty}^{\sigma_1=\infty}.
\end{aligned}\end{equation}
We therefore have to study the asymptotics of $\Lambda$ as $\sigma_1\to\pm\infty$.

First we decompose $\Lambda$ as
\begin{equation}\begin{aligned}
 \Lambda&=\Lambda_1+\Lambda_2,\\
 \Lambda_1&:=\left(\mathcal{R}-1\right)\det\left(\Id-\mathcal{P}K\right)_{L^2(\R)},\\
 \Lambda_2&:=\det\Big(\Id-\mathcal{P}K-\left(\mathcal{P}f^*+\mathcal{P}KP_{s_1}\mathbf{1}+(\mathcal{P}-P_{s_1})\mathbf{1}\right)\otimes g\Big)_{L^2(\R)}.
\end{aligned}\end{equation}
Since $r_1=0$ some functions simplify as
\begin{equation}\begin{aligned}
	\mathcal{R}&=s_1+\int_{s_1}^\infty \D x\int_x^\infty \D y\, \Ai(y),\\
	f^*(s)&=-\int_s^\infty \D x\, \Ai(x),\\
	g(s)&=1-\int_s^\infty \D x\, \Ai(x)=\int_{-\infty}^s \D x\, \Ai(x),\\
	K(s_1,s_2)&=\int_0^\infty\D x\, \Ai(s_1+x)\Ai(s_2+x),
\end{aligned}\end{equation}
where we used the identity (D.2) from \cite{FS05a}.

Now consider $\Lambda_1$.
\begin{equation}
 \bigg(\prod_{i=1}^m\partial_i\bigg)\Lambda_1(\vec{s})=(\mathcal{R}-1)\bigg(\prod_{i=1}^m\partial_i\bigg)
 \det(\Id-\mathcal{P}K)+\partial_1\mathcal{R}\bigg(\prod_{i=2}^m\partial_i\bigg)\det(\Id-\mathcal{P}K).
\end{equation}
Regarding the first term, notice that the multiple derivative of the Fredholm determinant gives exactly the multipoint density of the Airy$_2$ process, which is known to decay exponentially for both large positive and negative arguments. This exponential decay dominates over the linear growth of $\mathcal{R}$. Similarly, the $(m-1)$-fold derivative is smaller the $(m-1)$-point density of the Airy$_2$ process, so this contribution vanishes in the limit, too.

Continuing to $\Lambda_2$, using $f^*=-K\mathbf{1}$, we first simplify the expression
\begin{equation}
 \Lambda_2=\det\Big(\Id-\mathcal{P}K+\left(\mathcal{P}K\bar{P}_{s_1}\mathbf{1}-(\mathcal{P}-P_{s_1})\mathbf{1}\right)\otimes g\Big)_{L^2(\R)}
\end{equation}
We introduce the shift operator $S$, $(Sf)(x)=f(x+\sigma_1)$, which satisfies $SV_{r_i,r_j}S^{-1}=V_{r_i,r_j}$ and $P_{a+\sigma_1}=S^{-1}P_aS$, and consequently also
\begin{equation}
 \Id-\bar{P}_{s_1+\sigma_1}V_{r_1,r_2}\bar{P}_{s_2+\sigma_1}\cdots V_{r_{m-1},r_m}\bar{P}_{s_m+\sigma_1}V_{r_m,r_1}=S^{-1}\mathcal{P}S.
\end{equation}
Using $\det(\Id-AB)=\det(\Id-BA)$, we have
\begin{equation}
 \Lambda_2(\vec{s}+\sigma_1)=\det\Big(\Id-\mathcal{P}SKS^{-1}+\left(\mathcal{P}SKS^{-1}\bar{P}_{s_1}\mathbf{1}-(\mathcal{P}-P_{s_1})\mathbf{1}\right)\otimes Sg\Big)_{L^2(\R)}.
\end{equation}
Now the dependence on the vector $\vec{s}$ is only in the projection operators, while the dependence on $\sigma_1$ is only in these two operators:
\begin{equation}\begin{aligned}
	(Sg)(s)&=\int_{-\infty}^{s+\sigma_1} \D x\, \Ai(x),\\
	(SKS^{-1})(s_1,s_2)&=\int_{\sigma_1}^\infty\D x\, \Ai(s_1+x)\Ai(s_2+x).
\end{aligned}\end{equation}
For large $\sigma_1$, we have $Sg\to\mathbf{1}$ and $SKS^{-1}\to0$ (both strong types of convergence from the superexponential Airy decay). So
\begin{equation}
 \lim_{\sigma_1\to\infty}\Lambda_2(\vec{s}+\sigma_1)=\det\Big(\Id-(\mathcal{P}-P_{s_1})\mathbf{1}\otimes \mathbf{1}\Big)_{L^2(\R)}=1-\langle(\mathcal{P}-P_{s_1})\mathbf{1}, \mathbf{1}\rangle_{L^2(\R)}
\end{equation}
Applying the expansion \eqref{eqPExp}, we arrive at:
\begin{equation}\label{eq8.13}\begin{aligned}
 \bigg(\prod_{i=1}^m\partial_i\bigg)\lim_{\sigma_1\to\infty}\Lambda_2(\vec{s}+\sigma_1)&=
 -\bigg(\prod_{i=1}^m\partial_i\bigg)\sum_{k=2}^m\langle\bar{P}_{s_1}V_{r_1,r_2}\dots \bar{P}_{s_{k-1}}V_{r_{k-1},r_k}P_{s_k}\mathbf{1}, \mathbf{1}\rangle\\
 &=-\bigg(\prod_{i=1}^m\partial_i\bigg)\langle\bar{P}_{s_1}V_{r_1,r_2}\dots \bar{P}_{s_{m-1}}V_{r_{m-1},r_m}P_{s_m}\mathbf{1}, \mathbf{1}\rangle
\end{aligned}\end{equation}
Writing out this scalar product and applying the fundamental theorem of calculus leads to:
\begin{equation}
 \eqref{eq8.13}=V_{r_1,r_2}(s_1,s_2)V_{r_2,r_3}(s_2,s_3)\dots V_{r_{m-1},r_m}(s_{m-1},s_m),
\end{equation}
which is the desired Gaussian density after setting $s_i=\sum_{k=2}^i\sigma_k$ as in \eqref{eq8.4}.

For large negative $\sigma_1$, we have $Sg\to\mathbf{0}$ and $SKS^{-1}\to\Id$. The rank one contribution is thus
\begin{equation}
 \left(\mathcal{P}\bar{P}_{s_1}\mathbf{1}-(\mathcal{P}-P_{s_1})\mathbf{1}\right)\otimes 0.
\end{equation}
We have to be somewhat careful here, as the convergence is weak (only pointwise) and $(Sg)$ is not even $L^2$-integrable. But the first factor decays superexponentially on both sides for finite $\sigma_1$ and also in the limiting case $\mathcal{P}\bar{P}_{s_1}\mathbf{1}-(\mathcal{P}-P_{s_1})\mathbf{1}=(1-\mathcal{P})P_{s_1}\mathbf{1}$, so one should be able to derive nice convergence properties. Neglecting this rank one contribution we are left with
\begin{equation}
 \lim_{\sigma_1\to-\infty}\Lambda_2(\vec{s}+\sigma_1)=\det\Big(\Id-\mathcal{P}\Id\Big)_{L^2(\R)}=0.
\end{equation}
\end{proof}


\begin{thebibliography}{10}

\bibitem{ACQ10}
G.~Amir, I.~Corwin, and J.~Quastel, \emph{{Probability distribution of the free
  energy of the continuum directed random polymer in 1+1 dimensions}}, Comm.
  Pure Appl. Math. \textbf{64} (2011), 466--537.

\bibitem{AO76}
R.F. Anderson and S.~Orey, \emph{Small random perturbation of dynamical systems
  with reflecting boundary}, Nagoya Math. J. \textbf{60} (1976), 189--216.

\bibitem{BBdF08}
J.~Baik, R.~Buckingham, and J.~DiFranco, \emph{{Asymptotics of Tracy-Widom
  distributions and the total integral of a Painleve II function}}, Comm. Math.
  Phys. \textbf{280} (2008), 463--497.

\bibitem{BFP09}
J.~Baik, P.L. Ferrari, and S.~P{\'e}ch{\'e}, \emph{{Limit process of stationary
  TASEP near the characteristic line}}, Comm. Pure Appl. Math. \textbf{63}
  (2010), 1017--1070.

\bibitem{BR00}
J.~Baik and E.M. Rains, \emph{Limiting distributions for a polynuclear growth
  model with external sources}, J. Stat. Phys. \textbf{100} (2000), 523--542.

\bibitem{BW10}
J.~{Baik} and D.~{Wang}, \emph{{On the largest eigenvalue of a Hermitian random
  matrix model with spiked external source I. Rank one case}}, Int. Math. Res.
  Notices (2011), 5164--5240.

\bibitem{Bor08_privatecomm}
A.~Borodin, \emph{Private communication},  (2008).

\bibitem{BCFV14}
A.~{Borodin}, I.~{Corwin}, P.~L. {Ferrari}, and B.~{Vet{\H o}}, \emph{{Height
  fluctuations for the stationary KPZ equation}}, arXiv:1407.6977 (2014).

\bibitem{BCR13}
A.~Borodin, I.~Corwin, and D.~Remenik, \emph{{Multiplicative functionals on
  ensembles of non-intersecting paths}}, arXiv:1301.7450 (2013).

\bibitem{BF07}
A.~Borodin and P.L. Ferrari, \emph{{Large time asymptotics of growth models on
  space-like paths I: PushASEP}}, Electron. J. Probab. \textbf{13} (2008),
  1380--1418.

\bibitem{BFPS06}
A.~Borodin, P.L. Ferrari, M.~Pr{\"a}hofer, and T.~Sasamoto, \emph{{Fluctuation
  Properties of the TASEP with Periodic Initial Configuration}}, J. Stat. Phys.
  \textbf{129} (2007), 1055--1080.

\bibitem{CQR11}
I.~Corwin, J.~Quastel, and D.~Remenik, \emph{{Continuum statistics of the
  Airy$_2$ process}}, Comm. Math. Phys. (2012), online first.

\bibitem{FF13}
P.~L. {Ferrari} and R.~{Frings}, \emph{{Perturbed GUE Minor Process and
  Warren's Process with Drifts}}, J. Stat. Phys. \textbf{154} (2014), 356.

\bibitem{FS05a}
P.L. Ferrari and H.~Spohn, \emph{Scaling limit for the space-time covariance of
  the stationary totally asymmetric simple exclusion process}, Comm. Math.
  Phys. \textbf{265} (2006), 1--44.

\bibitem{FSW13}
P.L. Ferrari, H.~Spohn, and T.~Weiss, \emph{{Scaling Limit for Brownian Motions
  with One-sided Collisions}}, arXiv:1306.5095 (2013).

\bibitem{FQ14}
T.~{Funaki} and J.~{Quastel}, \emph{{KPZ equation, its renormalization and
  invariant measures}}, arXiv:1407.7310 (2014).

\bibitem{GTW00}
J.~Gravner, C.A. Tracy, and H.~Widom, \emph{Limit theorems for height
  fluctuations in a class of discrete space and time growth models}, J. Stat.
  Phys. \textbf{102} (2001), 1085--1132.

\bibitem{Hai11}
M.~Hairer, \emph{{Solving the KPZ equation}}, Ann. Math. \textbf{178} (2013),
  559--664.

\bibitem{SI04}
T.~Imamura and T.~Sasamoto, \emph{Fluctuations of the one-dimensional
  polynuclear growth model with external sources}, Nucl. Phys. B \textbf{699}
  (2004), 503--544.

\bibitem{IS13}
T.~Imamura and T.~Sasamoto, \emph{{Stationary Correlations for the 1D KPZ
  Equation}}, J. Stat. Phys. \textbf{150} (2013), no.~5, 908--939.

\bibitem{Jo00b}
K.~Johansson, \emph{Shape fluctuations and random matrices}, Comm. Math. Phys.
  \textbf{209} (2000), 437--476.

\bibitem{Jo03b}
K.~Johansson, \emph{Discrete polynuclear growth and determinantal processes},
  Comm. Math. Phys. \textbf{242} (2003), 277--329.

\bibitem{Jo15}
K.~Johansson, \emph{{Two time distribution in Brownian directed percolation}},
  arXiv:1502.00941 (2015).

\bibitem{KPZ86}
M.~Kardar, G.~Parisi, and Y.Z. Zhang, \emph{Dynamic scaling of growing
  interfaces}, Phys. Rev. Lett. \textbf{56} (1986), 889--892.

\bibitem{ledDIL}
M.~Ledoux, \emph{Deviation inequalities on largest eigenvalues}, Geometric
  Aspects of Functional Analysis, Lecture Notes in Mathematics, vol. 1910,
  Springer Berlin / Heidelberg, 2007, pp.~167--219.

\bibitem{OCY01}
N.~O'Connell and M.~Yor, \emph{{Brownian analogues of Burke's theorem}}, Stoch.
  Proc. Appl. \textbf{96} (2001), 285--304.

\bibitem{PS02}
M.~Pr{\"a}hofer and H.~Spohn, \emph{Scale invariance of the {PNG} droplet and
  the {A}iry process}, J. Stat. Phys. \textbf{108} (2002), 1071--1106.

\bibitem{PS02b}
M.~Pr{\"a}hofer and H.~Spohn, \emph{Exact scaling function for one-dimensional
  stationary {KPZ} growth}, J. Stat. Phys. \textbf{115} (2004), 255--279.

\bibitem{SV10}
T.~Sepp{\"a}l{\"a}inen and B.~Valk{\'o}, \emph{Bounds for scaling exponents for
  a 1+1 dimensional directed polymer in a brownian environment}, ALEA, to
  appear; arXiv:1006.4864.

\bibitem{Sko61}
A.V. Skorokhod, \emph{Stochastic equations for diffusions in a bounded region},
  Theory Probab. Appl. (1961), 264--274.

\bibitem{SS14}
H.~{Spohn} and G.~{Stoltz}, \emph{{Nonlinear Fluctuating Hydrodynamics in One
  Dimension: the Case of Two Conserved Fields}}, arXiv:1410.7896 (2014).

\bibitem{Sze67}
G.~Szeg\H{o}, \emph{Orthogonal polynomials}, 3th ed., American Mathematical
  Society Providence, Rhode Island, 1967.

\bibitem{War07}
J.~Warren, \emph{{Dyson's Brownian motions, intertwining and interlacing}},
  Electron. J. Probab. \textbf{12} (2007), 573--590.

\bibitem{Wei11}
T.~Weiss, \emph{{Scaling behaviour of the directed polymer model of Baryshnikov
  and O'Connell at zero temperature}}, Bachelor Thesis, TU-M{\"u}nchen (2011).

\end{thebibliography}

\end{document}